\DeclarePairedDelimiter\ceil{\lceil}{\rceil}
\DeclarePairedDelimiter\floor{\lfloor}{\rfloor}
\newcommand{\cA}{\mathcal{A}}
\newcommand{\cF}{\mathcal{F}}
\newcommand{\cG}{\mathcal{G}}
\newcommand{\R}{\mathbb{R}}
\newcommand{\Rp}{\mathbb{R}_{\geq 0}}
\newcommand{\E}{\mathbb{E}}
\newcommand{\Var}{\textup{Var}}
\newcommand{\ind}{\mathds{1}}
\newcommand{\Opt}{\textnormal{OPT}}
\newcommand{\OptBSP}{\textnormal{OPT}_{\textnormal{BSP}}}
\newcommand{\Alg}{\textnormal{Alg}}
\newcommand{\ones}{\bm{1}}
\newcommand{\ba}{\mathbf{a}}
\renewcommand{\d}{\mathrm{d}}
\newcommand{\cI}{\mathcal{I}}
\newtheorem{theorem}{Theorem}[section]
\newtheorem{lemma}[theorem]{Lemma}
\newtheorem{corollary}[theorem]{Corollary}
\newtheorem{claim}[theorem]{Claim}
\newtheorem{cor}[theorem]{Corollary}
\newtheorem{definition}[theorem]{Definition}
\newcommand{\vertm}[1]{{\left\vert\kern-0.25ex\left\vert\kern-0.25ex\left\vert #1     \right\vert\kern-0.25ex\right\vert\kern-0.25ex\right\vert}}
\newcommand{\topk}{\ensuremath{\textrm{Top-k}}\xspace}
\newcommand{\overbar}[1]{\mkern 1.5mu\overline{\mkern-1.5mu#1\mkern-1.5mu}\mkern 1.5mu}
\newcounter{note}[section]
\newcommand{\knote}[1]{\refstepcounter{note}$\ll${\bf Kalen~\thenote:}
  {\sf \color{blue}  #1}$\gg$\marginpar{\tiny\bf KP~\thenote}}
\newcommand{\mnote}[1]{\refstepcounter{note}$\ll${\bf Marco~\thenote:}
  {\sf \color{blue}  #1}$\gg$\marginpar{\tiny\bf MM~\thenote}}
\newcommand{\red}[1]{{\color{red} #1}}
\newcommand{\blue}[1]{{\color{blue} #1}}
\newcommand{\OPT}{\textup{OPT}}
\begin{document} 

\title{Integral Online Algorithms for  Set Cover  and \\    Load Balancing  with Convex Objectives}

	\author{Thomas Kesselheim\thanks{
	(thomas.kesselheim@uni-bonn.de)
 Institute of Computer Science,	    University of Bonn. 
    }
	\and Marco Molinaro\thanks{
         (mmolinaro@microsoft.com)
         Microsoft Research and PUC-Rio.   
Supported in part by the Coordenação de Aperfeiçoamento de Pessoal de Nível Superior - Brasil (CAPES) - Finance Code 001, and by Bolsa de Produtividade em Pesquisa $\#3$02121/2025-0 from CNPq.
    }
    \and Kalen Patton\thanks{
        (kpatton33@gatech.edu)
        School of Computer Science,
        Georgia Tech.
        Supported in part by NSF awards CCF-2327010 and CCF-2440113, and by the Georgia Tech ARC-ACO Fellowship
        }
	\and Sahil Singla\thanks{
        (ssingla@gatech.edu)
        School of Computer Science,
        Georgia Tech.
        Supported in part by NSF awards CCF-2327010 and CCF-2440113.
        }
}

\maketitle

\begin{abstract}
\medskip

\noindent Online Set Cover and Load Balancing are central problems in online optimization, and there is a long line of work focusing on developing  algorithms for these problems with convex objectives. Although we know optimal online algorithms with  $\ell_p$-norm objectives,  recent developments for general norms and convex objectives that rely on the online primal-dual framework apply only to \emph{fractional} settings due to large integrality gaps.

\medskip
\noindent  Our work focuses on directly designing integral online algorithms for Set Cover and Load Balancing with convex objectives,  bypassing the convex-relaxation and the primal-dual technique.  
Some of the main implications of our approach are:

\smallskip
    \begin{enumerate}
        \item For Online Set Cover, we can extend  the results of  \cite{AzarBCCCG0KNNP16} for convex objectives and of \cite{KMS-STOC24} for symmetric norms from fractional to integral settings.

        \item  Our results for convex objectives and symmetric norms even apply to the Online Generalized Scheduling Problem, which generalizes both Set Cover and Load Balancing.  Previous works could only handle the offline version of this problem with norm objectives \cite{DLR-SODA23}.

        \item Our approach easily extends to settings involving disjoint-composition of norms. This allows us to recover or improve the norm-composition results of \cite{NS-ICALP17,KMS-STOC24} and extend our results to a large class of norms beyond the symmetric setting.
        
    \end{enumerate}

    \medskip
    
  \noindent  Our approach involves first reducing these online problems to online packing problems, and to then design good approximation algorithms for the latter.
  To solve these packing problem, we use two key ideas. First, we decouple the global packing problem into a series of local packing problems on different machines. Second, we choose random activation thresholds for machines such that  conditional on a machine being activated the expected number of jobs it covers is high compared to its cost.
    This approach may be of independent interest and could find applications to other online problems. 
\end{abstract}

\medskip

\newpage
\section{Introduction}
    Online Set Cover and Online Load Balancing are central problems in the theory of online optimization, capturing a broad range of settings in which requests arrive one by one, and each must be immediately satisfied while minimizing an objective function. In classic formulations, this objective function has a simple structure, such as the sum of costs in the case of Set Cover \cite{AlonAA-STOC03} and the maximum load on a machine in the case of Load Balancing \cite{AspnesAFPW-JACM97}. We know tight (poly)logarithmic approximation algorithms for both these problems. However,  applications of these problems often require performance guarantees for more general objectives. As a result, there is a long line of work in developing algorithms for these problems that can handle a broader class of convex objectives. In particular, both these classic  problems have been studied with objectives that are $\ell_p$-norms  \cite{awerbuch,AzarBCCCG0KNNP16}, Top-$k$\footnote{Top-$k$ norm of a non-negative vector is the sum of its largest $k$ coordinates.} and 
 symmetric\footnote{A norm $\|\cdot\| : \R^m \rightarrow \R$ is \emph{symmetric} if for every vector $x$, $\|x\|$ is the same for every permutation of the coordinates of $x$.} norms \cite{CS-STOC19,KMS-SODA23,KMS-STOC24}, sums of norms \cite{NS-ICALP17,KMS-STOC24}, and even general convex functions \cite{AzarBCCCG0KNNP16}.

For such convex objectives, many state-of-the-art results are derived via the online primal-dual method, which involves maintaining feasible solutions to primal and dual convex-relaxations as each request arrives.  This general technique, which was originally motivated from Online Set Cover, has been crucial for several other online problems including scheduling \cite{BN-JS13}, network design \cite{AAABN-TALG06}, routing \cite{BN-FOCS06}, auction design \cite{BJN-ESA07,huangKim}, paging \cite{BBN-JACM12}, and resource allocation \cite{BuchbinderMOR,MR-ICALP12,AD-SODA15}; see the book of Buchbinder and Naor \cite{BuchbinderNaor-Book09} for an overview. In 2016, Azar et.\,al.\,\cite{AzarBCCCG0KNNP16} (where 3 independent papers were merged)  introduced a general framework for applying the online primal-dual method to online covering and packing problems, unifying many of the previously distinct settings.

Although the online primal-dual technique has proven to be a powerful tool, it often only yields a \emph{fractional} solution to the optimization problem, which must then be rounded online. For simple objectives, such as $\ell_1$ and $\ell_\infty$ norms, this is not necessarily an issue due to the existence of simple and effective online rounding schemes. Yet as we move towards more complex convex objectives, our convex-relaxation may have a large integrality gap without an easy fix. For example, consider the toy example of an unweighted load balancing problem with a single job and $m$ machines. If our goal is to minimize the $\ell_2$-norm of machine loads, then any integral solution must incur cost $1$, since the job must be entirely scheduled on a machine. However, the LP-relaxation of the problem has optimum $1/\sqrt{m}$, since the job can put fractional load $1/m$ on each machine. This gives us a $\sqrt{m}$ integrality gap for the natural LP-relaxation. While this can be fixed for the $\ell_2$-norm by modifying the relaxation's objective, it is unclear how to make such fixes for general convex functions. Indeed, the results of several prior works on Online Set Cover and Load Balancing for general convex objectives are limited to fractional settings, and obtaining an integral algorithm for the same problem setting has been open \cite{AzarBCCCG0KNNP16,NS-ICALP17,KMS-STOC24}. 
This limitation motivates the question: 
\begin{quote}
    \emph{Can we design integral algorithms for Online Set cover and Load Balancing with convex objectives without relying on a convex-relaxation and incurring a loss due to the integrality gap?} 
\end{quote}

Our main result is a direct technique to design  integral online algorithms for such problems with convex objectives, bypassing the convex-relaxation and the primal-dual technique. Our techniques apply even to the ``Online Generalized Unrelated Machine Scheduling'' problem, which captures both Set Cover and Load Balancing problems as special cases, and can also capture other problems such as non-metric Facility Location. Only recently,  Deng, Li, and Rabani \cite{DLR-SODA23} designed an $O(\log n)$-approximation algorithm  for the \emph{offline} version of this problem  with symmetric norm objectives. Our results imply the first polylog approximation  for the online setting of this problem, and also imply near-optimal online  algorithms with more general convex objectives.

\subsection{Model and Results}

We are interested in  Online Generalized Scheduling with convex objectives.   The offline version of this problem with norm objectives was introduced in \cite{DLR-SODA23} as a generalization of makespan minimization, Set Cover, and minimum-norm Load-Balancing.  

\begin{definition}[Online Generalized Scheduling]
    In this problem, $n$ jobs arrive one at a time and must be scheduled on $m$ machines. When job $j \in [n]$ is scheduled on machine $i \in [m]$, it incurs a load $p_{ij} \in \Rp$. Each machine $i$ has a monotone norm $\|\cdot\|_i: \R^{n} \rightarrow \R$ to compute its total load, and we are also given a monotone convex function $f:\R^m \rightarrow \R$ that \emph{aggregates}  total  loads of all the machines. The goal is to find an allocation $x$ to  
\begin{equation}\tag{Gen-Sched} \label{eq:covGenSched}
    \min 
    f\big(\Lambda(x) \big) \quad \text{where } \quad \textstyle \Lambda_i(x) := \|(x_{i1}, \ldots, x_{in})\|_i \enspace .
\end{equation}
\end{definition}
(Later, in \Cref{def:genSchedMultChoices}, we will present a generalization of this problem where jobs can be scheduled in multiple ways on a machine, but here we restrict attention to this important special case for simplicity.)

To build some intuition for \ref{eq:covGenSched}, let us see some classical problems that it captures. It captures online makespan minimization \cite{AspnesAFPW-JACM97} when we have $n$ jobs and $m$ machines with $f$ being the $\ell_\infty$ norm and each $\|\cdot\|_i$ being the $\ell_1$ norm. It captures Online Set Cover \cite{AlonAA-STOC03} when we have a machine corresponding to  each of the $m$ sets and the online elements correspond to the $n$ jobs: we set  $p_{ij}=1$ if the element for job $j$ belongs to the set for machine $i$ and set  $p_{ij}=\infty$ otherwise; 
we take each norm $\|\cdot\|_i$ to be the $\ell_\infty$ norm and we take  $f$ to be the weighted $\ell_1$ norm with norm weights corresponding to the weights of the $m$ sets. To capture non-metric Facility Location \cite{AAABN-TALG06},   the $m$ machines correspond to the  facility locations and the $n$ online clients correspond to the $n$ jobs. Now $p_{ij}$ is the distance between client $j$ and facility $i$, and we take $f$ to be the  $\ell_1$ norm and the norm $\|\cdot\|_i$ to be the sum of the $\ell_1$-norm on assignment costs and the $\ell_\infty$-norm times cost of opening the $i$-th facility. By taking $f$ to be any general convex cost function,  
\ref{eq:covGenSched} captures convex cost generalizations of all these problems.

We obtain new near-optimal online integral algorithms for Online Generalized Scheduling for large classes of convex cost functions, 
where previously either only fractional algorithms were known or  nothing non-trivial was known. 
In particular, we obtain results for general convex and norm  aggregate functions.

\paragraph{General Convex Objectives.}   For arbitrary monotone convex functions, it is easy to show that no bounded competitive ratios are possible. Roughly, this is because a general convex function could increase arbitrarily; e.g., it might blow to infinity as soon as the load exceeds the optimum. To avoid this, \cite{AzarBCCCG0KNNP16} introduced the class of \emph{$p$-bounded} functions. Roughly, it means that the cost function $f$ has monotone gradients and does not grow faster than a degree $p$ polynomial in any direction. Besides many types of low-degree polynomials, it can also capture $\ell_p$ norms since $\|x\|_p^p$  is a $p$-bounded function. 

The main result of \cite{AzarBCCCG0KNNP16} was the use of online primal-dual technique to obtain 
an $O(p\cdot \log m)^p$-competitive algorithm for the \emph{fractional} Online Coverage Problem $\{\min f(x) \mid Ax \geq b~,~x \geq 0\}$, which in particular captures Online Set Cover. In the special case of $\|x\|_p^p$, they were able to give an integral $O(p \cdot \log m \cdot \log n)^p$-competitive algorithm for Online Set Cover.\footnote{Note that the extra dependency on $n$ is necessary for the integral problem; e.g., there is already an $\Omega(\log m \cdot \log n)$ hardness for the classic Online Set Cover problem, under standard complexity hardness assumptions~\cite{korman2004use}.}
However, since the natural convex relaxation has a large integrality gap, they had to resort to a different relaxation specific to the $\ell_p$-norm. 
A key open question left by this work was whether one could design an integral algorithm for Online Set Cover with  general $p$-bounded convex functions. We resolve this question, and further extend the result to the more general setting of Online Generalized Scheduling.

 \begin{theorem} \label{thm:OGSpboundedSym}
     For Online Generalized Scheduling with a $p$-bounded convex cost function $f$, there is an $O(p^2 \cdot (\log m)^2 \cdot \log n \cdot \log\!\log n)^p$-competitive algorithm when the  norms  computing the machine loads are symmetric.
 \end{theorem}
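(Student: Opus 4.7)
My plan is to follow the paper's general recipe of reducing Online Generalized Scheduling to an online packing problem, and then designing an integral packing algorithm that exploits the structure of $p$-bounded $f$ and symmetric per-machine norms. Throughout, I would standardly guess $\Opt := f(\Lambda(x^*))$ by doubling, paying an $O(1)$ factor.

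First I would handle the outer $p$-bounded objective. Following the approach of Azar et al., a $p$-bounded convex $f$ can be replaced, up to a multiplicative loss of $(O(p))^p$, by the family of linear constraints obtained by linearizing $f$ at appropriately chosen boundary points of its level set $\{y : f(y) \le \Opt\}$; enforcing $f(\Lambda(x)) \le \Opt$ thus reduces to simultaneously enforcing a family of constraints of the form $\langle w, \Lambda(x)\rangle \le b_w$ for nonnegative weight vectors $w$. Next I would handle the per-machine symmetric norms $\|\cdot\|_i$. Using the Top-$k$ decomposition of symmetric norms (à la Chakrabarty--Swamy and as employed in the offline setting by Deng--Li--Rabani), each $\|y\|_i$ is, up to an $O(\log n \cdot \log\!\log n)$ factor, the maximum over $k$ of a scaled Top-$k$ norm of $y$. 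Composing this with the linearized outer objective reduces the original covering problem to: schedule all jobs online so that for each machine $i$, each Top-$k$ level, and each dyadic size scale $\tau$, the number of assigned jobs with $p_{ij} \ge \tau$ is appropriately bounded across the linearized outer constraints.

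The heart of the argument is then the packing algorithm, using the paper's two advertised ideas. Applying the reduction from online integral covering to online integral packing (the norm-packing reduction announced in the abstract), it suffices to give an integral bi-criteria online algorithm for the packing problem where each machine $i$ comes equipped with a family of Top-$k$-style capacities coming from its symmetric norm. Here I would decouple the packing across machines and, for each machine $i$, sample a \emph{random activation level} $(k_i, \tau_i)$ from a distribution tailored so that (i)~the expected ``cost'' contributed by $i$ to every linearized outer constraint $\langle w, \Lambda(x)\rangle \le b_w$ is a $\Theta(1/\log m)$ fraction of $i$'s cost in the offline optimum, and (ii)~conditional on $i$ being activated at $(k_i, \tau_i)$, the greedy local rule ``accept up to $k_i$ jobs with $p_{ij} \in [\tau_i, 2\tau_i)$'' packs expected value comparable to the offline packing on $i$. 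Independence across machines then permits a clean union bound to control the packed norm-load and yields an $\alpha = \Omega(1/\log m)$, $\beta = O(1)$ integral bi-criteria guarantee; feeding this into the packing-to-covering reduction and exponentiating by $p$ (for the $p$-bounded linearization) gives the claimed $O\!\left(p^2 \cdot (\log m)^2 \cdot \log n \cdot \log\!\log n\right)^p$ competitive ratio.

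The main obstacle I expect is the joint analysis of the random-threshold packing in the presence of both the Top-$k$ symmetric-norm decomposition and the $p$-bounded linearization: a single random choice per machine must simultaneously balance expected contribution to \emph{every} linearized $f$-constraint and \emph{every} Top-$k$ level, while still achieving a constant-probability covering guarantee on each machine that is active in the offline optimum. The per-machine decoupling is precisely what buys the independence needed for the union bound, but calibrating the activation-threshold distribution so that cost and coverage are balanced across all scales simultaneously, and then lifting this calibration through the $p$-th-power linearization of $f$, is the delicate part of the argument.
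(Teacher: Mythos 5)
Your high-level skeleton---reduce to an online packing problem, sample a per-machine random activation threshold---does match the paper's, but the two decomposition steps you interpose and the key claim you build on them deviate from the paper in ways that matter. First, you propose to linearize the $p$-bounded $f$ into a family of constraints $\langle w, \Lambda(x)\rangle \le b_w$. That is precisely the fractional primal-dual device the paper is built to avoid; instead, the paper works with $f$ directly inside the packing algorithm, using the marginal $\Delta_i f(y) = f(y+e_i) - f(y)$ as a dynamic per-machine ``cost'' and relying on gradient monotonicity plus $p$-subadditivity (\Cref{lem:p-subadditive}) where you would use the linearization. Second, you decompose each per-machine symmetric norm into Top-$k$ norms and attribute the $\log n \cdot \log\!\log n$ factor to that decomposition; the paper does neither. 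Symmetric per-machine norms are handled as a black box via the $(\tfrac{1}{3},1)$-solvable single-machine greedy of \Cref{lem:single-machine}, and the $\log n \cdot \log\!\log n$ factor comes entirely from the covering-to-packing reduction (\Cref{thm:packCovReduction}), which stacks $O(\log\!\log n/\alpha)$ packing agents per group over $O(\log n)$ groups and uses the martingale tail bound of \Cref{thm:mart} to show each group halves the unscheduled jobs with high probability. The $p$-dependence likewise arises from $p$-subadditivity in that reduction and from the $(3p)^p$ budget blow-up in \Cref{thm:pboundPackSched}, not from a linearization loss.

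The most serious gap is your assertion that ``independence across machines then permits a clean union bound.'' That independence simply does not hold: which jobs ever reach machine $i$ depends on what earlier-activated machines accepted, and once $f$ is nonlinear, the threshold $\tau_i$ itself is dynamic because it depends on the current activation state $y$ through $\Delta_i f(y)$. The paper explicitly flags these dependencies as a central difficulty, and its analysis of \Cref{thm:pboundBudgetSched} uses no union bound over machines. Instead it conditions on the full randomness $\omega$ of all machines other than $i$, shows that the offered set $T_i$ is monotone in $\bar\tau_i$ (\Cref{claim:mono-new2}), and derives a per-machine lower bound on $\E|\Alg_i|$ that is then summed against a deterministic upper bound on $\OptBSP$ (\Cref{lem:opt-upper-bound}). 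You would need a conditional per-machine analysis of that form to replace your union-bound step; and once you do, the $f$-linearization and Top-$k$ layers become liabilities, because the activation distribution would have to be calibrated against a potentially exponential family of linearized constraints for every conditioning scenario $\omega$. As written, the proposal does not close this circle.
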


An immediate application of this theorem is to Online Set Cover with $p$-bounded convex functions, where only an online \emph{fractional} algorithm was known (as discussed above). In fact, in the case of Online Set Cover, the guarantees in \Cref{thm:OGSpboundedSym} can be further improved. We can save at least a factor of $\log m$ in \Cref{thm:OGSpboundedSym}, but even tighter bounds are possible for specific applications. For example, we can examine the $\ell_p$-norm Set Cover problem from \cite{AzarBCCCG0KNNP16}, where the objective is to minimize an $\ell_p$ norm over multiple linear cost functions over the sets. This objective is the $p$th root of a $p$-bounded convex function, and here our methods are able to get an $O(p \log m \cdot \log n)$-competitive ratio.

 \begin{theorem}  \label{thm:setCoverpBoundedNew}
     For Online Set Cover with a $p$-bounded convex cost function $f$, there is an $O(p^2 \cdot \log m \cdot \log n \cdot \log \log n)^p$-competitive algorithm.
       For Online Set Cover with an $\ell_p$-norm over multiple linear cost functions, we can obtain $O(p \log m \cdot \log n)$-competitive algorithm. 
 \end{theorem}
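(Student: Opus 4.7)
The plan is to specialize the machinery behind \Cref{thm:OGSpboundedSym} to the Online Set Cover setting, in which every ``machine'' is one of the $m$ sets and every machine norm $\|\cdot\|_i$ is simply $\ell_\infty$: once a set is activated it covers all the jobs assigned to it, so the per-machine state is a single $0/1$ bit. Two structural simplifications drive the improvements over the general bound of \Cref{thm:OGSpboundedSym}: (i) this one-dimensional per-set activation decision, and (ii) for the $\ell_p$ variant, the tight self-duality of $\ell_p$ norms.

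\textbf{Step 1, the $p$-bounded case.} First, reduce the online Set Cover problem with $p$-bounded aggregator $f$ to $O(\log n)$ online packing instances, indexed by doubling guesses of $\OPT$, following the same covering-to-packing scheme that underlies \Cref{thm:OGSpboundedSym}. In each layer, jobs are reweighted by the gradient of $f$ evaluated at the current cost profile; the $p$-bounded property controls how fast $f$ grows along this trajectory and is what contributes the exponent $p$ to the final ratio. Second, solve each layer's packing problem by the two-step scheme from the abstract: decouple the packing across sets, and assign each set an independent random activation threshold chosen so that, conditional on activation, the expected number of jobs covered is an $\Omega(1/(\log n\log\log n))$ fraction of the set's normalized cost. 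Because every machine norm is $\ell_\infty$, the activation decision is scalar, and one avoids the extra $\log m$ factor that \Cref{thm:OGSpboundedSym} spends on guessing a symmetric-norm budget inside each machine. Combining the per-layer bicriteria guarantees (as in the reduction sketched for \Cref{thm:reduction} in the introduction) yields the claimed $O(p^2 \log m \log n \log\log n)^p$ ratio.

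\textbf{Step 2, the $\ell_p$-norm case.} Here $f(y)=\|(c_1{\cdot}y,\ldots,c_k{\cdot}y)\|_p$ is itself a norm and $f^p$ is $p$-bounded, so Step~1 already gives an $O(p^2\log m\log n\log\log n)$-competitive ratio by applying Step~1 to $f^p$ and taking a $p$-th root. To get the sharper $O(p\log m\log n)$ bound, the plan is to reprove Step~1 directly for $f^p$ and use the $\ell_p$ structure to remove slack in two places. First, the subgradient of $f^p$ is proportional to an $\ell_{p'}$-weighted linear functional of the activated sets, so H\"older's inequality lets the reweighting step in the covering-to-packing reduction lose only a constant in $p$ rather than a factor of $p$. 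Second, since each set's marginal contribution to $f^p$ is a simple linear function of its activation bit, the per-set random threshold can be drawn from a single geometric distribution on that linear cost instead of a logarithmic ladder of budget guesses, which eliminates the $\log\log n$ factor. Taking the $p$-th root at the end converts the resulting $O(p\log m\log n)^p$ bound on $f^p$ into an $O(p\log m\log n)$ bound on $f$.

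\textbf{Main obstacle.} The hardest part is controlling the interaction between the subgradient-driven reweighting across layers and the independent threshold randomization within each layer: one has to argue that the potential function (either $f^p$ directly in the $\ell_p$ case, or the convex conjugate $f^*$ in the general case) drops by enough in expectation each time a set is activated, without the errors accumulating across the $O(\log n)$ layers. Self-duality makes this clean for $\ell_p$, which is exactly why the $\log\log n$ factor and one power of $p$ disappear; for a general $p$-bounded $f$ one must absorb the mismatch between $f$ and the degree-$p$ polynomial surrogates that bound it, and this mismatch is the source of the remaining $p$ and $\log\log n$ factors in the first bound.
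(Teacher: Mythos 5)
Your Step 1 outline matches the paper's in spirit: reduce covering to packing, decouple sets and use random activation thresholds, and observe that with $\ell_\infty$ inner norms the per-set admission decision is trivial and requires no estimate $M_i$ of its local optimum — that is precisely what saves the extra $\log m$ in the paper (it is the guess $M_i$ in \Cref{alg:p-bounded} that otherwise costs a $\frac{1}{\Theta(\log m)}$ success probability, not a "symmetric-norm budget inside each machine"). However, you misattribute the $\log n\log\log n$ term: it is not a property of the threshold distribution in the packing algorithm, but comes from the covering-to-packing reduction (\Cref{thm:packCovReduction}), where $O(\frac{1}{\alpha}\log n\log\log n)$ packing agents are needed so that every arrival is scheduled with probability $1/2$, and $p$-subadditivity then inflates this to the $p$-th power. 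The threshold distribution contributes only a $\log m$, inside the $(\alpha,c)$-solvability factor.

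The bigger gap is in Step 2. The paper does not reprove the packing algorithm for $f^p$ using H\"older's inequality, self-duality, or a geometric threshold distribution, and it never invokes a convex conjugate $f^*$; in fact, the whole point of the paper is to \emph{avoid} primal-dual / potential-function arguments. The actual mechanism is much simpler and structural: first, the claim $(\Omega(1/\log m),(3p)^p)$-solvability of $\ref{eq:packGenSched}_{f^p}$ translates — by raising/lowering to the $p$-th power of the budget — into $(\Omega(1/\log m),3p)$-solvability of $\ref{eq:packGenSched}_{f}$ where $f=(f^p)^{1/p}$. Second, and crucially, $f$ is a norm, hence $1$-subadditive, so the improved $p=1$ case of the covering-to-packing reduction (\Cref{thm:packCovReduction}) applies, yielding $O(\frac{1}{\alpha}\log n\cdot c)=O(p\log m\log n)$ directly. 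Both the missing power of $p$ and the missing $\log\log n$ disappear for exactly this reason — the $p$-subadditivity exponent drops from $p$ to $1$ — not because of any sharpening inside the threshold-randomization argument. Your plan as written would need a full re-derivation of \Cref{lem:alg_i-bound1}/\Cref{lem:alg_i-bound2} under the $\ell_p$ subgradient reweighting you describe, and it is not evident such an analysis closes; the paper's observation about the subadditivity of $f^{1/p}$ sidesteps that entirely.
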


 Notably, this latter result improves upon the $O(\frac{p^3}{\log p} \log m \cdot \log n)$-competitive ratio of \cite{AzarBCCCG0KNNP16}, which illustrates how our techniques benefit from not needing to round a fractional solution.

Our results and techniques for $p$-bounded convex functions are actually much more general than \Cref{thm:OGSpboundedSym}. In particular, the norms computing the machine loads could be more general than symmetric. We show that, as long as a certain online packing problem (\Cref{defn:normPack}) has a constant-competitive algorithm for each machine norm $\|\cdot\|_i$, we get our competitive bounds for Online Generalized Scheduling. This allows us to capture several other applications such as the following Non-metric Facility Location problem with $p$-bounded convex  objectives.

In Online Non-metric Facility Location (a.k.a.\,load balancing with startup costs), the setting is similar to Online Generalized Scheduling, with machines as facilities and jobs as clients, except that there is both a one-time cost $c_i$ to open facility $i$ and an additive cost $p_{ij}$ to assign a client $j$ to facility $i$. We seek to minimize an aggregate cost function $f$ over the vector of costs associated with each facility. This problem has been greatly studied offline for norm and submodular objectives \cite{DLR-SODA23, AbbasiABBGST-ICALP24,ST-TALG10}, but we obtain the first online results for this problem with general convex aggregate functions.
\begin{theorem} \label{thm:nonmetricpBounded}
    For Online Non-metric Facility Location with a $p$-bounded convex cost function $f$, there is an $O(p^2 \cdot (\log m)^2 \cdot \log n \cdot \log\!\log n)^p$-competitive algorithm.
\end{theorem}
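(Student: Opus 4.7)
The plan is to recast Online Non-metric Facility Location as an instance of \ref{eq:covGenSched} and then invoke the same machinery behind \Cref{thm:OGSpboundedSym}. I identify each facility $i \in [m]$ with a machine and each client $j \in [n]$ with a job, and on an assignment vector $x_i = (x_{i1},\ldots,x_{in}) \in \{0,1\}^n$ I define the per-machine norm
\[
\|x_i\|_i \;:=\; \textstyle\sum_{j=1}^n p_{ij}\, x_{ij} \;+\; c_i \cdot \max_{j \in [n]} x_{ij},
\]
which is the sum of a weighted $\ell_1$-norm (the client assignment costs) and a scaled $\ell_\infty$-norm (the one-time opening cost $c_i$ incurred whenever at least one client is placed at $i$). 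For $0/1$ assignments this exactly equals the non-metric facility cost of $i$, and taking $f$ to be the given $p$-bounded convex aggregator on $\Lambda(x) = (\|x_1\|_1,\ldots,\|x_m\|_m)$ recovers the whole objective.

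As emphasized in the discussion immediately preceding the statement of \Cref{thm:nonmetricpBounded}, the $p$-bounded result for \ref{eq:covGenSched} extends beyond symmetric norms: it applies whenever each per-machine norm $\|\cdot\|_i$ admits a constant-competitive online algorithm for the associated packing problem, at which point the reduction and doubling-over-guesses scaling automatically deliver the $O(p^2 \cdot (\log m)^2 \cdot \log n \cdot \log\!\log n)^p$ bound of \Cref{thm:OGSpboundedSym}. Thus the only work specific to Non-metric Facility Location is to design a constant-competitive packing algorithm for the composite norm $\|\cdot\|_i$ above.

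For this, I would use the disjoint-composition approach advertised in contribution~(3) of the abstract. Given a packing budget $B$ for $\|\cdot\|_i$, observe that over $0/1$ inputs the constraint $\|x_i\|_i \le B$ decomposes into a binary opening choice (``is facility $i$ opened?'') together with, conditional on opening, a weighted $\ell_1$ cap $\sum_j p_{ij} x_{ij} \le B - c_i$. The $\ell_1$ piece is the classical online packing problem for a single linear constraint and admits a simple constant-competitive threshold rule; the $\ell_\infty$ piece contributes only the $\{0, c_i\}$ opening decision, which is handled by a two-way split $B_\infty \in \{0, c_i\}$, $B_1 = B - B_\infty$, running the two sub-routines in parallel and accepting a job only when both accept.

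The main obstacle will be the coupling between the two budgets as jobs arrive online: an early ``small'' $\ell_1$ job still triggers the full opening cost $c_i$, so a naive decomposition could over-pay by a large factor on instances where the optimum opens $i$ only to exploit later, cheaper jobs. The clean fix is to resolve the opening decision up front via the two-way split of $B$ above, after which the two budgets decouple and the competitive ratio of the sum is at most the sum of the individual ratios, exactly the norm-composition argument of \cite{NS-ICALP17,KMS-STOC24}. Plugging the resulting per-machine constant-competitive packing algorithm into the $p$-bounded machinery behind \Cref{thm:OGSpboundedSym} then yields \Cref{thm:nonmetricpBounded}.
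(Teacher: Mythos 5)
Your high-level strategy is exactly the paper's: model each facility as a machine whose per-machine norm is $\|y\|_i = c_i\|y\|_\infty + \sum_j w_{ij}|y_j|$, observe that the $p$-bounded machinery of \Cref{thm:OGSpboundedSym} only requires each \ref{eq:normPack}$_{\|\cdot\|_i}$ to be $(\alpha,c)$-solvable with constants, and then establish that solvability for this particular norm. The paper does precisely this via \Cref{cor:genSched} together with Item~3 of \Cref{lem:single-machine}.

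The gap is in how you justify the per-machine packing step. You claim this is ``exactly the norm-composition argument'' of \cite{NS-ICALP17,KMS-STOC24} and of \Cref{thm:norm-compose-sched-pack}, but that theorem is about \emph{disjoint compositions} of an \emph{outer} aggregate norm, i.e.\ $\|x\| = \|(\|x_{S_1}\|'_1,\ldots,\|x_{S_L}\|'_L)\|'$ over disjoint coordinate blocks $S_\ell$. The facility norm $c_i\|\cdot\|_\infty + \sum_j w_{ij}|\cdot|$ is a sum of two norms on the \emph{same} coordinates, so it is not a disjoint composition, and it sits at the inner-norm level, not the aggregation level, so the composition machinery simply does not apply. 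The paper instead proves $(\tfrac{1}{3},2)$-solvability for this norm directly (Item~3 of \Cref{lem:single-machine}): greedily accept job $j$ in way $k$ whenever $c\,p_{jk}\le B$ and $w_{jk}p_{jk}\le 2B/M$, which never overshoots $2B$ and, by counting, captures at least $\lfloor M/2\rfloor \ge M/3$ of the jobs. Your ``two-way split'' and ``run two sub-routines in parallel'' narrative happens to gesture at a similar threshold rule, but the justification you offer (ratios adding, composition theorem) is not a valid argument, and the claimed ``main obstacle'' about an early cheap job wasting $c_i$ is a non-issue here because any nonempty optimal solution also pays $c_i$. Finally, you should make explicit the point the paper stresses: the weights $w_{ij}=d_{ij}$ are revealed online, so you need the single-machine algorithm to be oblivious to future $w_{jk}$, which the greedy threshold rule is (the threshold $2B/M$ does not depend on $w_{jk}$).
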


Next, we discuss our results for the setting where the convex cost function $f$ is a norm.

\paragraph{Norm Objectives and Compositions.}
Although norms are convex functions, most of them are not $p$-bounded since they do not even have monotone gradients; thus most previous results and techniques (as well as our Theorems \ref{thm:OGSpboundedSym}, \ref{thm:setCoverpBoundedNew}, and \ref{thm:nonmetricpBounded} above) do not directly apply when the cost function is a norm. One exception are $\ell_p$-norms, mentioned above, since $\|x\|_p^p = \sum_i x_i^p$ is $p$-bounded. Since prior works relied on $p$-boundedness, they do not extend beyond $\ell_p$-norms~\cite{AzarBCCCG0KNNP16}. Recent progress has been made for sums of $\ell_p$-norms \cite{NS-ICALP17} and for general symmetric  norms \cite{KMS-STOC24}; however, these are again obtained via online primal-dual techniques for the fractional version of the problem, and it is unclear how to extend them to integral settings. For the more general Online Generalized Scheduling, as mentioned above, the situation is even worse, since algorithms are only known for the \emph{offline} case \cite{DLR-SODA23}.

Our techniques are able to bypass these difficulties and obtain  general norm results for the online and integral settings. In particular, we obtain the following result for Online Generalized Scheduling. 

    \begin{theorem} \label{thm:schedNorm}
     For Online Generalized Scheduling where the aggregation function $f$ and the individual machine norms are symmetric norms, there is an $O(\log^2 m \cdot \log n)$-competitive algorithm. Moreover, if the aggregation function is either an $\ell_p$ or Top-$k$ norm, then we obtain the improved competitive ratio of $O(\log m \cdot \log n)$.
    \end{theorem}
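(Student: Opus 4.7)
The plan is to reduce \ref{eq:covGenSched} to a sequence of online packing problems and then solve those packing problems via a per-machine decomposition with random activation thresholds. First, following the covering-to-packing reduction highlighted in the introduction, I would guess the value of $\OPT$ by standard doubling and pose a dual packing question: maximize the number of jobs served subject to the aggregated load having $f$-norm at most $O(\OPT)$. Given a bi-criteria online packing routine that serves a constant fraction of the feasible jobs while blowing the norm budget up by a constant, one can repeat it in $O(\log n)$ ``layers'' to eventually cover all jobs, with the total cost inflated by an $O(\log n)$ factor. The task therefore reduces to building a constant-competitive bi-criteria online packing algorithm for the generalized scheduling instance.

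Next, I would approximate the aggregation norm $f$ by a simpler class. Since $f$ is symmetric on $\R^m$, I invoke the standard representation that any symmetric norm is within an $O(\log m)$ factor of a maximum of weighted Top-$k$ norms over a logarithmic grid of scales $k\in\{1,2,4,\dots,m\}$. It then suffices to solve the packing problem where $f$ is a single (weighted) Top-$k$ norm. When $f$ is already an $\ell_p$ or Top-$k$ norm this reduction is unnecessary, which is exactly where the improved $O(\log m\cdot\log n)$ bound of the second part of the theorem comes from.

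With $f$ now a Top-$k$ norm, the global constraint roughly says that at most $k$ machines can carry significant load. To decide activations online, the key idea from the introduction applies: for each machine $i$, draw a random threshold from an appropriate logarithmic grid and declare $i$ ``active'' only once its predicted per-machine load exceeds that threshold; only active machines are allowed to absorb jobs. Calibrating the threshold distribution so that the probability of activation is proportional to the eventual contribution of the machine to the Top-$k$ norm ensures that, conditional on being active, the expected number of jobs that machine absorbs is large relative to the norm cost charged to it. Summing across machines yields the desired constant competitive ratio up to a second $O(\log m)$ factor coming from approximating each symmetric machine norm $\|\cdot\|_i$ by a Top-$k_i$ norm and handling the induced per-machine packing by a simple greedy ``pack until the budget is hit'' procedure.

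Combining these ingredients, the bi-criteria packing has constant ratio after paying $O(\log m)$ to reduce $f$ to Top-$k$ and another $O(\log m)$ to reduce each machine norm to Top-$k_i$; multiplying by the outer $O(\log n)$ from the layering yields $O(\log^2 m\cdot\log n)$, and $O(\log m\cdot\log n)$ when the outer $f$-decomposition is free. The main obstacle I expect is the analysis of the random activation thresholds against an adversarial arrival order: one must argue that \emph{conditional} on a machine being activated, the jobs that arrive after activation are still sufficient in expectation to fill its per-machine budget, and that these per-machine gains aggregate correctly into a guarantee against the global Top-$k$ norm. This will likely require a careful level-by-level union bound over the logarithmic scales, together with a concentration / Markov-type argument to control how many machines in each scale become active simultaneously.
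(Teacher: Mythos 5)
Your high-level plan (doubling to guess $\OPT$, reducing the covering problem to $O(\log n)$ layers of an online packing problem, and using random activation thresholds per machine to decide when to ``open'' a machine) matches the paper's skeleton. However, there are several concrete gaps in how the packing step is carried out and in the accounting of the two $\log m$ factors, and these gaps are not cosmetic.

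First, the paper's key structural move is missing from your proposal: it reduces \ref{eq:packGenSched} to a \emph{budgeted} variant \ref{eq:budgetGenSched} by giving each original machine $O(\log m)$ copies, one for each budget level on a geometric grid of per-machine loads. This is what lets the algorithm decouple the machines and run an independent single-machine admission-control algorithm on each active copy. Without some analogue of this, it is unclear how your random thresholds are calibrated — you refer to a machine's ``predicted per-machine load,'' but that quantity is not available online. Second, your accounting of the two $\log m$ factors does not match what actually happens (and one of them is spurious). In the paper, one $\log m$ is lost inside the weighted-$\ell_1$ budgeted algorithm because each activated machine must guess the optimum of its induced single-machine instance from a geometric grid of $O(\log m)$ values; the other $\log m$ (present only for general symmetric outer norms) is lost because the algorithm draws a \emph{single uniform budget level} $\bar b$ from $\log m$ scales so that the symmetric outer constraint becomes a cardinality ($\ell_1$) constraint. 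In contrast, you attribute one $\log m$ to approximating the outer symmetric norm by Top-$k$ norms, and the other to approximating each \emph{machine} norm $\|\cdot\|_i$ by a Top-$k_i$ norm. The second attribution is incorrect: the paper's single-machine problem \ref{eq:normPack} with an arbitrary symmetric norm is $(1/3,1)$-solvable by a simple greedy threshold rule with \emph{no} $\log m$ loss and no Top-$k$ approximation (Lemma \ref{lem:single-machine}); so that factor would be a phantom loss in your scheme and your total would not account for the real one coming from guessing $M_i$. Third, the ``symmetric $\approx$ max of Top-$k$'' idea is a constant-factor approximation (using $O(\log m)$ Top-$k$ norms), not an $O(\log m)$-factor approximation; and even granting it, you would face a conjunction of $O(\log m)$ simultaneous Top-$k$ constraints on the aggregate load, which your sketch does not explain how to handle inside the activation/analysis. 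The paper sidesteps this by discretizing per-machine budgets so that the symmetric outer constraint, restricted to uniformly-scaled indicator vectors, collapses to a single cardinality constraint. Finally, your explanation of the improved $O(\log m\cdot\log n)$ bound for $\ell_p$/Top-$k$ (``the outer decomposition is free'') is not the paper's reason: the improvement comes from the fact that, after making geometric machine copies, the $\ell_p$/Top-$k$ outer constraint reduces to a weighted $\ell_1$ constraint on the copies deterministically (raising to the $p$-th power, or zeroing out small entries), avoiding the random draw of a uniform budget level that general symmetric norms require.
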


    The first result in the theorem extends the $O(\log n)$-approximation for the offline case from~\cite{DLR-SODA23} (notice that, as in Set Cover, the online guarantees have an unavoidable additional $\log m$ factor compared to the offline results). 
    The second guarantee in this theorem is tight under standard complexity hardness assumptions \cite{korman2004use} since  it generalizes weighted Online Set Cover.

    Another strength of our techniques is that they handle much better \emph{compositions of norms}, a setting beyond symmetric norms where prior techniques have struggled. For instance, consider  the simpler setting of Online Load Balancing (i.e., the total load of machine $i$ is just the sum of $p_{ij}$'s of the jobs $j$ assigned to it) where the aggregation function is  a sum of $L \geq 1$ symmetric norms over disjoint subsets of coordinates. In this setting, \cite{KMS-SODA23} showed an $O(L \log^2 m)$-competitive algorithm.
    We are now able to obtain a much milder logarithmic dependence on $L$, and generalize to a much broader class of nested norms.
    
    More specifically, consider any nested norm $\|\cdot\| : \R^m \rightarrow \R$ of the form $\|x\| = \|(\|x_{S_1}\|'_{1},\ldots,\|x_{S_L}\|'_{L})\|'$, for disjoint subsets $S_1,\ldots, S_L$ of coordinates ($x_S$ denotes the vector $x$ restricted to the coordinates in the set $S$). Roughly speaking, our results are modular and we can employ them recursively, such that the final competitive ratio is the product of the competitive ratio for the top-level norm $\|\cdot\|'$ times that for the worst (\emph{not the sum}) of the second-level norms $\|\cdot\|'_{\ell}$ (actually the $\log n$ terms do not even multiply). As a sample of our composition results, we have the following (see \Cref{sec:normComp} for a more general composition result):

 \begin{theorem}  \label{thm:genSchedComposition}    
     Consider Generalized Online Scheduling where the the aggregation function $f$ is a nested norm $\|\cdot\|$, as defined above. If all norms involved are symmetric (i.e., the norms $\|\cdot\|'$ and $\|\cdot\|'_{\ell}$ in the nested norm $\|\cdot\|$, and the individual machine norms $\|\cdot\|_i$), then there is an $O(\log^2 m \cdot \log^2 L \cdot \log n)$-competitive algorithm for this problem.
 \end{theorem}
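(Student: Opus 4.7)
The plan is to exploit the two-level nested structure of the aggregation norm and to apply the symmetric-norm guarantee of \Cref{thm:schedNorm} recursively at each level. Following the paradigm of the paper, I will first reduce the covering problem to a bi-criteria packing problem (as in \Cref{thm:reduction}), build a bi-criteria packing algorithm for the nested norm by composing outer and inner subroutines, and only then invoke the covering-to-packing reduction once at the end, so that only a single $\log n$ factor appears in the final bound.

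At the outer level, view the problem as a packing instance with $L$ super-machines, where super-machine $\ell$ bundles the machines in $S_\ell$ and whose abstract load is $\|\cdot\|'_\ell$ applied to the loads $\Lambda_i(x)$ for $i \in S_\ell$. Since the outer aggregation $\|\cdot\|'$ is a symmetric norm in $L$ coordinates, the bi-criteria packing algorithm behind \Cref{thm:schedNorm}, applied to an $L$-machine instance, yields a guarantee of the form $(\alpha_{\text{out}}, \beta_{\text{out}}) = (\Omega(1), O(\log^2 L))$. Crucially, it interacts with each super-machine only through a black-box subroutine that, given a budget on $\|\cdot\|'_\ell$, packs as many online jobs as possible into the machines of $S_\ell$ subject to that budget.

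At the inner level, each super-machine $\ell$ is itself an instance of Online Generalized Scheduling with symmetric aggregation $\|\cdot\|'_\ell$ and symmetric per-machine norms $\|\cdot\|_i$, so a second application of the packing algorithm behind \Cref{thm:schedNorm} produces a subroutine with bi-criteria guarantee $(\alpha_{\text{in}}, \beta_{\text{in}}) = (\Omega(1), O(\log^2 m))$. Composing the two levels multiplicatively --- the packed fractions remain $\Omega(1)$, while the budget-inflation factors multiply because an outer unit of $\|\cdot\|'_\ell$-budget is realized via an inner packing that may itself inflate by $\beta_{\text{in}}$ --- yields a combined $(\Omega(1), O(\log^2 m \cdot \log^2 L))$ bi-criteria packing algorithm for the full nested norm. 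Feeding this into \Cref{thm:reduction} then gives the claimed $O(\log^2 m \cdot \log^2 L \cdot \log n)$-competitive algorithm for the covering problem.

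The main obstacle will be making the outer/inner composition rigorous in the online setting. When the outer algorithm decides to ``invest'' a further unit of budget on super-machine $\ell$ after some of its jobs have already been handled by the inner subroutine, the inner subroutine must be able to exploit this new budget for subsequent arrivals --- so we need the inner packing subroutine to be monotone and adaptive with respect to a growing budget, rather than merely a one-shot procedure given a fixed budget at the outset. A secondary technical issue will be reconciling the randomization at both levels (random budget guesses and activation thresholds) so that the bi-criteria guarantees compose in expectation, and verifying that the single-$\log n$ overhead from \Cref{thm:reduction} is not doubled by the recursion.
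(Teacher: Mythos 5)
Your high-level route matches the paper's: reduce the covering problem to a bi-criteria packing problem, compose outer- and inner-level packing guarantees to handle the nested norm, and invoke the covering-to-packing reduction (\Cref{thm:packCovReduction}) only once so that just a single $\log n$ factor enters. That part is right, and identifying that the $\log n$ should not double is the correct observation.

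However, you have the bi-criteria parameters backwards. The packing algorithm of \Cref{thm:genBudgetNorm} for a symmetric outer norm over $k$ coordinates is $(\Omega(\alpha/\log^2 k),\,O(c))$-solvable, not $(\Omega(1),\,O(\log^2 k))$-solvable: the polylogarithmic loss lives in the \emph{packing fraction} $\alpha$, while the budget inflation $c$ remains a constant. (The budget is only ever exceeded by a single extra machine activation; the $\log^2$ losses come from the random activation thresholds and the random guess $M_i$, which inherently cost you a fraction of the value rather than the budget.) Correspondingly, the paper's composition result \Cref{thm:norm-compose-sched-pack} multiplies the packing-fraction degradations and the budget inflations separately, yielding that the nested norm is a $\big(\Omega(1/(\log^2 L \cdot \log^2 m)),\,O(1)\big)$-good outer norm, which combined with the $(\tfrac13,1)$-solvability of \ref{eq:normPack} for symmetric inner norms and \Cref{thm:packCovReduction} at $p=1$ gives the stated $O(\log^2 m \cdot \log^2 L \cdot \log n)$ bound. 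Your swapped attribution happens to produce the same final expression only because, at $p=1$, the reduction's competitive ratio $O(c\log n /\alpha)$ depends on $\alpha$ and $c$ symmetrically; for $p>1$ it would not, so the misattribution is a real conceptual gap rather than a harmless relabeling.

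Your other flagged obstacle --- that the inner subroutine would need to be adaptive to a budget that grows as the outer algorithm ``invests'' more --- is a concern the paper's formulation sidesteps. In \Cref{thm:norm-compose-sched-pack}, the problem \ref{eq:normPack}$_{\|\cdot\|''_\ell}$ for the cluster-level norm is shown to be \emph{identical} (not just reducible) to \ref{eq:packGenSched}$_{\|\cdot\|'_\ell, \{\|\cdot\|_i\}_{i \in S_\ell}}$, and in the \ref{eq:budgetGenSched} framework each (copy of a) machine receives a fixed activation budget once and runs its internal \ref{eq:normPack} algorithm with that budget frozen. There is no ``growing budget'' to adapt to; the multiple-budget flexibility is handled by the $\log m$ copies created in the reduction from \ref{eq:packGenSched} to \ref{eq:budgetGenSched}, each with its own fresh one-shot subroutine. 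So the composition is cleaner than you feared, but you need the $(\beta,\gamma)$-good-outer-norm abstraction (\Cref{def:good-norm}) to express it.
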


     Such norm compositions have also been studied in the context of Online Set Cover, where in particular it has was used to obtain improved guarantees for Online Buy-at-bulk Network Design~\cite{NS-ICALP17}. This last reference gave an $O(\log m)$-competitive algorithm when the cost function is a sum of $\ell_p$ norms, and \cite{KMS-STOC24} extended it to an $O(\log^3 m)$-competitive algorithm for sums (and other compositions) of symmetric norms; however, again these results are only for the \emph{fractional} version of the problem. \Cref{thm:genSchedComposition} implies integral versions of these results, with matching or improved dependence on $\log m$. Using an argument from~\cite{NS-ICALP17}, here we do not need to assume disjointness of the sets, see \Cref{app:proofsIntro} for details. 

 \begin{cor} \label{cor:setCoverComp}
    Consider Online Set Cover with the cost function given by a norm $\|x\| = \sum_{\ell \le L} \|x_{S_L}\|'_{\ell}$ where the sets $S_\ell \subseteq [m]$'s may have intersections. If each of the norms $\|\cdot\|'_{\ell}$ is an $\ell_p$ norm, then there is an $O(\log m \cdot \log L \cdot \log n)$-competitive algorithm for the problem. If each of the norms $\|\cdot\|'_{\ell}$ is symmetric, then there is an $O(\log^2 m \cdot \log L \cdot \log n)$-competitive algorithm for the problem.
 \end{cor}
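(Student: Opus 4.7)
My plan is to reduce the non-disjoint setting of \Cref{cor:setCoverComp} to the disjoint one handled by \Cref{thm:genSchedComposition} (more precisely, the refined composition result in \Cref{sec:normComp}), via the duplication argument of \cite{NS-ICALP17}. I would construct a new Online Set Cover instance whose ``sets'' are the pairs $(i,\ell)$ with $i \in S_\ell$, partitioned into $L$ disjoint groups $G_\ell := \{(i,\ell) : i \in S_\ell\}$, each of size at most $m$, so that the total number of sets is at most $mL$. Elements arrive in the same online order, and element $j$ is now covered by $(i,\ell)$ iff $i \in N(j) \cap S_\ell$, so that the new cost $\sum_\ell \|z_{G_\ell}\|'_\ell$ is a disjoint sum of the inner norms. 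Lifting any feasible $x^*$ of the original via $z^*_{(i,\ell)} := x^*_i$ (for every $\ell$ with $i \in S_\ell$) produces a feasible $z^*$ of identical cost, so $\OPT$ of the new instance is no larger than $\OPT$ of the original.

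I then apply the refined composition result of \Cref{sec:normComp} to the new disjoint instance, treating the outer $\ell_1$-sum over $L$ blocks as the aggregation and the $\|\cdot\|'_\ell$'s as the per-block norms. Because the outer aggregation is an $\ell_1$-sum, the top-level contribution to the competitive ratio is only $O(\log L \cdot \log n)$ (the classical Online Set Cover bound of \cite{AlonAA-STOC03}) rather than the $O(\log^2 L \cdot \log n)$ needed for a generic symmetric outer. The per-block contribution is $O(\log m \cdot \log n)$ for $\ell_p$ inner norms by \Cref{thm:setCoverpBoundedNew}, and $O(\log^2 m \cdot \log n)$ for symmetric inner norms by \Cref{thm:schedNorm}. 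As noted after \Cref{thm:genSchedComposition}, the $\log n$ factors do not compound across the two layers, so the final competitive ratios are $O(\log m \cdot \log L \cdot \log n)$ and $O(\log^2 m \cdot \log L \cdot \log n)$, matching the claim.

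The delicate step, which I expect to be the main obstacle, is converting the algorithm's output $z$ on the new instance back to a feasible $x$ on the original. The natural choice $x_i := \max_{\ell : i \in S_\ell} z_{(i,\ell)}$ is immediately feasible, but a naive monotonicity-plus-triangle-inequality bound inflates the original cost by up to a factor of $L$, since an opened $i$ can contribute to every group $\ell$ with $i \in S_\ell$. Following \cite{NS-ICALP17}, I would modify the algorithm so that each set $i$ is opened in at most one group (so that $\sum_\ell z_{(i,\ell)} = x_i$), and then bound the cost of $x$ block by block via Minkowski's inequality for the inner norm, incurring only an $O(1)$ loss. Making this ``one block per opening'' charging compatible with the integral, possibly randomized, composition algorithm produced in the previous step is the part that requires the most care.
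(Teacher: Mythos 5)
Your overall route is the same as the paper's: duplicate each set $i$ across the blocks it belongs to so that the blocks become disjoint, apply the composition result of \Cref{thm:norm-compose-sched-pack} together with \Cref{thm:genBudgetNorm} (the $\ell_1$ outer loses $\log L$; each $\ell_p$ resp.\ symmetric inner block loses $\log m$ resp.\ $\log^2 m$; the $\ell_\infty$ machine norms are $(1,1)$-solvable), and finish with the covering-to-packing reduction \Cref{thm:packCovReduction} with $p=1$ (contributing the single $\log n$). A minor quibble with your second paragraph: framing the block and outer losses as per-layer "competitive ratios" each carrying a $\log n$, and citing \Cref{thm:setCoverpBoundedNew} and \Cref{thm:schedNorm}, mixes up the levels of the argument. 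The composition theorem is a statement about the goodness $\beta$ of outer norms for the \emph{packing} problem \ref{eq:packGenSched}, which carries no $\log n$; the $\log n$ enters exactly once through \Cref{thm:packCovReduction}. You reach the right bound anyway.

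The substantive issue is the reverse conversion from the duplicated disjoint instance back to the original, and your proposed fix does not close it. Enforcing at most one opened copy per set (so $x_i = \sum_\ell z_{(i,\ell)}$) and then applying Minkowski block by block does not give an $O(1)$ loss. A concrete counterexample: $m=2$, $S_1=\{1\}$, $S_2=\cdots=S_L=\{2\}$, all $\|\cdot\|'_\ell$ the absolute value (a one-dimensional $\ell_p$ norm), and a single element covered by both sets. Then $\Opt$ of the original is $1$ (open set $1$), the duplicated instance also has optimum $1$, and a $1$-competitive algorithm on the duplicated instance is free to open copy $(2,2)$ at new cost $1$ --- yet the converted solution opens set $2$, giving original cost $L-1$. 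So the naive duplication-plus-Minkowski argument already loses a factor of $L$ before the composition theorem is ever invoked, and the duplication must be weighted (or the algorithm made aware of each set's true marginal cost) to avoid this. The paper does not re-derive this step either; it invokes the online non-disjoint-to-disjoint reduction of Appendix~C.3 of \cite{KMS24arxiv}, which originates in \cite{NS-ICALP17} and which you had no access to. You were right to flag this as the step requiring the most care, but as written your paragraph would not yet compile into a proof without that cited construction.
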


We note that something similar to \Cref{thm:schedNorm} can also be obtained by combining our new \Cref{thm:OGSpboundedSym} for $p$-bounded functions with 
the recent work of \cite{KMS-STOC24}, which shows how to approximate any symmetric norm by a $p$-bounded function. However, for symmetric norms this approach loses extra polylog factors, and more fundamentally, for norm compositions it yields weak bounds with a linear dependency on the number $L$ of norms.

\subsection{Our Techniques}

Our approach to Online Generalized Scheduling is to guess the optimum and then repeatedly solve the online packing problem of maximizing the number of scheduled/accepted jobs subject to the aggregate load being at most the optimum. 

\medskip \textbf{Key Ideas.} To solve the online packing problem, our first key idea is to  assign individual budgets to each of the machines (actually we make multiple copies of each machine, with multiple possible budgets).
This allows us to decouple the machines and  run separate  ``admission control'' algorithms on each machine, based only on their local budget. To do admission control effectively, ideally, we would like to pay the cost of a machine only if it can schedule many jobs. However, since the inputs are adversarial, the past may not predict the future accurately. 

To overcome this, the second key idea we use is to determine random thresholds to ``activate'' machines:  a machine rejects all elements until the number of elements it could have accepted exceeds the random threshold, and only then we activate and pay the cost of the machine. By choosing the threshold distribution carefully, we balance out the cost of activating the machines with the number of jobs we can schedule on them. 
We note similar random activation strategies have appeared before in the context of related online constrained maximization problems; see \Cref{sec:furtherRelated} for further discussion.

To illustrate our techniques in more depth, we first show how to derive an algorithm for  Online (Weighted) Set Cover. It is $O(\log m \cdot \log n)$-competitive, so is optimal. The technical details are given in \Cref{sec:warmup}.

\medskip
\textbf{Set Cover.} In this case, our algorithm repeatedly solves a packing problem, whose online version is Budgeted Maximum Coverage. Here, instead of minimizing the cost of chosen sets, we are only allowed to choose sets so that their total cost stays within a given budget. We also can accept an element only if it is covered immediately after it arrives. The goal is to maximize the number of covered elements. Notice that this packing problem is not the dual of Online Set Cover, but is instead the problem obtained by exchanging the objective and constraints.

We can show that an $\alpha$-competitive algorithm for this packing problem gives an $O(\frac{\log n}{\alpha})$-competitive algorithm for Online Set Cover. This latter algorithm is obtained by repeatedly applying the packing algorithm on all elements that have not been packed so far. While this result may not sound surprising, getting the technical details correct is non-trivial because the packing algorithm necessarily needs to be randomized and needs to know the optimal offline value, where the offline optimal value is a random variable when we solve the packing problem repeatedly.

Our core contribution is designing an algorithm for (generalized versions of) this online packing problem. One would want to activate the sets for which the cost per covered element is low. However, when activating a set, we do not know how many elements will arrive in the future that will be covered by this set. Recall that elements only count towards our objective if they are covered at the time of arrival. Therefore, when activating a set, there is always the risk of no future elements arriving that can be covered by this set.

To mitigate this risk, we draw one randomized threshold per set. We activate a set when the number of elements that have been rejected, but could have been covered by this set, exceeds the threshold. The distribution of the threshold is chosen carefully so that conditional on a set being activated the expected number of elements it covers is high enough compared to its cost. 

\medskip
\textbf{General Convex Objectives.} Our algorithm can be extended to Online Set Cover with convex objectives and Online Generalized Scheduling. To apply our technique to Online Set Cover with convex objectives, the difficulty is that there is no more notion of the ``cost'' of a set. We therefore adapt the thresholds over time, always considering the marginal increase of the cost function at this point. This introduces additional stochastic dependencies because the threshold then depends on what other sets have already been activated.

For the extension to Online Generalized Scheduling, we need to apply one other modification. While previously, activating a set caused a fixed cost (as reflected by an $\ell_\infty$-norm), the cost of a machine will vary depending on how many and which elements it accepted. We generalize the previously mentioned approach by assigning each machine an individual budget, which also has to be activated before it is used. This essentially decouples the problem on individual machines, because we can then apply any algorithm for the acceptance decision to a single machine with the only constraint that the cost needs to stay within the assigned budget. 

This latter online algorithm for the single machine can also be another instantiation of our overall algorithm, in the case where each machine must internally solve a scheduling problem. Such a setting is modeled by the Online Generalized Scheduling problem with cost function given by a nested norm, so in this way, we are able to get a composition theorem for the norms that our techniques can handle.

\subsection{Outline}
We give a mostly self-contained application of our technique to Online Set Cover in \Cref{sec:warmup}. The main steps follow the ones of our result for Online Generalized Scheduling. In \Cref{sec:reduction}, we define the packing problem (\ref{eq:packGenSched}) that we reduce to,  and give the details of the reduction. In \Cref{sec:p-bound}, we solve this packing problem when the cost function is $p$-bounded; the core focus is solving the packing problem with machine-budgets alluded to above (\ref{eq:budgetGenSched}), and then we use it to solve the original packing problem \ref{eq:packGenSched}. Finally, in \Cref{sec:norm} we perform similar steps but for the case of norm cost functions, including how to ``compose'' our algorithms in the presence of composed norms.


\subsection{Further Related Work} \label{sec:furtherRelated}

There is an extensive work on both Online Set Cover and Online Load-Balancing (see the surveys \cite{azar2005line,BuchbinderNaor-Book09}) as well as approximations for problems with $\ell_p$-norms objective~(e.g., \cite{chandra,cody,awerbuch,uniformMachines,simultaneousLoadBal,anupamSLB,MolStochLB,BARTAL201927}). We discuss below TCS works that focus on designing approximation algorithms beyond $\ell_p$-norm objectives.

\medskip \textbf{Offline and Stochastic optimization.} 
One of the earliest work on load balancing with symmetric convex objectives focused on obtaining simultaneous approximation guarantees \cite{GM-Algo06}. The last decade has seen several work on offline optimization problems involving general norm or convex objectives, such as 
 clustering \cite{AbbasiBBCGKMSS-FOCS23,CS-STOC19,kMedian2}, load balancing \cite{CS-STOC19,ChakrabartyS19b,loadBalEasy},  facility location \cite{GMS-EC23}, and nearest neighbor search \cite{ANNNorms}. {Related to (offline) Generalized Scheduling, assignment problems with non-linear per-machine cost functions have also been studied, such as the Generalized Machine Activation problem of~\cite{LiKhullerMachineAct} (see references therein; see also~\cite{packingCovering13} for the online version of a special case of this problem, namely that of Unrelated Machine Scheduling with Startup Costs).} There is also a lot of interest in designing approximation algorithms for stochastic discrete optimization problems with norm objectives, such as stochastic load balancing \cite{stochLoadBalNorms,IbrahimpurS-SOSA22}, stochastic probing \cite{PRS-APPROX23,KMS-STOC24}, and  stochastic spanning tree \cite{ibrahimpur-thesis2022}.

\medskip \textbf{Online optimization.} As discussed earlier, the initial work  on online problems with norm or $p$-bounded objectives  focused on online set cover  \cite{AzarBCCCG0KNNP16,NS-ICALP17}.  Since then, several other online problems have been studied   such as  online facility location \cite{PRS-APPROX23}, 
online paging \cite{MenacheS15,minMaxPaging,GKP-STOC25}, online packing \cite{KMS-STOC24}, online vector scheduling \cite{KMS-SODA23}, and bandit with knapsacks \cite{KMS-SODA23,KMS-STOC24}.

 \medskip \textbf{Random activation strategies.} We remark that  ``random activation'' strategies similar to the ones designed in this paper have appeared before in the context of related online constrained maximization problems, such as the online max $k$-coverage problem \cite{awerbuch1996making} and its extension to the online weighted rank maximization problem \cite{buchbinder2012approximation}. In these works, and in our present work, random activations serve to balance the risk of activating a machine too soon or too late when faced with adversarial arrivals. 

However, one notable difference in our setting  from these prior works is that the budgeted problem (``admission control'') within each machine is non-trivial. That is, even after a machine has been activated, it is not obvious how to decide which jobs it should accept, as accepted jobs must still satisfy a norm-budget constraint. Moreover, the job admission decisions of one machine can affect the activation of other machines. Perhaps surprisingly, our results show that a random activation strategy is still compatible with the use of black-box admission control algorithms on the machines, despite these complex dynamics.


\section{Warm-up: A New Proof for Online Set Cover}\label{sec:warmup}
    Before proving our main theorems, we will explain our main ideas in the simpler setting of \emph{Online Set Cover}.

\begin{definition}[Online Set Cover (OSC)]
    We have $n$ elements and $m$ sets $(S_i)_{i \in {m}}$ with positive set costs $(c_i)_{i \in [m]}$. Elements $j \in [n]$ arrive online and reveal their membership in each set $S_i$. The algorithm must maintain a family of \emph{active} sets $A$ that cover each arriving element, i.e., it constructs a sequence of active sets $\varnothing = A_0 \subseteq \dots \subseteq A_n \subseteq [m]$, where $A_j$ denotes the active sets after arrival $j$ and ensure that the coverage constraint $j \in \cup_{i \in A_j} S_i$ is satisfied. The goal of the algorithm is to minimize $\sum_{i \in A_n} c_i$.
\end{definition}

We can be write online set cover as an instance of \ref{eq:covGenSched}:  sets $S_i$ correspond to machines and each machine incurs a load $c_i$ if any element in $S_i$ is assigned to it, which corresponds to buying the set $i$. Formally, denote
\[
p_{ij} =\begin{cases}
        c_i & j \in S_i\\
        +\infty & j \not \in S_i
    \end{cases}, \qquad \forall i \in [m],~ \forall j \in [n] \enspace .
  \]
Now OSC corresponds to \ref{eq:covGenSched} with the following parameters:
\begin{align*}
    f(\Lambda) = \|\Lambda\|_1 = \sum_{i \in [m]} \Lambda_i, \quad \text{ where }
    \Lambda_i := \|(x_{ij} \cdot p_{ij})_j\|_\infty \quad \forall i \in [m].
\end{align*}

\subsection{Reducing to a Packing Problem}
\label{sec:warmup:reduction}

Instead of directly solving the set cover problem, we reduce it to (several instances of) the following packing problem. In this problem, instead of minimizing cost subject to a coverage constraint, we now seek to maximize coverage subject to a budget constraint.

\begin{definition}[Online Budgeted Coverage Maximization (OBCM)\footnote{We remark that this problem closely resembles the online max $k$-coverage problem of \cite{awerbuch1996making}, which corresponds to the special case of $B = k$ and $c_i = 1$ for each $i$. As such, the algorithm in this warm-up setting closely resembles that of \cite{awerbuch1996making}. However, recall that we ultimately seek to extend this algorithm to more general settings in Sections \ref{sec:p-bound} and \ref{sec:norm}, and this warm-up algorithm serves as a base that we will build on later.}]
We have $n$ elements, $m$ sets $(S_i)_{i \in m}$ with positive set costs $(c_i)_{i \in [m]}$, a budget $B$. Elements $j \in [n]$  arrive online and reveal their membership in each set $S_i$. 
The algorithm must maintain a family of active sets $\varnothing = A_0 \subseteq \dots \subseteq A_n$ online that satisfies the budget constraint $\sum_{i \in A_j} c_i \leq B$ for all $j$. The goal is  to maximize $|\{j : j \in \bigcup_{i \in A_j} S_i\}|$, i.e., the number of elements that are covered when they arrive.
\end{definition}

Essentially, OBCM is the packing problem that results from swapping the constraints and objective of OSC, as opposed to the dual packing problem (as in LP duality). Observe that the inputs of OSC and OBCM are the same, except the budget $B$. Consider any instance $I_{OSC}$ of Online Set Cover. Now consider the corresponding instance of OBCM with $B = \Opt(I_{OSC})$. If an online algorithm for OBCM was able to obtain the optimal value of $n$ for this instance of OBCM, then the same algorithm would also exactly solve OSC. 
In a similar vein, the following lemma argues that even with just an approximate algorithm for OBCM, we can get an approximate algorithm for OSC.

\begin{lemma}
\label{lem:warm-up-reduction}
    Suppose for some $\alpha \in (0,1]$  there is an $\alpha$-competitive online algorithm for OBCM that is given the value of $\Opt_{\textnormal{OBCM}}$ as input. Then, there exists an $O((1/\alpha)\cdot \log n)$-competitive  algorithm for OSC.
\end{lemma}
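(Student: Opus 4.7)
I would reduce OSC to an iterative application of $\cA_{\textnormal{OBCM}}$ in layers. By a standard doubling trick on the (unknown) OSC optimum, assume we know $V := \Opt_{\textnormal{OSC}}$ up to a constant factor; set the OBCM budget $B = V$ throughout. The crucial observation is that for \emph{any} subset $Y \subseteq [n]$ of elements, the OBCM instance with element set $Y$ and budget $V$ satisfies $\Opt_{\textnormal{OBCM}} = |Y|$, since the OSC optimum of cost $\leq V$ covers all of $Y$. Thus, if $\cA_{\textnormal{OBCM}}$ were fed the correct value $|Y|$, it would cover at least $\alpha|Y|$ elements in expectation.

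Using this, I set up $L = \Theta((\log n)/\alpha)$ sequential layers, each being a fresh OBCM instance with budget $B$. An incoming element $j$ is offered to layer $1$; if the $\cA_{\textnormal{OBCM}}$ running on layer $1$ covers it, we are done with $j$, otherwise it is passed to layer $2$, and so on. Let $Y_\ell$ be the (random) set of elements reaching layer $\ell$. Since each layer covers $\alpha |Y_\ell|$ elements in expectation, $\E[|Y_{\ell+1}|] \leq (1-\alpha)\,\E[|Y_\ell|]$, and after $L$ layers $\E[|Y_{L+1}|] < 1$. Any residual uncovered element can be handled by buying a single set from the OSC optimum that contains it, at expected cost $\leq V$. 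Total cost is $L \cdot B = O((\log n)/\alpha) \cdot \Opt_{\textnormal{OSC}}$, matching the claimed ratio.

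The main obstacle is that $\cA_{\textnormal{OBCM}}$ on layer $\ell$ requires the value $|Y_\ell|$ upfront, but $|Y_\ell|$ is a random variable determined by the prior layers' decisions. I would resolve this by running a further online partition \emph{inside} each layer: using the a priori bound $\E[|Y_\ell|] \leq L_\ell := (1-\alpha/4)^{\ell-1} n$, set an estimate $M_\ell = \lfloor L_\ell/2 \rfloor$, and dynamically split the elements arriving at layer $\ell$ into consecutive OBCM sub-instances, spawning a new sub-instance with estimate $M_\ell$ whenever the \emph{running} offline optimum of the current sub-instance hits $M_\ell$. Each opened sub-instance then has true offline optimum $\geq M_\ell$, validating $M_\ell$ as a lower bound on its OBCM optimum, so $\cA_{\textnormal{OBCM}}$ guarantees $\alpha M_\ell$ expected coverage on it. A Markov argument shows that the expected number of sub-instances opened per layer is $O(1)$, so the expected per-layer budget stays $O(B)$, giving the desired $(1-\Theta(\alpha))$ geometric decay in expected uncovered count when composed across layers. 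I expect the bookkeeping around this nested partition, especially disentangling the conditional expectations across layers and bounding the Markov tail cleanly, to be the most delicate step; the rest amounts to a geometric series.
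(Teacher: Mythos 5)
Your proposal matches the paper's approach essentially step for step: doubling on $\Opt_{\textnormal{OSC}}$, layering sequential OBCM agents, observing that the OBCM optimum equals the number of surviving elements when $B \geq \Opt_{\textnormal{OSC}}$, and — crucially — handling the fact that the residual size $|Y_\ell|$ is random by a nested partition inside each layer that spawns a fresh sub-instance whenever the running offline optimum crosses the estimate $M_\ell$, with a Markov bound on the number of sub-instances. This is exactly what the paper formalizes in \Cref{lemma:det-to-rand-pack} and the proof of \Cref{thm:packCovReduction} in \Cref{app:normReduction}, with the same $(1-\alpha/4)$ geometric decay and $O(1)$ expected sub-instances per layer.
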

\begin{proof}[Proof sketch]
    The key idea is to use multiply copies of the OBCM algorithm to iteratively assign all $n$ arrivals, and then bound the number of copies required. For simplicity, let us assume for now that the OBCM algorithm is deterministic. Then our construction will go as follows:

    First, by standard guess-and-double tricks, we can lose a constant in our competitive ratio to assume that we know the optimal value of $\Opt_{OSC}$. Using this value for $B = \OPT_{OSC}$, we will initialize $N = \frac{\ceil{\log n}}{\alpha}$ algorithmic agents, each running a copy of the OBCM algorithm. For reasons which will become apparent soon, we let the $k$th agent use $(1 - \alpha)^{k-1}n$ as its estimate on $\Opt_{\textnormal{OBCM}}$. 
    
    When an element $j$ arrives, it is initially sent to the first agent. If the agent chooses not to cover $j$, then $j$ is sent to the next agent in the sequence. This process then continues with each agent, until either some agent has activated a set covering $j$, or all agents have chosen not to cover $j$. In our OSC instance, we will activate a set if it is activated by any agent.

    Notice that because of our guarantees, the first agent will cover at least $\alpha n$ elements, leaving at most $(1 - \alpha)n$ to arrive for the second agent. Likewise, the second agent will leave at most $(1 - \alpha)^2n$ for the third agent, and so on. Ultimately, the last agent will leave at most $(1 - \alpha)^N n < 1$ uncovered elements, hence all elements are covered by some set purchased by some agent. Since each agent opens sets of total cost at most $B = \Opt_{OSC}$, we incur total cost at most $NB = O(\log n/\alpha) \cdot \Opt_{OSC}$.
\end{proof}
Of course, we are brushing over some important details in this sketch, such as the case where the OBCM algorithm is randomized and only covers an $\alpha$-fraction of elements in expectation. We show how to handle these details in the general reduction given in \Cref{sec:reduction}.

\subsection{Solving the Packing Problem}
\label{sec:warmup:packing}
Next, we design an online algorithm for OBCM. 

\begin{theorem}\label{thm:warm-up-packing}
Given the value of $\Opt_{\textnormal{OBCM}}$, there is a $\frac{1}{O(\log m)}$-competitive algorithm for OBCM, .
\end{theorem}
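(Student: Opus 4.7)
The plan is to instantiate the random-activation template of the introduction. For each set $i$ I will draw an independent random threshold $\theta_i$ on a logarithmic scale and activate $i$ once sufficiently many ``missed'' elements in $S_i$ (those that arrived without any then-active set containing them) have accumulated, subject to the hard budget cap. Spreading the threshold uniformly over $O(\log m)$ scales lets us hit the ``right scale'' for each set with probability $\Omega(1/\log m)$.

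Concretely, let $r := \Opt_{\textnormal{OBCM}}/B$ and $L := \lceil \log_2 m \rceil + O(1)$. For each $i$ draw $K_i$ uniformly and independently from $\{0, 1, \ldots, L-1\}$ and set $\theta_i := 2^{K_i} c_i r / C$ for a sufficiently large constant $C$. Maintain a counter $Y_i$ counting the arrived $S_i$-elements not covered by any then-active set. Upon each arrival $j$: cover it via an active containing set if possible; otherwise increment $Y_i$ for each inactive $i \ni j$ and then activate every inactive $i$ with $Y_i \geq \theta_i$, provided $c_i$ still fits in the remaining budget.

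For the coverage analysis, fix an offline optimum $A^*$ with $\sum_{i \in A^*} c_i \leq B$, assign each $A^*$-covered element to one of its covering sets in $A^*$, and let $\mu_i^*$ denote the count assigned to $i$, so $\sum_i \mu_i^* = \Opt_{\textnormal{OBCM}}$. The key invariant is that any element assigned to $i$ that the algorithm misses must increment $Y_i$, and $Y_i$ stops growing at $\theta_i$ once $i$ activates. Calling $i \in A^*$ \emph{large} if $\mu_i^* \geq 2 c_i r/C$, the small sets collectively contribute at most $2 \Opt_{\textnormal{OBCM}}/C$, so for $C \geq 4$ the large sets contribute at least $\Opt_{\textnormal{OBCM}}/2$. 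For each large $i$, with probability $\geq 1/L$ the random choice $K_i = 0$ forces $\theta_i \leq \mu_i^*/2$; on that event, if $i$ is actually activated, the algorithm covers at least $\mu_i^* - \theta_i \geq \mu_i^*/2$ of $i$'s assigned elements. Summing delivers $\Omega(\Opt_{\textnormal{OBCM}}/\log m)$ expected coverage, modulo activations being blocked by the budget.

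The main obstacle is exactly this last caveat: the hard budget cap can prevent a ``good'' large set from being activated, breaking the per-set argument. To handle it, I plan to prove a stronger ``rate'' inequality: conditional on activating set $i$, the expected future coverage from $i$ is $\Omega(c_i r / \log m)$, so that by linearity $\E[\text{coverage}] \geq \Omega(r / \log m) \cdot \E[\text{cost}]$. The argument then dichotomizes: either the algorithm's cost reaches $\Omega(B)$, in which case the rate inequality yields $\E[\text{coverage}] = \Omega(\Opt_{\textnormal{OBCM}}/\log m)$ directly, or the budget cap never bites and the per-set coverage argument applies verbatim. Coupling the random thresholds with the cost-truncation process, in particular tracking the dependencies this introduces between different sets, is where the main technical effort will be required.
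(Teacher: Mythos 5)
Your high-level template is the right one and matches the paper's, and your large/small decomposition of $A^*$ is a legitimate alternative to the paper's deterministic Claim~\ref{claim:warm-up-1} for the "budget does not bind" case. The genuine gap is the distribution of the thresholds. You draw $K_i$ \emph{uniformly} from $\{0,\dots,L-1\}$ and set $\theta_i = 2^{K_i} c_i r / C$, i.e.\ exponentially-spaced thresholds with uniform weights. The paper instead uses a \emph{linearly}-spaced grid of thresholds $\tau_i = \bar\tau_i\cdot\frac{c_i\,\Opt}{2B}$ with $\bar\tau_i$ drawn from a \emph{geometric} distribution, $\Pr[\bar\tau_i = 1-k/\lceil 2\log m\rceil] \propto 2^{-k}$. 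That choice is what makes the rate inequality go through: for any $\tau_{\max}$ in the support (except the bottom), $\Pr[\tau_i < \tau_{\max}\mid \tau_i\le\tau_{\max}]=\tfrac12$, and ``$<$'' guarantees a gap of at least one grid step $\approx \frac{c_i r}{\log m}$, so conditional on activation the expected covered mass is $\Omega(c_i r/\log m)$ up to a negligible $O(\Opt/m^2)$ error.

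Your uniform distribution has no analogous memoryless property. Fix $\tau_{-i}$ and let $K_{\max}$ be the largest $k$ for which set $i$ is still activated. You get $\Pr[K_i<K_{\max}\mid K_i\le K_{\max}] = K_{\max}/(K_{\max}+1)$, which is $0$ when $K_{\max}=0$, and in that case the only activation event is $K_i=K_{\max}$ for which your lower bound $|C_i|\ge \theta_{K_{\max}}-\theta_{K_i}$ degenerates to $0$. So there are scenarios where $i$ activates (and contributes $c_i/L$ to the expected cost), yet your argument gives no coverage lower bound for $i$, and the event $K_{\max,i}=0$ is not rare — unlike the paper's $\tau_{\max}\le 0$, which happens with probability $O(1/m^2)$. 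This means your ``rate inequality'' as stated is simply not provable from the $\theta_{\max}-\theta_i$ bound under the uniform-over-powers-of-two law, and the dichotomy you propose cannot be closed. To fix the proposal you would essentially have to switch to the paper's distribution (or some other law with the property that, given $\theta_i$ is at most some threshold, it is \emph{strictly} below it with constant probability and the resulting gap is $\Omega(c_i r/\log m)$).
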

Combining this theorem with \Cref{lem:warm-up-reduction} immediately implies the following corollary.

\begin{corollary}
    There is an $O(\log n \cdot \log m)$-competitive algorithm for Online Set Cover.
\end{corollary}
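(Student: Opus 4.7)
The plan is to simply chain the two preceding results: \Cref{thm:warm-up-packing} yields an OBCM algorithm with competitiveness $\alpha = 1/O(\log m)$ (assuming it is given $\Opt_{\textnormal{OBCM}}$), and \Cref{lem:warm-up-reduction} converts any $\alpha$-competitive OBCM algorithm into an $O((1/\alpha)\log n)$-competitive OSC algorithm. Substituting the former into the latter gives the claimed $O(\log m \cdot \log n)$ bound. So the work is not in designing anything new, but rather in verifying that the two black-boxes compose cleanly.

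The first bookkeeping point is the $\Opt_{\textnormal{OBCM}}$ assumption. In the reduction sketch, each agent $k$ runs the OBCM algorithm on the subinstance consisting of elements uncovered by the previous $k-1$ agents, using budget $B$ and the lower-bound estimate $M_k = (1-\alpha)^{k-1} n$. Using a standard guess-and-double on $\Opt_{\textnormal{OSC}}$ (losing only a constant factor), we may assume $B \geq \Opt_{\textnormal{OSC}}$, so the OBCM optimum on the $k$-th subinstance equals its residual size $r_k$ (every uncovered element can still be covered inside a copy of the OSC optimum). Hence as long as $M_k \leq r_k$, the OBCM guarantee kicks in and the agent covers at least $\alpha M_k$ elements in expectation.

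The second bookkeeping point is randomization: the OBCM algorithm from \Cref{thm:warm-up-packing} is randomized, so \Cref{lem:warm-up-reduction} must be invoked in its randomized form, which yields only expectational guarantees and requires handling the interaction between randomness and the evolving residual. I would argue this by a short induction: conditioning on the execution of agents $1,\dots,k-1$, the residual is some fixed adversarial instance, and the per-agent guarantee gives $\E[r_{k+1} \mid r_k, M_k \leq r_k] \leq r_k - \alpha M_k$. Taking expectations and comparing against $M_k = (1-\alpha)^{k-1}n$ gives $\E[r_k] \leq M_k$, so after $N = O(\log n / \alpha) = O(\log m \cdot \log n)$ agents we have $\E[r_{N+1}] < 1$; combined with the facts that $r_{N+1}$ is integer-valued and that any still-uncovered element can be trivially handled by a final ``catch-all'' agent (adding at most a constant factor in expectation), all elements are covered. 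Since each agent opens sets of total cost at most $B$, the total expected cost is $N \cdot B = O(\log m \cdot \log n) \cdot \Opt_{\textnormal{OSC}}$.

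The main obstacle I anticipate is not in the corollary itself but already absorbed into \Cref{lem:warm-up-reduction}: making the guess-and-double, the expectational accounting for $r_k$, and the last-agent cleanup all work simultaneously with a randomized OBCM subroutine. I would not reprove these details here and would instead point to the general and more careful randomized reduction developed in \Cref{sec:reduction}, of which the OSC case is an immediate specialization.
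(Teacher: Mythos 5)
Your proposal is correct and matches the paper exactly: the corollary is obtained by plugging the $\frac{1}{O(\log m)}$-competitive OBCM algorithm of \Cref{thm:warm-up-packing} into the reduction of \Cref{lem:warm-up-reduction}, with the randomization and estimate-handling details deferred to the general reduction in \Cref{sec:reduction}, just as you indicate.
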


We first describe our algorithm to prove \Cref{thm:warm-up-packing}.

\medskip \textbf{Algorithm.} 
We will design an algorithm that guesses a threshold $\tau_i$ for each set $S_i$, where $\tau_i$ determines how long to wait before activating set $S_i$. In particular, we choose a threshold $\tau_i = \bar \tau_i \cdot \frac{c_i \Opt_{\textnormal{OBCM}}}{2B}$, where $\bar \tau_i \in \{0, \frac{1}{\ceil{2 \log m}}, \frac{2}{\ceil{2 \log m}}, \dots, 1 \}$ drawn such that
$\Pr\left[\bar \tau_i = 1 - \frac{k}{\ceil{2 \log m}}\right] = 2^{-k}$ for  $k \in \{1, \ldots, \ceil{2 \log m}\}$ (and $\bar \tau_i = 0$ with remaining probability). 
Crucially, our distribution over $\tau_i$ is chosen such that, conditional on $S_i$ activating, we expect to activate $S_i$ early enough to still cover many remaining elements.

When each element $j$ arrives, if $j$ is already covered (i.e. $j \in \bigcup_{i \in A_j} S_i$), then we don't open any new sets, but we consider the first activated $S_i$ for which $j \in A_i$ to be the set ``responsible'' for covering $j$ (i.e. by adding $j$ to $C_i$).
Otherwise, 
if $j$ is not already covered by an active set, then we \emph{offer} $j$ to each inactive set containing $j$ (in lexicographical order) by adding $j$ to $O_i$. 
When a set $i$ is offered $j$, it checks if it has received at least $\tau_i$ total offers. If so, and we have budget remaining, then $S_i$ activates. Otherwise it passes, and $j$ is offered to the next set in the sequence. As soon as some set $i$ activates, we stop making offers, so at most one new set is activated. We then repeat this process until all $j \in [n]$ arrive.

We define our algorithm formally in \Cref{alg:setCoverPacking}.

\begin{algorithm}[H]
\caption{Packing Algorithm for OBCM} \label{alg:setCoverPacking}
\begin{enumerate} \setlength{\itemindent}{-1em}
\item Start with active sets $A = \varnothing$ and  $\forall i \in [m]$  with covered elements $C_i = \varnothing$ and offered elements $O_i = \varnothing$.
\medskip

\item For each set $i$,  independently draw a random threshold 
        $\tau_i := \bar{\tau}_i \cdot \frac{c_i}{2B}\cdot  \Opt_{\textnormal{OBCM}}$, where   multiplier $\bar{\tau}_i$ is distributed as $(1 - \frac{k}{\ceil{2\log m}})^+$ with probability $2^{-k-1}$ for $k \ge 0$.

\medskip

\item For each element $j \in [n]$:
\begin{enumerate} \setlength{\itemindent}{-1em}
    
 \item[]       If $j$ is covered, $j \in \bigcup_{i \in A} S_i$, then  add $j$ to $C_i$, where $i \in A$ is the earliest activated set  containing $j$.
 \medskip
    
 \item[]        Else, offer $j$  one by one to inactive sets containing it (i.e., $i \in [m] \setminus A$ with  $S_i \ni j$):
 \begin{itemize}
     \item[]  If the number of offered elements to $S_i$ exceed $\tau$ (i.e., $|O_i| \geq \tau_i$) and we have not already violated budget  (i.e., $\sum_{\ell \in A} c_\ell \leq B$), then activate set $S_i$ by adding 
      $i$ to $A$. Moreover,  since $S_i$ is responsible for covering $j$, add $j$ to $C_i$. Now go to the next element without  offering $j$ to any other sets.
\end{itemize}
\end{enumerate}
\end{enumerate}
\medskip
 \end{algorithm}   

\begin{proof}[Proof of \Cref{thm:warm-up-packing}]
    We assume that each set has cost at most the budget $B$, otherwise we can just ignore it. Also, we can assume that we are allowed to exceed the budget by a single set activation, losing only a factor $2$ in our approximation. To do this, we simulate the budget-violating algorithm, and with probability $1/2$ we choose either to only activate the last set (which exceeds the budget), or to use all sets activated before the budget is violated.  
    
To show that this algorithm gets our desired competitive ratio, we consider two cases depending on the expected total cost of the active sets.

\textbf{Case 1  ($\E[\sum_{i \in A} c_i] < \frac{B}{2}$):}
By Markov's inequality,  in this case we have    $\sum_{i \in A} c_i < B$ with probability at least $\frac{1}{2}$. 
In this event, the following claim tells us that $|\bigcup_{i \in A} C_i| \geq \frac{\Opt_{\textnormal{OBCM}}}{2}$, so we get $\E|\bigcup_{i \in A} C_i| \geq \frac{\Opt_{\textnormal{OBCM}}}{4}$.

\begin{claim}\label{claim:warm-up-1}
    If $\sum_{i \in A} c_i < B$, then $|\bigcup_{i \in A} C_i| \geq \frac{\Opt_{\textnormal{OBCM}}}{2}$.
\end{claim}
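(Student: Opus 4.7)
The plan is to fix an optimal solution and charge each element it covers that our algorithm misses to a ``wasted'' offer on the corresponding optimal set, then use the choice of thresholds to bound the total charge by $\Opt_{\textnormal{OBCM}}/2$.

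Concretely, I would fix an optimal solution $A^* \subseteq [m]$ with covering assignment $\{C^*_{i^*}\}_{i^* \in A^*}$ (so $\sum_{i^* \in A^*} c_{i^*} \leq B$ and $\sum_{i^* \in A^*} |C^*_{i^*}| = \Opt_{\textnormal{OBCM}}$). The hypothesis $\sum_{i \in A} c_i < B$ is used only once: it guarantees that the budget test in the algorithm never blocks an activation, since throughout the run we have $\sum_{\ell \in A} c_\ell \leq B$; so the only reason a set $i$ could fail to activate upon receiving an offer of $j$ is that $|O_i| < \tau_i$ even after the offer is recorded.

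The heart of the argument is a per-set charge: for each $i^* \in A^*$, I would show that at most $\tau_{i^*}$ elements of $C^*_{i^*}$ are left uncovered. Indeed, if $j \in C^*_{i^*}$ is uncovered then, by the algorithm, at $j$'s arrival $i^*$ must be inactive and $j$'s offering round walks through every inactive set containing $j$ without any of them activating (otherwise $j$ would end up in some $C_i$ as ``covered''). In particular $j$ is offered to $i^*$ and $i^*$ does not activate, so $j$ is among the ``wasted'' offers to $i^*$, i.e., offers received strictly before $i^*$ activates, or all offers received if $i^*$ never activates. By the activation rule, the number of wasted offers is at most $\lceil \tau_{i^*} \rceil - 1 \leq \tau_{i^*}$ if $i^*$ activates and at most $\tau_{i^*}$ otherwise.

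Summing over $i^* \in A^*$, using $\bar\tau_{i^*} \leq 1$ (hence $\tau_{i^*} \leq \tfrac{c_{i^*}\, \Opt_{\textnormal{OBCM}}}{2B}$) and $\sum_{i^* \in A^*} c_{i^*} \leq B$, yields
\[
\Opt_{\textnormal{OBCM}} - \Big|\bigcup_{i \in A} C_i\Big| \ \leq \ \sum_{i^* \in A^*} \tau_{i^*} \ \leq \ \frac{\Opt_{\textnormal{OBCM}}}{2B}\sum_{i^* \in A^*} c_{i^*} \ \leq \ \frac{\Opt_{\textnormal{OBCM}}}{2}\,,
\]
which rearranges to the claim. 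The main subtlety will be the per-set charging: one has to argue that a missed $j \in C^*_{i^*}$ really does reach $i^*$ in the offering order (rather than being intercepted by an earlier active set, which would have covered $j$), and that the resulting offers are bounded by the threshold $\tau_{i^*}$ rather than by the full offer count of $i^*$; both facts follow from the algorithm's rule that an offering round terminates as soon as any activation fires and covers the current element.
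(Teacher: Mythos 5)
Your proof is correct and follows essentially the same route as the paper: fix an optimal family $A^*$, observe that every element covered by $A^*$ but missed by the algorithm must be a ``wasted'' offer to some $i^*\in A^*$ that was still inactive at that element's arrival (the hypothesis $\sum_{i\in A}c_i < B$ guaranteeing that budget never blocks an activation), bound the wasted offers to each $i^*$ by $\tau_{i^*}\leq \tfrac{c_{i^*}\Opt_{\textnormal{OBCM}}}{2B}$, and sum over $A^*$ using $\sum_{i^*\in A^*}c_{i^*}\leq B$. If anything you are a bit more careful than the paper's write-up, explicitly excluding the activating offer (which is covered, not wasted) and handling the activated/non-activated cases separately to avoid an off-by-one.
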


\begin{proof}
    Let $A^*$ be the sets activated in the optimal solution. Consider an element $j$ which is covered by $A^*$, but not by $A$. Then for some $i \in A^* \setminus A$, we have $j \in S_i$. Moreover, we have $j \in O_i$, which is to say that $j$ was offered to $i$ in the algorithm, but not taken.
    
    Since  $\sum_{i \in A} c_i < B$ at the end of the algorithm, it must be the case that each $i$ has $|O_i| < \tau_i \leq \frac{c_i \Opt_{\textnormal{OBCM}}}{2B}$, because when $|O_i|$ reaches $\tau_i$, set $i$ activates and we stop adding elements to $O_i$. With these two observations, we obtain the bound
    \begin{align*}
        \Opt_{\textnormal{OBCM}} -\Big|\bigcup_{i \in A} C_i\Big| ~\leq~  \Big| \bigcup_{i \in A^*} S_i \setminus \Big(\bigcup_{i \in A} C_i\Big) \Big|
        ~\leq~ \Big| \bigcup_{i \in A^*} O_i \Big|
        ~\leq~ \sum_{i \in A^*} \frac{c_i \Opt_{\textnormal{OBCM}}}{2B}
        ~\leq~ \frac{\Opt_{\textnormal{OBCM}}}{2} \enspace.
    \end{align*}
    Hence, we conclude $|\bigcup_{i \in A} C_i| \geq \frac{\Opt_{\textnormal{OBCM}}}{2}$.
\end{proof}

Now consider the remaining case.

\textbf{Case 2 ($\E[\sum_{i \in A} c_i ]\geq \frac{B}{2}$):}
We prove a lower bound on the expected coverage of any set.
\begin{claim}\label{claim:warm-up-2}
    For each $i \in [m]$, we have $\E|C_i| \geq \Pr(i \in A) \cdot \frac{c_i \Opt_{\textnormal{OBCM}}}{8B\log m} - O(\frac{\Opt_{\textnormal{OBCM}}}{m^2})$.
\end{claim}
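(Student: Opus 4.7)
The plan is to fix $i$ and condition on the random thresholds $\bar{\tau}_{i'}$ for all $i' \neq i$, so that the only remaining randomness is $\bar{\tau}_i$. Compare the algorithm's \emph{actual} execution to an \emph{alternate} execution in which $i$'s threshold is replaced by $+\infty$ (so $i$ never activates), and let $N_i$ be the number of elements offered to $i$ in the alternate execution, with arrival times $T_1 < \cdots < T_{N_i}$. A first observation is that the two executions are identical strictly before time $T_{\lceil \tau_i \rceil}$: since $i$ is inactive in both and all other sets share the same thresholds, offerings and activations coincide up to that time. In particular, if $\tau_i \leq N_i$, then $i$ activates in the actual execution precisely at time $T_{\lceil \tau_i \rceil}$.

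The core coupling step is to show that every alternate-execution would-be-offer to $i$ at time $T_k$ with $k > \lceil \tau_i \rceil$ ends up in $C_i$ in the actual execution; together with the activating element, this yields the deterministic pointwise bound $|C_i| \geq (N_i - \tau_i)^+$. Fix such $k$ and let $j = T_k$. By definition $j \in S_i$, and in the alternate execution $j$ is not covered by any active set at time $T_k$; in particular, no set activated before $T_{\lceil \tau_i \rceil}$ contains $j$, as otherwise it would have covered $j$ in the alternate execution. In the actual execution, the only sets whose activation time could precede $i$'s must have been activated before $T_{\lceil \tau_i \rceil}$ (sets activated in $(T_{\lceil \tau_i \rceil}, T_k]$ are all later than $i$), and these sets coincide with the alternate execution. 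Hence $i$ is the earliest active set containing $j$ and $j \in C_i$. The main subtlety here is that the two executions can diverge after $T_{\lceil \tau_i \rceil}$ (once $i$ is active, elements of $S_i$ are covered by $i$ instead of being offered to later sets, which can alter subsequent activations), but the argument sidesteps this divergence by relying only on agreement of the two executions \emph{before} $T_{\lceil \tau_i \rceil}$.

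It remains to take the expectation using the explicit tail of $\bar{\tau}_i$, which satisfies $\Pr[\bar{\tau}_i \leq 1 - k/K] = 2^{-k}$ for $0 \leq k \leq K$ (with $K = \lceil 2 \log m \rceil$) and $\Pr[\bar{\tau}_i = 0] = 2^{-K} = O(1/m^2)$. Condition on $N_i = n_i$ and set $u = n_i/M_i$, where $M_i := c_i \Opt_{\textnormal{OBCM}}/(2B)$ and $k^* := \lfloor K(1-u) \rfloor$. Then $\Pr(i \in A \mid N_i = n_i) = \Pr[\bar{\tau}_i \leq u] = \Theta(2^{-k^*})$, while summing the positive contributions from the support points $\bar{\tau}_i = 1 - k/K$ with $k \geq k^* + 2$ in $\E[M_i(u - \bar{\tau}_i)^+ \mid N_i = n_i]$ yields $\Omega(2^{-k^*} \cdot M_i / K)$, matching $\Pr(i \in A \mid N_i) \cdot M_i / \Theta(K)$ up to constants. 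The corner regime $k^* \geq K - O(1)$ has $\Pr(i \in A \mid N_i) = O(1/m^2)$ and contributes only $O(\Opt_{\textnormal{OBCM}}/m^2)$ to the error, since $M_i \leq \Opt_{\textnormal{OBCM}}/2$. Taking expectation over $N_i$ and plugging in $K = \Theta(\log m)$ and $M_i = c_i \Opt_{\textnormal{OBCM}}/(2B)$ yields the claimed inequality.
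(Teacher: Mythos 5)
Your proof follows essentially the same coupling/threshold-tail strategy as the paper's, but the specific coupling you chose introduces a genuine gap: you ignore the budget check in the activation rule. Recall the algorithm only activates $S_i$ if $|O_i|\geq\tau_i$ \emph{and} $\sum_{\ell\in A}c_\ell\leq B$. In your alternate execution (where $i$'s threshold is set to $+\infty$), you define $N_i$ to be \emph{all} offers ever made to $i$, but this count continues to grow after the budget is exhausted, at which point no set can activate. Consequently the claim ``if $\tau_i\leq N_i$ then $i$ activates at $T_{\lceil\tau_i\rceil}$'' is false: if the alternate execution hits the budget cap before $T_{\lceil\tau_i\rceil}$, then (since the two executions agree up to that time) the actual execution is also over budget at $T_{\lceil\tau_i\rceil}$ and $i$ never activates, so $|C_i|=0$ even though $(N_i-\tau_i)^+>0$. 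Your pointwise bound $|C_i|\geq(N_i-\tau_i)^+$ therefore fails, and the identification $\Pr(i\in A\mid N_i)=\Pr[\bar\tau_i\leq N_i/M_i]$ is also wrong in this regime.

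The paper avoids this by coupling with the execution at $\tau_{\max}$, defined as the \emph{largest support value for which $i$ actually activates}; this definition absorbs the budget constraint automatically, and $|O_i^{\max}|=\lceil\tau_{\max}\rceil$ then measures offers up to a genuinely budget-feasible activation time. Your approach is salvageable: define $N_i^*$ to be the number of offers $i$ receives in the alternate execution \emph{before the budget is exhausted}; then $i\in A\iff\lceil\tau_i\rceil\leq N_i^*$, and your coupling argument (unchanged) gives $|C_i|\geq N_i-\tau_i\geq N_i^*-\tau_i$ on that event, so $\E|C_i|\geq\E[(N_i^*-\tau_i)^+]$ and $\Pr(i\in A)=\Pr[\tau_i\leq N_i^*]$; the remainder of your tail computation then goes through verbatim with $N_i$ replaced by $N_i^*$. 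As written, however, the proof is incorrect.
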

Before proving the claim, we complete the proof of the theorem. Notice that   \Cref{claim:warm-up-2} implies
\begin{align*}
\E|\bigcup_{i \in A} S_i| ~=~ \E \sum_{i \in [m]} |C_i| ~&\geq \sum_{i \in [m]} \left(\Pr(i \in A) \cdot \frac{c_i \Opt_{\textnormal{OBCM}}}{8B\log m} - O\Big(\frac{\Opt_{\textnormal{OBCM}}}{m^2}\Big)\right) \\
&\geq \frac{\Opt_{\textnormal{OBCM}}}{8B \log m} \cdot \frac{B}{2} - O\Big(\frac{\Opt_{\textnormal{OBCM}}}{m}\Big) ~\geq~ \frac{\Opt_{\textnormal{OBCM}}}{17\log m}\enspace ,
\end{align*}
where the second inequality uses that we are in Case 2 where  $\E\sum_{i \in A} c_i = \sum_{i \in A}\Pr(i \in A) \cdot c_i\geq \frac{B}{2}$.

Therefore, in either case our algorithm is $\frac{1}{O(\log m)}$-competitive,  which proves \Cref{thm:warm-up-packing}.
\end{proof}

Finally, we prove the missing \Cref{claim:warm-up-2}, where the idea is to analyze the randomness of threshold $\tau_i$ conditioned on the thresholds of every other set. 

\begin{proof}[Proof of \Cref{claim:warm-up-2}]
    We fix the choice of thresholds $\tau_{-i}$ for all sets except $i$, and consider how different values of $\tau_i$ change the behavior of the algorithm. First, notice that $C_i$ is monotonically decreasing in $\tau_i$. Intuitively, this is because the lower the value of $\tau_i$, the sooner set $S_i$ activates, and hence the large the coverage $C_i$ is on the remaining elements (we will prove this rigorously later in   \Cref{claim:mono-new2}). Hence, there is some $\tau_{\max} = \tau_{\max}(\tau_{-i})$ in the support of $\tau_i$ such that $i$ is activated (i.e., $|C_i| > 0$) if and only if $\tau_i \leq \tau_{\max}$. 

    The key observation is that $|C_i| \geq \tau_{\max} - \tau_i$, meaning the the number of elements covered by set $i$ is at least $\tau_{\max} - \tau_i$. To see this, suppose $\tau_i < \tau_{\max}$, and let $O_i$ and $O_i^{\max}$ denote the elements offered to $S_i$ when using threshold $\tau_i$ and $\tau_{\max}$, respectively. Observe the following: 
    \begin{itemize}
        \item $|O_i^{\max}| = \tau_{\max}$ and $|O_i| = \tau_i$, since we assume that $i$ activates in both cases.
        \item $O_i \subseteq O_i^{\max}$, since the threshold on $i$ does not affect $O_i$ until $i$ activates, after which $O_i$ is frozen.
        \item $O^{\max}_i \setminus O_i \subseteq C_i$, since any $j \in O^{\max}_i \setminus O_i$ is in $S_i$, but is not covered by a set activated before $S_i$.
    \end{itemize}
    From these observations, we conclude that if $\tau_i < \tau_{\max}$, we have
    \[
    |C_i| \geq |O^{\max}_i \setminus O_i| \geq \tau_{\max} - \tau_i .
    \]

    So, crucially, by the definition of the distribution on $\bar{\tau}_i$, whenever $\tau_i$ is \emph{strictly} less than $\tau_{\max}$, then we have
    \[
    |C_i| \geq \tau_{\max} - \tau_i \geq \frac{1}{\ceil{2\log m}} \cdot \frac{c_i \Opt_{\textnormal{OBCM}}}{2B} .
    \]

    Additionally, if we have $\tau_{\max} > 0$, then because of the distribution from which $\tau_i$ is sampled, we have $\Pr[\tau_i < \tau_{\max} \mid \tau_i \leq \tau_{\max}] = \frac{1}{2}$. This implies
    \begin{align*}
        \E [|C_i|\mid \tau_{-i}] &= \Pr\Big[\tau_i \leq \tau_{\max} \mid \tau_{-i}\Big] \cdot \Pr\Big[\tau_i < \tau_{\max} \mid \tau_i \leq \tau_{\max}\Big] \cdot \E\Big[|C_i| \mid \tau_i < \tau_{\max},~\tau_{-i}\Big]\\
        &\geq \Pr\Big[\tau_i \leq \tau_{\max} \mid \tau_{-i}\Big] \cdot \frac{1}{2} \cdot \frac{1}{\ceil{2\log m}} \cdot \frac{c_i \Opt_{\textnormal{OBCM}}}{2B} - \Pr\Big[\tau_i \leq \tau_{\max} \leq 0 \mid \tau_{-i}\Big]\cdot \Opt_{\textnormal{OBCM}}\\
        &\geq \Pr\Big[i \in A \mid \tau_{-i}\Big] \cdot \frac{c_i \Opt_{\textnormal{OBCM}}}{8 B \ceil{\log m}} - \frac{\Opt_{\textnormal{OBCM}}}{m^2}.
    \end{align*}
    Taking expectation over $\tau_{-i}$  completes the proof of the claim.
\end{proof}


\section{Reducing Online Generalized Scheduling to a Packing Problem}
\label{sec:reduction}
    As the online set cover example from the previous section illustrated, a key idea of our approach for solving online generalized scheduling is reducing it to a packing problem, where instead of scheduling all jobs while minimizing cost, we now seek to maximize the number of jobs scheduled subject to a cost constraint. We now first introduce this packing problem in \Cref{sec:problemDef}, and then in the remainder of the section prove the desired reduction, mirroring \Cref{sec:warmup:reduction}.

The core of our final argument will be to solve this packing problem, generalizing the ideas in \Cref{sec:warmup:packing}. We provide the respective packing algorithms and analyses in \Cref{sec:p-bound} for $p$-bounded functions and in \Cref{sec:norm} for norms.

\subsection{Problem Definitions} \label{sec:problemDef}
Now we consider our main problem of interest, \emph{Online Generalized Scheduling}, described in the introduction. Actually, we now define a more general version of the problem, where each job can be processed in multiple ways on the same machine, which is required for some of our applications.

\begin{definition}[Online Generalized Scheduling] \label{def:genSchedMultChoices}
    Let $f:\Rp^m \rightarrow \Rp$ be a monotone convex \emph{aggregate} function, and let $\|\cdot\|_i: \R^{nr} \rightarrow \Rp$ for $i \in [m]$ be monotone norms.     
    In this problem, \ref{eq:covGenSched}$_{f,\{\|\cdot\|_i\}_i}$, $n$ jobs arrive one at a time and must be scheduled on $m$ machines. Each job has at most $r$ ways of being scheduled on a given machine, where scheduling job $j \in [n]$ on machine $i \in [m]$ in way $k \in [r]$ incurs a load $p_{ijk} \in \Rp$.
    The goal is to find online an allocation $x$ of all jobs ($x_{ijk} \in \{0,1\}$ indicates whether $j$ was scheduled on $i$ in way $k$) to  
\begin{equation}\tag{Gen-Sched} \label{eq:covGenSched}
    \min 
    f(\Lambda(x)) \quad \text{where } \quad \textstyle \Lambda_i(x) := \|(x_{ijk} p_{ijk})_{jk}\|_i. 
\end{equation}
When $r = 1$, we write simply $x_{ij} := x_{ij1}$ and $p_{ij} := p_{ij1}$.
\end{definition}

We will reduce \ref{eq:covGenSched}  to a ``dual'' packing problem, where we seek to maximize the number of jobs scheduled subject to a cost constraint.  The natural associated packing problem is \emph{Online Schedule-Packing}, which corresponds to the OBCM problem in the online set cover example. Since we will only be approximately solving the packing problems, we define \emph{partial allocation/scheduling} to mean an allocation that does not need to schedule all the jobs.

\begin{definition}[Online Schedule-Packing] 
The input to this problem, \ref{eq:packGenSched}$_{f,\{\|\cdot\|_i\}_i}$, is the same as \ref{eq:covGenSched}, except that we are also given a \emph{budget} $B \in \Rp$. When job $j \in [n]$ arrives, the algorithm irrevocably decides whether to schedule it or not, and scheduling it on machine  $i \in [m]$ in way $k \in [r]$ incurs a load $p_{ijk} \in \R_{\ge 0} $.  The goal is to find online a \emph{partial allocation} $x$ maximizing the number of scheduled jobs  subject to the aggregate budget constraint:
\begin{equation} \label{eq:packGenSched}
  \max~  \sum_{i,j,k} x_{ijk}  \quad \text{ s.t. }  \quad  f(\Lambda(x)) \le B, \quad \text{ where } ~ \Lambda_i(x) = \|(x_{ijk} p_{ijk})_{j,k}\|_i. \tag{Sched-Pack} 
\end{equation}

\end{definition}

Finally, to solve \ref{eq:packGenSched}, we also need to know how to allocate jobs within a machine once it is active. This was trivial for OBCM, where  $\|\cdot\|_i$ are $\ell_\infty$ norms, since we allocate  every job to an active machine when feasible. However, for other norms, it is non-trivial to pack jobs on an active machine $i$ while maintaining the constraint  $\|(x_{ijk} p_{ijk})_{j,k}\|_i \le b_i$. Hence, we define Norm Packing problem to be the special case of  \ref{eq:packGenSched} with a single-machine. 

\begin{definition}[Norm Packing] \label{defn:normPack}
This problem, \ref{eq:normPack}$_{\|\cdot\|}$, is the special case of  \ref{eq:packGenSched} with a single-machine  and the aggregate function $f$ being the identity function,  namely:
\begin{equation}
  \max~  \sum_{j,k} x_{jk} ~\quad \textrm{ s.t. } \quad  \|(x_{jk} p_{jk})_{j,k}\| \leq B. \tag{Norm-Pack} \label{eq:normPack}
\end{equation}
\end{definition} 

Our main result of this section is  that there exists a good algorithm \ref{eq:covGenSched} if there exist good algorithms for the corresponding \ref{eq:packGenSched}  and  \ref{eq:normPack}$\|\cdot\|_i$ problems. This reduction will hold as long as the aggregate function satisfies a natural property, which we call $p$-subadditivity.

\subsection{ $p$-Subadditivity and $(\alpha, c)$-Solvability}

Before proving the reduction, we need to introduce an important property associated with the aggregate function $f$, which is of  \emph{$p$-subadditivity}.

\begin{definition}[$p$-subadditive]
    A monotone convex function $f: \Rp^m \to \Rp$ is \emph{$p$-subadditive} for $p \geq 1$ if $f^{1/p}$ is subadditive. That is, for all $x, y \in \R^m_+$, we have $f(x + y)^{1/p} \leq f(x)^{1/p} + f(y)^{1/p}$.
\end{definition}

This class of functions clearly captures captures all monotone norms for $p=1$. Now we will see that it also captures the class of all \emph{$p$-bounded} functions, as defined in \cite{AzarBCCCG0KNNP16}. Roughly, it means that the function does not grow faster than a degree $p$ polynomial in any direction.

\begin{definition}[$p$-bounded]\label{def:p-bounded}
    A monotone, differentiable, and convex function $f: \Rp^m \to \Rp$ is \emph{$p$-bounded} for $p \geq 1$ if $\nabla f$ is monotone increasing and $\frac{\langle\nabla f(y), y \rangle}{f(y)} \leq p$ for all $y \in \Rp^m$.    
\end{definition}

Besides many types of low-degree polynomials, it can also capture $\ell_p$ since $\|x\|_p^p$  is a $p$-bounded function. 

\begin{lemma}\label{lem:p-subadditive}
    Every $p$-bounded convex function $f : \Rp^m \rightarrow \R$ is $p$-subadditive.
\end{lemma}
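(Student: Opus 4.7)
The plan is to first establish a \emph{sub-$p$-homogeneity} property of $f$, namely that $f(tz) \leq t^p f(z)$ for all $t \geq 1$ and $z \in \Rp^m$, and then combine this with the convexity of $f$ to derive subadditivity of $f^{1/p}$.

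First, I would fix $z \in \Rp^m$ with $f(z) > 0$ and consider the single-variable function $g(t) := f(tz)$ on $t > 0$. Differentiating $g(t)/t^p$ and using the chain rule gives
\[
\frac{d}{dt}\!\left(\frac{g(t)}{t^p}\right) \;=\; \frac{t\, g'(t) - p\, g(t)}{t^{p+1}} \;=\; \frac{\langle \nabla f(tz), tz\rangle - p\, f(tz)}{t^{p+1}} \;\leq\; 0,
\]
where the inequality is exactly the $p$-boundedness condition applied at $y=tz$. Hence $g(t)/t^p$ is non-increasing on $(0,\infty)$, so evaluating at $t \ge 1$ against $t=1$ yields $f(tz) \leq t^p f(z)$, as desired.

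Next, given $x,y \in \Rp^m$ with $a := f(x)^{1/p} > 0$ and $b := f(y)^{1/p} > 0$, I would write
\[
x+y \;=\; \frac{a}{a+b}\cdot \frac{a+b}{a}\,x \;+\; \frac{b}{a+b}\cdot \frac{a+b}{b}\,y,
\]
so by convexity of $f$ and then sub-$p$-homogeneity (with $t=(a+b)/a \ge 1$ and $t=(a+b)/b \ge 1$),
\[
f(x+y) \;\leq\; \frac{a}{a+b}\, f\!\left(\tfrac{a+b}{a}\,x\right) + \frac{b}{a+b}\, f\!\left(\tfrac{a+b}{b}\,y\right) \;\leq\; \frac{a}{a+b}(a+b)^p + \frac{b}{a+b}(a+b)^p \;=\; (a+b)^p.
\]
Taking $p$-th roots gives $f(x+y)^{1/p} \le f(x)^{1/p} + f(y)^{1/p}$, which is exactly $p$-subadditivity.

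The only subtle step is the degenerate case where $f(x)=0$ or $f(y)=0$. I would handle this by a perturbation/continuity argument: replace $x$ by $x + \varepsilon \mathbf{1}$ (so that sub-$p$-homogeneity can be applied after rescaling by $(a_\varepsilon + b)/a_\varepsilon$ with $a_\varepsilon = f(x+\varepsilon\mathbf{1})^{1/p} > 0$), apply the inequality above, and send $\varepsilon \downarrow 0$, using continuity of the convex function $f$ on $\Rp^m$. I expect this edge case to be the only minor obstacle; the main content of the proof is the monotonicity of $t \mapsto f(tz)/t^p$, which is essentially a differential-inequality reformulation of $p$-boundedness, and this is what couples cleanly with convexity to yield subadditivity of $f^{1/p}$.
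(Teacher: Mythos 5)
Your proof is correct and follows essentially the same route as the paper: establish the sub-$p$-homogeneity $f(tz)\le t^p f(z)$ for $t\ge 1$, then apply convexity with the optimal split $\lambda = f(x)^{1/p}/(f(x)^{1/p}+f(y)^{1/p})$. The only differences are cosmetic — you prove the homogeneity inequality from scratch (via monotonicity of $t\mapsto f(tz)/t^p$) whereas the paper cites it as Lemma 4.a of \cite{AzarBCCCG0KNNP16}, and you explicitly address the degenerate case $f(x)=0$ or $f(y)=0$, which the paper leaves implicit.
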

\begin{proof}
     We use \cite[Lemma 4.a]{AzarBCCCG0KNNP16}, which states that for any $p$-bounded 
     $f:\Rp^m \rightarrow \R$ and any $\delta > 1$, we have $f(\delta x) \leq \delta^p f(x)$. Using this fact and convexity, we have for all $\lambda \in [0, 1]$ that
    \[ 
        f(x + y) ~=~ f\left(\lambda(\tfrac{x}{\lambda}) + (1-\lambda)(\tfrac{y}{1-\lambda})\right)
        ~\leq~ \lambda f(\tfrac{x}{\lambda}) + (1 - \lambda) f(\tfrac{y}{1-\lambda})
        ~\leq~ \lambda^{1-p}f(x) + (1-\lambda)^{1-p} f(y).
    \]
    Solving for the optimal $\lambda$, we find that $\lambda = \frac{f(x)^{1/p}}{f(x)^{1/p} + f(y)^{1/p}}$ gives the tightest inequality. Plugging in this value of $\lambda$, the RHS simplifies to $(f(x)^{1/p} + f(y)^{1/p})^p$, which gives the desired result.
\end{proof}

Finally, since in  \Cref{sec:p-bound} and \Cref{sec:norm} we will only be able to (approximately) solve \ref{eq:packGenSched} with a  violation in the budget constraint, we introduce the notion  ``solvability'' to capture   bi-criteria approximations.

\begin{definition}[$(\alpha, c)$-Solvable Online Packing]\label{def:online-packable}
    An instance  $I$ of \ref{eq:packGenSched} (which could be \ref{eq:normPack}) is called 
    \emph{$(\alpha, c)$-solvable} for $\alpha \in [0,1]$ and $c \geq 1$ if there is an online algorithm for $I$ that violates the budget $B$ by at most a factor $c$ (i.e., incurring load at most $cB$), and when given a lower bound $M \leq \Opt(I)$ up front, it obtains expected objective value at least $\alpha M$.
\end{definition}

For our reduction, we require as a condition that \ref{eq:normPack}$_{\|\cdot\|_i}$ is $(\alpha, c)$-solvable for the norms $\|\cdot\|_i$. Thus, as a necessary base case, we provide in \Cref{lem:single-machine} some simple and common norms satisfy this condition are given. See \Cref{app:single-machine} for a proof of these cases.

\begin{lemma}\label{lem:single-machine}
\ref{eq:normPack}$_{\|\cdot\|}$  is $(\alpha, c)$-solvable for the following norms:
    \begin{enumerate}
        \item For $\|\cdot\| = \|\cdot\|_\infty$, \ref{eq:normPack}$_{\|\cdot\|}$ is $(1, 1)$-solvable.
        \item When $\|\cdot\|$ is symmetric, \ref{eq:normPack}$_{\|\cdot\|}$ is $(\frac{1}{3}, 1)$-solvable.
        \item For $\|x\| = c\|x\|_\infty + \sum_{j,k} w_{jk} |x_{jk}|$ (e.g., a machine with activation and assignment costs), \ref{eq:normPack}$_{\|\cdot\|}$ is $(\frac{1}{3}, 2)$-solvable. Also, the algorithm is oblivious to the future values of $w_{jk}$.
    \end{enumerate}
\end{lemma}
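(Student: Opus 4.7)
The constraint $\|(x_{jk} p_{jk})_{j,k}\|_\infty \le B$ decomposes into independent coordinate constraints $p_{jk} \le B$ whenever $x_{jk}=1$. My algorithm would simply accept each arriving job $j$ via any way $k$ with $p_{jk} \le B$; since the offline optimum must also respect these per-coordinate constraints, this matches OPT exactly on the set of schedulable $(j,k)$ pairs, giving $(1,1)$-solvability with no need to know $M$.

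\textbf{Case 2 (symmetric $\|\cdot\|$).} The key tool is the scaling function $\phi(k) := \|(\underbrace{1,\ldots,1}_{k},0,\ldots,0)\|$; by monotonicity and symmetry, $\phi$ is non-decreasing. I first prove a ``domination inequality'': for any $M$ jobs scheduled by OPT with sorted loads $\ell_1 \ge \cdots \ge \ell_M$, we have $\ell_j \cdot \phi(j) \le B$, because the vector $\ell_j \cdot \mathbf{1}_{[j]}$ is dominated coordinate-wise by the sorted OPT load vector (whose norm is $\le B$ by symmetry). Setting $k^* := \lceil M/2 \rceil$ and $T := B/\phi(k^*)$ as the threshold, this immediately gives that at least $k^*$ of OPT's jobs have cost $\le T$ in their OPT way. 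My algorithm, given $M$, computes $T$ up-front and accepts each arriving job via any way with $p_{jk} \le T$, halting once $k^*$ have been accepted. The resulting load vector has at most $k^*$ non-zero entries each $\le T$, so (again by symmetry/monotonicity) its norm is $\le T \cdot \phi(k^*) = B$; and at least $k^* \ge M/3$ such jobs must arrive, yielding $(1/3,1)$-solvability.

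\textbf{Case 3 (startup $c$ plus weighted $\ell_1$).} The norm decomposes as a one-shot activation cost $c$ (paid the moment any job is accepted) plus additive weights $w_{jk}$. Assume $c \le B$, else nothing can be scheduled. The effective post-activation budget $B-c$ constrains a weighted $\ell_1$ sum, and the ``scaling function'' is $\phi(k) = k$; by a simple averaging argument, at least $M/2$ of OPT's jobs have their OPT way of weight $\le T := 2(B-c)/M$. My algorithm fixes $T$ from $(M,B,c)$ alone (hence remains oblivious to future $w_{jk}$) and greedily accepts each arriving job via any way with $w_{jk} \le T$, stopping once the total assignment weight would exceed $2(B-c)$. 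Since each accepted job's weight is at most $T$, at least $\lfloor 2(B-c)/T \rfloor \ge M/3$ jobs can be accepted before stopping, and the incurred load is at most $c + 2(B-c) \le 2B$, giving $(1/3,2)$-solvability.

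\textbf{Main obstacle.} The crux of the argument is Case 2: the coordinate-wise domination inequality $\ell_j \phi(j) \le B$, which is what simultaneously delivers (i) enough OPT jobs below the threshold $T$ and (ii) the non-violation $T \cdot \phi(k^*) \le B$ of the budget, entirely deterministically. Cases 1 and 3 are relatively direct adaptations of this template (with Case 3 needing a factor-$2$ slack to absorb a single overshooting job); all three cases share the same core structure of picking a threshold $T$ from the norm's scaling function and the given lower bound $M$, and then greedily accepting below-threshold arrivals.
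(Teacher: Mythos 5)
Your Cases 1 and 2 are correct and essentially match the paper's argument. In Case 2 you package the paper's contradiction argument into the cleaner coordinate-domination inequality $\ell_j\,\phi(j)\le B$ and use $k^* = \lceil M/2\rceil$ rather than $\lfloor M/2\rfloor$, but the algorithm (fixed threshold on $p_{jk}$, accept until roughly $M/2$ jobs are taken) and the role of symmetry and monotonicity are identical.

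Case 3 has a genuine gap. The constraint in \ref{eq:normPack} is $\|(x_{jk}p_{jk})_{j,k}\|\le B$, so the $\ell_\infty$ term contributes $c\cdot\max_{j,k} x_{jk}p_{jk}$ --- \emph{not} a flat one-shot cost $c$ paid the moment any job is accepted, as you assume. Your algorithm thresholds only on the weight (and even there on $w_{jk}$ rather than on the actual per-job contribution $w_{jk}p_{jk}$ to the $\ell_1$ part) and never inspects $p_{jk}$, so it cannot control the $c\cdot\max_{j,k} x_{jk}p_{jk}$ term: a job with tiny $w_{jk}$ but enormous $p_{jk}$ passes your filter and can drive the $\ell_\infty$ part far above $2B$. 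The same misreading also makes your averaging step unsound: OPT's $\ell_1$ budget is $B-c\max_{j,k}x^*_{jk}p_{jk}$, which can exceed $B-c$ when $\max p_{jk}<1$, so the threshold $2(B-c)/M$ does not guarantee that at least $M/2$ OPT jobs lie below it. The paper's algorithm fixes both issues by additionally requiring $c\,p_{jk}\le B$ on every accepted $(j,k)$ (which bounds the $\ell_\infty$ term) and by thresholding on $w_{jk}p_{jk}\le 2B/M$ (using the full budget $B$ rather than $B-c$).
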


In the remainder of the section, we reduce  \ref{eq:covGenSched} with a $p$-subadditive aggregate function to $(\alpha, c)$-solvability of a corresponding \ref{eq:packGenSched}  problem.  In \Cref{sec:p-bound} and \Cref{sec:norm}, we show how to solve the latter when the norm problems \ref{eq:normPack}$_{\|\cdot\|_i}$ are $(\alpha,c)$-solvable.

\subsection{Reduction from \ref{eq:covGenSched} to \ref{eq:packGenSched}}

We can now state our scheduling-to-packing reduction. 

\begin{theorem}\label{thm:packCovReduction}
        Suppose \ref{eq:packGenSched}$_{f, \{\|\cdot\|_i\}_i}$ is $(\alpha, c)$-solvable     
        for some $p$-subadditive aggregate function $f$ and norms $\|\cdot\|_i$. Then, there exists an $O\big((\frac{1}{\alpha} \cdot p \log n \cdot \log\!\log n)^p \cdot c\big)$-competitive algorithm for \ref{eq:covGenSched}$_{f, \{\|\cdot\|_i\}_i}$. Moreover, when $p=1$, there exists an improved $O(\frac{1}{\alpha}\cdot\log n \cdot c )$-competitive algorithm for \ref{eq:covGenSched}$_{f, \{\|\cdot\|_i\}_i}$. 
\end{theorem}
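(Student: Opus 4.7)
My plan is to follow the template already illustrated in the Online Set Cover warm-up (Lemma 2.3), suitably generalized so that the loads from multiple invocations of the packing algorithm can be aggregated through the $p$-subadditive function $f$. The rough idea is to repeatedly invoke the $(\alpha,c)$-packing algorithm on the jobs that remain unscheduled after previous invocations, using a budget $B$ that is a guess of $\OPT := \OPT(\ref{eq:covGenSched})$. By a standard guess-and-double over $\OPT$, at the cost of a constant factor in the number of guesses (and a factor $O(2^p)$ in cost incurred by the wrong guesses after aggregation), I may assume the correct value of $\OPT$ is known.

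\textbf{Main steps.} First, I run $N$ rounds of the packing algorithm in series, where round $i$ receives the set of jobs not yet scheduled together with the budget $B = \OPT$ and some lower bound $M_i$ on the OPT of the residual instance. Because each round schedules an expected $\alpha M_i$ jobs, the OPT of the residual decays geometrically in expectation (per round) by a factor of $1 - \Omega(\alpha)$ --- this is exactly the content the authors are after in the draft Lemma~1.4, and I would import its proof idea of breaking the arrival stream into chunks of OPT-mass $\approx M_i$. Iterating, $N = O(\log n / \alpha)$ rounds suffice (for $p=1$) to make the expected OPT of the residual less than $\tfrac12$, and a final mop-up round handles the few residual jobs with only a constant overhead. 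Second, I combine the partial allocations $x^{(1)}, \dots, x^{(N)}$ into $x = \sum_i x^{(i)}$. By the triangle inequality for each $\|\cdot\|_i$ and monotonicity of $f$,
\[
f(\Lambda(x)) \;\leq\; f\Big(\tsum_i \Lambda(x^{(i)})\Big),
\]
and by $p$-subadditivity of $f$ applied inductively,
\[
f\Big(\tsum_i \Lambda(x^{(i)})\Big)^{1/p} \;\leq\; \tsum_i f(\Lambda(x^{(i)}))^{1/p} \;\leq\; N \cdot (cB)^{1/p},
\]
so $f(\Lambda(x)) \leq N^p \cdot c \cdot \OPT$. For $p = 1$ this already gives the improved bound $O(c \log n / \alpha)$ claimed in the second half of the theorem.

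\textbf{Handling the randomness and the extra $p\log\!\log n$ factor.} The delicate point is that the OPT of each residual instance is a random variable depending on the previous rounds, while the packing algorithm demands a \emph{valid} lower bound $M_i$ up front. For $p=1$ the draft Lemma~1.4 routes around this by giving only an \emph{expectation} guarantee on the residual OPT, which aggregates cleanly because the aggregate $\|\cdot\|$ is itself subadditive. For general $p$-subadditive $f$, a naive expectation bound is not enough because $f$ can grow superlinearly, so I expect one needs a boosted, high-probability version: run $O(p\log\!\log n)$ parallel copies of the packing algorithm inside each round, keeping the best outcome, so that the probability of under-performance per round is driven below $1/\poly(n)$ and a union bound then guarantees convergence. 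After $N = O(p \log n \log\!\log n/\alpha)$ total rounds, $p$-subadditive aggregation yields a cost of $N^p \cdot c \cdot \OPT$, matching the stated bound.

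\textbf{Main obstacle.} The principal technical difficulty is precisely this interaction between randomness and $p$-subadditivity: unlike in the norm case, a single ``bad'' round where the packing algorithm under-delivers can be amplified by $f$, because $f$ may be polynomial of degree $p$ in the scale. The Lemma~1.4 style argument must be strengthened from expectation to high probability, and this is where I anticipate the extra $p \log\!\log n$ factor to arise. Once that probabilistic control is in place, the triangle-inequality / $p$-subadditivity telescoping above is essentially mechanical, and the doubling search over $\OPT$ is standard.
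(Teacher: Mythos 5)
Your overall template matches the paper's: guess-and-double on $\OPT$ (at cost $2^p$), repeatedly invoke the $(\alpha,c)$-packing algorithm on the residual of unscheduled jobs, then combine the partial allocations via $p$-subadditivity of $f$ and subadditivity of the $\Lambda$ map. You also correctly identify the central difficulty: for $p>1$, the $p$-subadditive aggregation amplifies any ``overrun'' in the number of invocations, so an expectation bound on the residual is not sufficient and something stronger than the $p=1$ argument is needed.

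However, the resolution you sketch for general $p$ has concrete gaps. First, ``run $O(p\log\log n)$ parallel copies \ldots keeping the best outcome'' is not online-feasible: you cannot hedge by running several online algorithms and retroactively select the one that performed best, because each copy has already committed to irrevocable scheduling decisions and incurred cost. What the paper does (and what works) is run agents \emph{sequentially}, feeding the residual of one to the next. Second, your accounting for the extra $p$ factor does not check out. With the $(\alpha,c)$-solvability guarantee and a martingale bound, $T$ sequential agents drive per-round failure probability down to roughly $e^{-\Omega(\alpha T)}$, so $O(p\log\log n)$ copies would yield failure probability around $\log^{-p} n$, not $1/\poly(n)$; and there is no reason the union bound requires $1/\poly(n)$, since there are only $O(\log n/\alpha)$ rounds to union-bound over. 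The paper instead obtains the $p$ factor from a clean two-level structure: an inner procedure (roughly your decay argument, with $O(\log\log n/\alpha)$ agents per halving group and a Freedman-type martingale inequality to control correlations between rounds) succeeding with probability $1/2$, and an outer loop that repeats this procedure $\tau$ times until all jobs are scheduled, with $\E[\tau^p]=O(p)^p$ controlled by integrating the exponential tail of $\tau$. You gesture at the Lemma~1.4 idea for handling the random residual optima, but do not supply the actual concentration tool (Freedman/Bernstein for martingales) needed to make it rigorous when the rounds are correlated. For $p=1$ your sketch essentially matches the paper's improved argument, but for $p>1$ the proposal as written would need these pieces filled in before it could be considered a proof.
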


We now prove  this theorem for general $p$, deferring the improved bound for $p=1$ to \Cref{app:normReduction}. 

Consider an instance $I_{OGS}$ of \ref{eq:covGenSched} that we wish to approximate using the assumed algorithm for \ref{eq:packGenSched}. We first note that, by incurring an additional loss, we can assume that we have an estimate of the optimum $\Opt_{OGS}$ of $I_{OGS}$.

\begin{claim}\label{lem:p-bound-doubling}
    Suppose that there is an algorithm for \ref{eq:covGenSched} with $p$-subadditive aggregate function that is $\beta$-competitive given an estimate $\widehat{\Opt}_{OGS}$ such that $\Opt_{OGS} \leq \widehat{\Opt}_{OGS} \leq 2^p \cdot \Opt_{OGS}$. Then there is also a $\beta 2^p$-competitive algorithm without such an estimate.
\end{claim}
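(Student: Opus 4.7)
The plan is a standard guess-and-double argument on the unknown $\Opt_{OGS}$, with $p$-subadditivity of $f$ used to cleanly aggregate per-phase costs. First I would initialize $\widehat{\Opt}_0$ to be any trivial lower bound on $\Opt_{OGS}$ (for instance, the minimum $f$-value needed to schedule just the first arriving job, which is necessarily a lower bound). I would then partition the online stream into phases: phase $i$ runs a fresh instance of the assumed $\beta$-competitive algorithm on estimate $\widehat{\Opt}_i := 2^{ip}\widehat{\Opt}_0$, with arriving jobs routed to the currently open phase. As soon as scheduling the next job in the current phase would push its cost $f(\Lambda(x^{(i)}))$ above $\beta\widehat{\Opt}_i$, I close phase $i$ and reroute that triggering job to a new phase $i+1$ with the larger estimate $\widehat{\Opt}_{i+1}=2^p\widehat{\Opt}_i$. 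The output is the combined allocation $x=\sum_i x^{(i)}$.

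For correctness, the first key observation is that every closed phase $i$ must satisfy $\widehat{\Opt}_i < \Opt_{OGS}$ --- otherwise the assumed $\beta$-guarantee applied to phase $i$'s subinstance would have kept its cost $\leq \beta\widehat{\Opt}_i$ and the phase would not have closed. Consequently, the last (still-open) phase $L$ has $\widehat{\Opt}_L \leq 2^p\Opt_{OGS}$, so the $\beta$-guarantee applies there as well; combined with the closure rule for earlier phases, this gives $f(\Lambda(x^{(i)})) \leq \beta\widehat{\Opt}_i$ for every $i$.

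The second key observation is that $p$-subadditivity aggregates the per-phase costs cleanly. Since the phases partition the jobs, the $x^{(i)}$'s have disjoint supports in $j$, so the triangle inequality for each $\|\cdot\|_i$ gives $\Lambda(x) \leq \sum_i \Lambda(x^{(i)})$ coordinatewise. Combining monotonicity of $f$ with $p$-subadditivity,
\[
f(\Lambda(x))^{1/p} \;\leq\; f\bigl(\textstyle\sum_i \Lambda(x^{(i)})\bigr)^{1/p} \;\leq\; \sum_i f(\Lambda(x^{(i)}))^{1/p} \;\leq\; \sum_{i \leq L}(\beta\widehat{\Opt}_i)^{1/p}.
\]
Because $\widehat{\Opt}_i^{1/p} = 2^i\widehat{\Opt}_0^{1/p}$, this is a geometric series with ratio $2$ dominated up to an absolute constant by its final term, giving $f(\Lambda(x))^{1/p} = O(\beta^{1/p}\widehat{\Opt}_L^{1/p})$. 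Raising to the $p$-th power and using $\widehat{\Opt}_L \leq 2^p\Opt_{OGS}$ yields the claimed $f(\Lambda(x)) = O(\beta \cdot 2^p\Opt_{OGS})$.

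The main obstacle will be handling phase closure carefully enough that each closed phase's cost is truly at most $\beta\widehat{\Opt}_i$: one must detect the impending overshoot \emph{before} actually assigning the triggering job, then divert that job to the next phase rather than allow an unbounded overshoot on the current one. A secondary and more routine concern is securing a valid initial lower bound $\widehat{\Opt}_0$, which the first-job trick handles.
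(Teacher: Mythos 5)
Your proposal uses the same guess-and-double skeleton and the same $p$-subadditivity aggregation as the paper, but the \emph{trigger for phase closure is different in a way that creates a genuine gap}. You close phase $i$ when scheduling the next job would push the phase's cost above $\beta\widehat{\Opt}_i$, and then argue that any closed phase must have $\widehat{\Opt}_i < \Opt_{OGS}$, ``otherwise the assumed $\beta$-guarantee applied to phase $i$'s subinstance would have kept its cost $\leq \beta\widehat{\Opt}_i$.'' This step does not go through: the $\beta$-guarantee requires the estimate to be a \emph{valid two-sided estimate of the optimum of the instance the algorithm sees}, namely $\Opt(\cI_i) \leq \widehat{\Opt}_i \leq 2^p\Opt(\cI_i)$. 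Even when $\widehat{\Opt}_i \geq \Opt_{OGS}$, the subinstance $\cI_i$ fed to phase $i$ can have $\Opt(\cI_i)$ far below $\widehat{\Opt}_i/2^p$ (for instance, $\cI_i$ might be a small tail of cheap jobs), so the upper-bound condition $\widehat{\Opt}_i \leq 2^p\Opt(\cI_i)$ fails and the $\beta$-guarantee says nothing. Without it, the phase can still close, and your chain of estimates $\widehat{\Opt}_L \leq 2^p\Opt_{OGS}$ — which is essential for the final bound — is not justified. Put differently, the claim as stated gives no control on the algorithm's cost when the supplied estimate overshoots the subinstance optimum by more than $2^p$, so a cost-based trigger cannot distinguish ``estimate too low'' from ``estimate too high for this subinstance.''

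The paper avoids this by triggering on the \emph{observable hindsight optimum} of the jobs seen so far, rather than on the incurred cost, and by feeding each phase's algorithm \emph{all jobs seen from the start} (using only its decisions on the current phase's jobs). With that trigger, $\widehat{\Opt}$ is always sandwiched as $\Opt(\cI_1\cup\cdots\cup\cI_i) \leq \widehat{\Opt} \leq 2^p\Opt(\cI_1\cup\cdots\cup\cI_i)$ throughout phase $i$, so the $\beta$-guarantee applies in every phase; the phase cost is then $\leq \beta\cdot 2^{-p(k-i)}\Opt_{OGS}$ and the geometric sum gives the $\beta 2^p$ bound. Your $p$-subadditivity computation at the end is correct and essentially identical to the paper's; the fix you need is to replace the cost-based trigger with the hindsight-optimum-based one (and feed all prior jobs into each restart) so that the estimate validity is maintained.
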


The proof is deferred to \Cref{app:doubling}. It uses a standard guess-and-double approach, together with the $p$-subadditivity of the aggregate function (\Cref{lem:p-subadditive}) to combine the solutions obtained in each doubling phase.

Thus, assume we have an estimate $\widehat{\OPT}_{OGS}$ from \Cref{lem:p-bound-doubling}. Let $\cA_{SP}$ denote the algorithm for \ref{eq:packGenSched} assumed in the statement of theorem due to $(\alpha, c)$-solvability. We first show that we can this algorithm to find a schedule for $I_{GS}$ with cost at most  $O\big((\frac{1}{\alpha}\cdot \log n \log\!\log n)^p \cdot c\big)\, \Opt_{OGS}$ with probability $\frac{1}{2}$; later we will ``iterate'' this procedure to boost the success probability to $1$.

\begin{lemma}\label{lemma:packCovReduction1}
    Using the algorithm $\cA_{SP}$, there is a randomized algorithm such that for any subinstance (i.e., subset of the jobs) $I' \subseteq I_{OGS}$, it creates a partial assignment $X$ of the jobs in $I'$ satisfying:

    \begin{enumerate}
        \item The cost of the assignment satisfies $f(\Lambda(X)) \le  O\big( (\frac{1}{\alpha}\cdot \log n \log\!\log n)^p \cdot c\big)\,\Opt_{OGS}$, with probability 1. 

        \item With probability at least $\frac{1}{2}$, $X$ assigns all jobs of $I'$.
    \end{enumerate}
\end{lemma}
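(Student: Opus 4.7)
The plan is to construct the algorithm as a sequence of $K = \Theta\!\left(\tfrac{p \log n \cdot \log\!\log n}{\alpha}\right)$ ``layers'', where each layer runs an independent instance of the packing algorithm $\cA_{SP}$ guaranteed by $(\alpha, c)$-solvability, using budget $B = \widehat{\Opt}_{OGS}$ and processing only those jobs of $I'$ not already scheduled by earlier layers.

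First I would observe that, since $B \ge \Opt_{OGS} \ge \Opt_{OGS}(I')$, for any sub-subset $I'' \subseteq I'$ the packing optimum $\Opt(I'')$ equals $|I''|$: the OGS-optimal assignment of $I_{OGS}$, restricted to $I''$, has aggregate cost at most $\Opt_{OGS} \le B$ while scheduling every job of $I''$. So the ``packing optimum'' of the (random) subinstance seen by any layer is just its number of unscheduled jobs, giving a usable handle on the random optima.

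The main technical issue is that $\cA_{SP}$ requires a valid lower bound $M \le \Opt$ given up front, whereas the subinstance $I'_k$ at layer $k$ arrives online and depends on previous layers' coin flips. I would handle this by a dynamic partitioning inside each layer: maintaining an inductive upper bound $L_k$ on $\mathbb{E}|I'_k|$, set $M_k = \lfloor L_k/2 \rfloor$ and split the stream of layer $k$ into sub-streams $I_{k,1}, I_{k,2}, \ldots$, opening a new sub-stream every time the observed packing optimum of the prefix reaches the next multiple of $M_k$. Each sub-stream has at least $M_k$ schedulable jobs, so feeding it to a fresh copy of $\cA_{SP}$ with estimate $M_k$ yields the $\alpha M_k$-guarantee. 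A Markov-style calculation (splitting on the event $\Opt(I'_k) \ge M_k$ and using that $\Opt(I'_k) \le (r+1)M_k \le 2rM_k$ whenever at least one sub-stream fires) then gives the recursion
\[
\mathbb{E}|I'_{k+1}| \;\le\; (1-\tfrac{\alpha}{4})\, L_k \;=:\; L_{k+1}.
\]
After $K = O(\log n/\alpha)$ layers, $L_{K+1} < 1/2$, and Markov's inequality gives that all of $I'$ is scheduled with probability at least $1/2$, establishing Claim 2.

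For Claim 1, every sub-stream deterministically uses aggregate cost at most $cB$, and within each layer the expected number of sub-streams is $O(1)$ (because $\mathbb{E}|I'_k| \le L_k = 2M_k$). Truncating each layer after $\Theta(\log\!\log n)$ sub-streams makes the \emph{worst-case} cost bound hold; a Markov tail bound on the per-layer sub-stream count combined with a union bound over the $K$ layers shows that the probability lost to truncation is absorbed into the $1/2$ success probability already in play. Consequently, $X$ decomposes as at most $N = O(K \log\!\log n)$ partial assignments $X^{(k,r)}$ each with $f(\Lambda(X^{(k,r)})) \le cB$. Using coordinate-wise sub-additivity of each $\|\cdot\|_i$, monotonicity of $f$, and $p$-sub-additivity of $f$,
\[
f(\Lambda(X))^{1/p} \;\le\; f\Big(\textstyle\sum_{k,r}\Lambda(X^{(k,r)})\Big)^{1/p} \;\le\; \sum_{k,r} f(\Lambda(X^{(k,r)}))^{1/p} \;\le\; N \cdot (cB)^{1/p},
\]
so $f(\Lambda(X)) \le N^p \cdot c B = O\!\left(\big(\tfrac{p \log n \cdot \log\!\log n}{\alpha}\big)^p \cdot c\right) \cdot \widehat{\Opt}_{OGS}$ holds with probability one, as required.

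The step I expect to be the main obstacle is making the per-layer budget bound \emph{almost sure} rather than just in expectation without sacrificing the $(1-\alpha/4)$-shrinkage of $\mathbb{E}|I'_k|$; the sub-stream truncation above is precisely where the extra $\log\!\log n$ factor enters the competitive ratio.
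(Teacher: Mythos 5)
Your layered/sub-stream construction is essentially the ``det-to-rand'' reduction the paper uses for the \emph{expectation} version (\Cref{lemma:det-to-rand-pack} in Appendix~B.4, used for $p=1$), and you correctly identify the crux: Claim~1 must hold \emph{with probability $1$}, not just in expectation, because the proof of \Cref{thm:packCovReduction} multiplies a deterministic per-agent cost bound by $\E[\tau^p]$.

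The problem is that your fix — truncate each layer at $T = \Theta(\log\!\log n)$ sub-streams and absorb the lost probability via ``Markov plus union bound'' — does not close the gap. The sub-stream count in layer $k$ is a deterministic function of $|I'_k|$, and the only control you have on $|I'_k|$ is the expectation bound $\E|I'_k|\le L_k$ coming from the recursion. Markov then yields only $\Pr[r_k > T] = O(1/T)$, and a union bound over $K = \Theta(\log n/\alpha)$ layers needs $T = \Omega(K) = \Omega(\log n /\alpha)$ to bring the total failure probability down to a constant. That produces $\Theta\big((\log n/\alpha)^2\big)$ partial assignments and hence a cost bound of $O\big((\log n/\alpha)^{2p} c\big)\Opt_{OGS}$, which is strictly worse than the $O\big((\tfrac{1}{\alpha}\log n \log\!\log n)^p c\big)\Opt_{OGS}$ the lemma claims. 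With only a first-moment bound on the per-layer residual, the $\log\!\log n$ factor cannot be recovered this way.

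The paper avoids the issue by making the agent count \emph{deterministic from the start}: it runs $\log n + 1$ groups of $N = \Theta(\log\!\log n/\alpha)$ agents, every agent in group $k$ using the \emph{fixed} estimate $M_k = n/2^k$ (no dynamic sub-stream resets at all), so the cost bound $f(\Lambda(X)) \le (N(\log n+1))^p\,c\,\widehat{\OPT}_{OGS}$ holds surely by $p$-subadditivity. Claim~2 is then proved via a Freedman-type martingale concentration inequality (\Cref{thm:mart}) applied within each group: conditioned on the group receiving at most $n/2^{k-1}$ jobs, the fractions scheduled by consecutive agents form a process whose conditional means are each $\ge \alpha/2$ until the group has halved its load, so with probability $\ge 1 - \tfrac{1}{2(\log n + 1)}$ the group hands off at most $n/2^k$ unscheduled jobs; a union bound over groups finishes. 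That martingale concentration — replacing your Markov step — is exactly the ingredient needed to get the $\log\!\log n$ rather than $\log n$ factor per layer.
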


We discuss the main ideas of the proof of this lemma here, and defer the details to \Cref{app:cov-to-pack-proof}.

\begin{proof}[Proof Sketch of \Cref{lemma:packCovReduction1}]
    Fix an Online Generalize Scheduling subinstance $I' \subseteq I_{OGS}$. In order to try to schedule all the jobs in $I'$, we will run $N \cdot (\log n + 1)$, where $N := \frac{10\ln (2\log n) + 2}{\alpha}$, copies (henceforth referred to as ``agents'') of the packing algorithm $\cA_{SP}$. We group consecutive agents into $(1+\log n)$ groups, each with $N$ agents. Each agent in the $k$th group uses $M_k := \frac{n}{2^k}$ as the required lower bound on the optimum of the instance comprising the jobs it receives. Each of these agents runs on its own instance; all of these instances have the same aggregate function $f$ and  norms $\|\cdot\|_i$ as in the original instance $I_{OGS}$, but now have an aggregate load budget of $B = \widehat{\OPT}_{OGS}$. 

    The algorithm for trying to schedule the jobs from the instance $I'$ is then the following: when a new job from $I'$ arrives, we first send it to (the instance of) the first agent; if this job is not scheduled by the agent, we send it to the next agent, and so on. Notice that because of how the agents are grouped, a job is offered first to all agents of group 1, then to all agents of group 2, etc. This  gives a (partial) scheduling of the jobs in $I'$. 

    The claim is that with probability $\frac{1}{2}$ this process schedules all jobs from $I'$, and the final schedule always has cost $O(\frac{c \log n \log\!\log n}{\alpha})^p \cdot \widehat{\OPT}_{OGS}$. The high-level intuition is: assuming for now that the $\cA_{SP}$ is an $\alpha$-approximation for the packing problem \ref{eq:packGenSched} without requiring a lower bound on the optimum (so we do not need to worry about the correctness of the estimates $M_k$), each of the agents in a group is expected to assign an $\alpha$-fraction of the jobs it receives, and thus (further assuming independence between agents, for now) a group leaves unassigned $\lesssim (1-\alpha)^{10\log\!\log n/\alpha} \ll \frac{1}{2}$ of the jobs it receives. Because of the extra slack in the exponent, by a martingale tail bound, this upper bound holds with probability at least $1 - \frac{1}{2(\log n + 1)}$. Taking a union bound over the $\log n + 1$ groups, with probability at least $\frac{1}{2}$ this halving of the unscheduled jobs holds for all groups and thus jobs are actually scheduled. Since the assignment of each agent fits within a blown up load budget $c \, \widehat{\OPT}_{OGS}$, using the $p$-subadditiviy of both the aggregate functions (\Cref{lem:p-subadditive}) and additivity of the  norms, the union of the assignment of these agents has cost $O(\frac{\log n \log\!\log n}{\alpha})^pc \cdot \widehat{\OPT}_{OGS}$, essentially giving the bound claimed in the theorem. A main complication for formalizing this idea is the possible randomness of the agents, so the instance seen by the $k$th agent is random and depends on the actions of agents $1,2,\ldots, k-1$, introducing correlations in this process. 
\end{proof}

    Using this lemma we now prove \Cref{thm:packCovReduction}.

    \begin{proof}[Proof of \Cref{thm:packCovReduction}]
    In order to schedule all the jobs of the instance $I_{OGS}$ with probability 1 (instead of $\frac{1}{2}$ as in the previous lemma) we use a similar repetition idea as in the latter. More precisely, let $\cA_{partial}$ denote the algorithm from \Cref{lemma:packCovReduction1}. We run in sequence infinitely many agents, each using an instantiation of the algorithm $\cA_{partial}$, where an incoming job from $I_{OGS}$ is sent to the first agent, and if it is not scheduled by it we send the job to the next agent, and so on. 
    
    We note that \emph{all jobs} of the instance $I_{OGS}$ are scheduled by this procedure with probability 1, since there are as many agents as needed for that (this can be made formal using \eqref{eq:tailTau} below). Thus, to prove \Cref{thm:packCovReduction}, it suffices to control the total load of the schedule produced. At a high-level, the guarantee in Item 2 of \Cref{lemma:packCovReduction1} indicates that we actually only need $2$ agents in expectation to schedule all the jobs, and the guarantee in Item 1 helps controlling the total cost of the combined schedule of these agents. 

    To make this formal, let $X^i$ denote the schedule made by the $i$th agent, and $X$ denote the schedule made by our whole procedure, namely $X = \sum_i X^i$. Also, define $\tau$ to be the index of the last agent that receives some job to be scheduled (so $X = \sum_{i \le \tau} X^i$). \Cref{lemma:packCovReduction1} guarantees that the load of the schedule of each agent satisfies  $f(\Lambda(X^i)) \le  O(\frac{\log n \log\!\log n}{\alpha})^p c \cdot \Opt_{OGS}$. Thus, using $p$-subadditivity of the aggregate function $f$ (\Cref{lem:p-subadditive}) and subadditivity of $\Lambda$ (by subadditivity of norms), the cost of the total schedule satisfies 
    \begin{align}
        f(\Lambda(X)) = f\bigg(\Lambda\bigg(\sum_{i \le \tau} X^i\bigg)\bigg) \le \bigg( \sum_{i \le \tau} f(\lambda(X^i))^{1/p}\bigg)^p \le \tau^p \cdot O\bigg(\frac{\log n \log\!\log n}{\alpha}\bigg)^p c \cdot \Opt_{OGS}. \label{eq:numCopies}
    \end{align}
    We now need to bound the number of used agents $\tau$. For that, we use a martingale concentration bound proved in \cite{cicalese20a}, which we slightly extend in \Cref{app:conc}.

    Let $Y_i$ be the indicator that the $i$th agent assigned all jobs it received; thus, $\tau$ is at the first index $i$ where $Y_i = 1$. Let $\cF_{i-1}$ be the $\sigma$-algebra generated by the agents $1,2,\ldots,i-1$. Notice that $\cF_{i-1}$ fixes the subinstance of $I_{OGS}$ that is given as input to the $i$th agent, and therefore we can apply   \Cref{lemma:packCovReduction1} conditionally to obtain that $\E [Y_i \mid \cF_{i-1}] \ge \frac{1}{2}$. Thus, the event ``$\tau > v$'' is equivalent to the event ``$Y_1 + \ldots + Y_v \le 0 \textrm{ and } \sum_{i \le v} \E [Y_i \mid \cF_{i-1}] \ge \frac{v}{2}$'', so applying martingale concentration from \Cref{thm:mart}   gives
    \begin{align}
    \Pr(\tau > v) \,=\, \Pr\bigg(Y_1 + \ldots + Y_v \le 0 \textrm{ and } \sum_{i \le v} \E [Y_i \mid \cF_{i-1}] \ge \frac{v}{2} \bigg) \,\le\, \exp\bigg({-\frac{3}{14} \frac{v}{2}}\bigg)  \label{eq:tailTau}
    \end{align}
    for $v \ge 2$. Using this, we can bound $\E [\tau^p]$ by integrating tails:
    \begin{align*}
        \E [\tau^p] = \int_0^\infty v^{p-1} \cdot \Pr(\tau \ge v) \,\d v \,\le\, \int_0^2 v^{p-1} \,\d v + \int_2^\infty v^{p-1} \cdot e^{-\frac{3v}{28}}  \,\d v \,\le\, \frac{2^p}{p} + \frac{p!}{(3/28)^{p+1}},
    \end{align*}
    where the last inequality follows because the second integral is at most $\frac{1}{(3/28)}$ times the $p$-th moment of the exponential distribution with rate parameter $\frac{3}{28}$, which is ${p!}\cdot{(\frac{3}{28})^{-p}}$. 

    Since this last upper bound is $O(p)^p$, plugging it into \eqref{eq:numCopies} we obtain that the expected load of the schedule $X$ is upper bounded $\E f(\lambda(X)) \le O(\frac{1}{\alpha}\cdot p c \log n \log\!\log n)^p \cdot \Opt_{OGS}$. This concludes the proof of \Cref{thm:packCovReduction}. 
    \end{proof}


\section{Convex Aggregate Function with Monotone Gradients} \label{sec:p-bound}
    The last section reduced  generalization scheduling (\ref{eq:covGenSched}) to  schedule packing (\ref{eq:packGenSched}), where instead of scheduling all jobs while minimizing cost, we seek to maximize the number of jobs scheduled subject to a cost constraint. Now we design bi-criteria approximation algorithms for \ref{eq:packGenSched}.

Recall that an instance of \ref{eq:packGenSched} is 
    \emph{$(\alpha, c)$-solvable}  if  given a lower bound $M$ on $\OPT_{SP}$ we can schedule $\alpha M$ jobs while violating the budget $B$ by a factor of $c$. 
    The following is the main result of this section.

\begin{theorem}\label{thm:pboundPackSched}
Consider an instance of  \ref{eq:packGenSched}$_{f, \{\|\cdot\|_i\}_i}$ where the aggregate function $f$ is $p$-bounded and  \ref{eq:normPack}$_{\|\cdot\|_i}$ is $(\alpha, c)$-solvable for each norm $\|\cdot\|_i$. Then, \ref{eq:packGenSched}$_{f, \{\|\cdot\|_i\}_i}$ is $(\Omega(\frac{\alpha}{\log^2 m}), (3pc)^p)$-solvable. 
\end{theorem}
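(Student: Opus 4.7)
The plan is to reduce \ref{eq:packGenSched} to an intermediate budgeted version that decouples the machines, and then solve this via the random-threshold-activation scheme from the warm-up. First I would introduce a per-machine budgeted variant in which each machine $i$ is given an individual load budget $b_i$ in place of the global constraint $f(\Lambda) \le B$. Because the machines decouple under per-machine budgets, this budgeted problem reduces trivially to one invocation of the assumed $(\alpha,c)$-solvable \ref{eq:normPack}$_{\|\cdot\|_i}$ oracle per machine, with budget $b_i$. The remaining work is to pick the right budgets $b_i$ online and glue these oracles together.

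For each machine $i$ I would create $O(\log m)$ candidate budget levels $b$ in a geometric scale covering the relevant range of per-machine loads, and treat each pair $(i,b)$ as a ``virtual machine'' running its own independent copy of the norm-packing oracle. The key quantity, enabled by the $p$-boundedness of $f$ and the $p$-subadditivity it implies (\Cref{lem:p-subadditive}), is the ``activation cost'' $\phi_i(b) := f(b\, e_i)^{1/p}$: if the set $A$ of activated virtual machines satisfies $\sum_{(i,b) \in A} \phi_i(b) \lesssim B^{1/p}$, then the resulting aggregate load $\Lambda$ satisfies $f(\Lambda) \lesssim B$ (up to the oracle's $c$-factor, absorbed later). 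With this definition in hand, I would instantiate \Cref{alg:setCoverPacking} on the virtual-machine system, taking $\phi_i(b)$ in place of the ``set cost'' $c_i$ and $\widetilde B := B^{1/p}$ in place of the ``total budget''. A job not yet scheduled is offered in sequence to inactive virtual machines; each $(i,b)$ has a random threshold $\tau_{i,b} = \bar\tau_{i,b} \cdot \frac{\phi_i(b)\,M}{2\widetilde B}$, with $\bar\tau_{i,b}$ drawn from the same geometric-support distribution used in the warm-up, and once $(i,b)$'s offered-count reaches $\tau_{i,b}$ it activates; admission of jobs on $(i,b)$ is thereafter delegated to its local norm-packing oracle (run with budget $cb$).

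The analysis would mirror the two-case argument of \Cref{thm:warm-up-packing}. Either $\E \sum_{(i,b) \in A} \phi_i(b) \le \widetilde B/2$, so by Markov the activation budget is respected with probability $\ge 1/2$ and the analogue of \Cref{claim:warm-up-1}---together with the oracle guarantee that an activated virtual machine schedules an $\alpha$-fraction of the jobs it could have covered---yields $\Omega(\alpha M)$ jobs scheduled; or $\E\sum_{(i,b)} \phi_i(b) \ge \widetilde B/2$, in which case the conditioning argument of \Cref{claim:warm-up-2} (fixing all thresholds except $\tau_{i,b}$ and using monotonicity of the jobs accepted by $(i,b)$ in $\tau_{i,b}$) gives $\E[\text{jobs on } (i,b)] \ge \Pr[(i,b)\in A] \cdot \frac{\alpha\, \phi_i(b)\, M}{O(\log m)\, \widetilde B}$, and summing over the $\Theta(m \log m)$ virtual machines yields $\Omega(\alpha M / \log^2 m)$ jobs---the extra $\log m$ over the Set Cover warm-up coming from the enumeration of per-machine budget levels. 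For the cost blow-up, the geometric scale ensures $\Lambda_i \le O(c) \cdot b_{\max}^{(i)}$ by triangle inequality over active copies of machine $i$; plugging this into $p$-subadditivity and the $p$-bounded scaling $f(\delta x) \le \delta^p f(x)$ yields $f(\Lambda) \le (3pc)^p \cdot B$.

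The main obstacle I anticipate is composing the three sources of slack---the oracle's factor $c$, the threshold randomization, and the $p$-subadditivity-to-$f$ conversion---so that the final violation is $(3pc)^p$ rather than something with worse $p$-dependence. In particular, the analogue of \Cref{claim:warm-up-1} requires that a virtual machine, conditional on activating and on the joint activation history, still schedules an $\Omega(\alpha)$-fraction of the jobs it could have covered; this must be argued despite the oracle's internal randomness and its dependence on the optimum estimate $M$, which has to be chosen consistently across all $\Theta(m \log m)$ virtual machines. A secondary issue is specifying the geometric budget scale so it covers, within a factor of $2$, every per-machine load appearing in an optimal solution; this may require truncating or merging budget levels whose contribution to $f$ is negligible at the coarsest scale.
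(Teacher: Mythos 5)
Your plan shares the paper's high-level structure: decouple the machines via per-machine budgets, create $O(\log m)$ virtual copies of each machine with geometric budget levels, and run a random-threshold activation scheme with local oracles. That part matches. The gap is in your choice of activation cost.

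You define the activation cost of virtual machine $(i,b)$ to be $\phi_i(b) := f(b\,e_i)^{1/p}$, a \emph{static} quantity, and use total budget $\widetilde B := B^{1/p}$. The $p$-subadditivity direction you cite controls the algorithm's cost: if $\sum_{(i,b)\in A}\phi_i(b)\lesssim\widetilde B$ then $f(\Lambda)\lesssim B$. But the analysis also needs the reverse, namely that the optimal solution $A^*$ satisfies $\sum_{(i,b)\in A^*}\phi_i(b) \le O(1)\cdot\widetilde B$, so that (as in the analogue of \Cref{claim:warm-up-1}) the optimal sets cannot hide many uncovered jobs below the thresholds. That reverse inequality is exactly what $p$-subadditivity does \emph{not} give: $f(\Lambda^*)^{1/p}\le\sum_i f(\Lambda_i^* e_i)^{1/p}$ bounds the left side by the right, not the right by the left. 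Concretely, for $f(x)=\|x\|_p^p$ with $m$ equal coordinates $\Lambda_i^*=\lambda$ and $m\lambda^p=B$, one has $\sum_i\phi_i(\Lambda_i^*)=m\lambda=m^{1-1/p}\cdot B^{1/p}$, a polynomial factor above $\widetilde B$, so the threshold argument fails by $\textrm{poly}(m)$ rather than $\textrm{polylog}(m)$.

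The paper avoids this by replacing the static $\phi_i(b)$ with the \emph{dynamic marginal} $\ba_i := \Delta_i f(y) = f(y+e_i)-f(y)$ evaluated at the current activation vector $y$, and setting the threshold $\tau_i$ proportional to $\ba_i$. The bound on the optimal solution's total budget usage (\Cref{lem:opt-upper-bound}) then comes from a telescoping argument that crucially uses the \emph{gradient monotonicity} in the definition of $p$-bounded functions (not just $p$-subadditivity): $\sum_{i\in A^*}\Delta_i f(\chi_A)\le f(\chi_A+\chi_{A^*})\le (f(\chi_A)^{1/p}+f(\chi_{A^*})^{1/p})^p \le (s+1)^p B$. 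For $\|x\|_p^p$ this recovers $\sum_i\Delta_i f(0)=m\lambda^p=B$, exactly the bound you need, with no polynomial loss. Without gradient monotonicity the marginals do not telescope and this route is unavailable; with only $p$-subadditivity the static proxy is provably lossy. So the fix to your proposal is to make the thresholds depend on the current marginal of $f$ at the evolving activation vector, and to redo the two-case analysis with the combined bound of \Cref{lem:opt-upper-bound} rather than the Markov split of the warm-up.
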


Combining this theorem with \Cref{thm:packCovReduction}, we obtain the following result for \ref{eq:covGenSched}.

\begin{corollary} \label{cor:genSched}
    When the conditions of \Cref{thm:pboundPackSched} hold, we can get an $O(cp^2 \log n (\log\!\log n) \log^2 m/\alpha)^p$-competitive algorithm for \ref{eq:covGenSched}$_{f, \{\|\cdot\|_i\}_i}$.
\end{corollary}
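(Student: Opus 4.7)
The plan is to simply chain the two main results of the preceding sections, since the corollary is by design the composition of \Cref{thm:pboundPackSched} (which reduces the load of going from \ref{eq:covGenSched} to \ref{eq:packGenSched}) with \Cref{thm:packCovReduction} (which solves the underlying packing problem with a bi-criteria guarantee). There is no new algorithmic content to develop; the work is in verifying the hypotheses and carefully collapsing the parameters.

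First, I would verify that the hypotheses of \Cref{thm:packCovReduction} are met. That theorem requires the aggregate function $f$ to be $p$-subadditive, while we are only assuming it is $p$-bounded. This gap is exactly what \Cref{lem:p-subadditive} bridges: every $p$-bounded monotone convex function is $p$-subadditive. Hence $f$ is $p$-subadditive with the same parameter $p$, and we are free to invoke \Cref{thm:packCovReduction} with this $p$.

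Second, I would apply \Cref{thm:pboundPackSched} to the given instance to deduce that \ref{eq:packGenSched}$_{f,\{\|\cdot\|_i\}_i}$ is $(\alpha', c')$-solvable with
\[
\alpha' \;=\; \Omega\bigl(\alpha/\log^2 m\bigr) \qquad\text{and}\qquad c' \;=\; (3pc)^p.
\]
Then I would feed these parameters into \Cref{thm:packCovReduction}, which produces an algorithm for \ref{eq:covGenSched}$_{f,\{\|\cdot\|_i\}_i}$ whose competitive ratio is
\[
O\!\left(\Bigl(\tfrac{1}{\alpha'}\cdot p \log n \cdot \log\!\log n\Bigr)^{p}\cdot c'\right)
\;=\;
O\!\left(\Bigl(\tfrac{p\log^2 m \cdot \log n \cdot \log\!\log n}{\alpha}\Bigr)^{p}\cdot (3pc)^p\right).
\]
Pulling the two $p$-th powers together gives $O\!\bigl((cp^2 \log n \cdot \log\!\log n \cdot \log^2 m/\alpha)^{p}\bigr)$, which is exactly the bound stated in the corollary.

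There is no real obstacle here: the main step is purely bookkeeping to make sure the $p$ from the $p$-boundedness hypothesis, the $p$ from the $c' = (3pc)^p$ blow-up in \Cref{thm:pboundPackSched}, and the $p$-th-power bound in \Cref{thm:packCovReduction} are the same exponent throughout, so that they collapse cleanly into a single $p$-th power at the end. The only small point worth flagging is that the $(3pc)^p$ factor coming out of \Cref{thm:pboundPackSched} is absorbed into the $p$-th power on the outside rather than multiplying it, which is what keeps the final dependence on $c$ and $p$ at $(cp^2)^p$ rather than something worse.
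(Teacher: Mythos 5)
Your proposal is correct and follows exactly the route the paper intends: invoke \Cref{lem:p-subadditive} to see that the $p$-bounded aggregate is $p$-subadditive, apply \Cref{thm:pboundPackSched} to get $(\Omega(\alpha/\log^2 m), (3pc)^p)$-solvability of \ref{eq:packGenSched}, and plug these parameters into \Cref{thm:packCovReduction}, noting that the linear factor $(3pc)^p$ merges with the $p$-th power to give $O(cp^2 \log n \cdot \log\!\log n \cdot \log^2 m/\alpha)^p$. This matches the paper's (implicit) derivation of the corollary, so nothing further is needed.
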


Together with the $(\alpha,c)$-solvability results of \ref{eq:normPack} for different norms from \Cref{lem:single-machine}, this corollary directly implies \Cref{thm:OGSpboundedSym} stated in the introduction (see \Cref{app:proofsIntro} for full details).

To prove \Cref{thm:pboundPackSched}, we will  reduce the problem to a variant where each machine has a preset budget on its total load.  \Cref{sec:budgetSchedPack}  formally defines this budgeted problem,  \Cref{sec:pbound-budget-sched-alg}  gives an algorithm for this budgeted problem, and finally, \Cref{sec:reductionP} proves the reduction from \ref{eq:packGenSched} to this budget problem.

\subsection{Online Budgeted Schedule-Packing} \label{sec:budgetSchedPack}

As we seek to solve \ref{eq:packGenSched} using the techniques  for OBCM  in the warm-up \Cref{sec:warmup}, we notice the following key difference between the problems: in OBCM, the $\ell_{\infty}$ norm associated to each set/machine $S_i$ means that we only need to decide whether to select this set or not, and we do not need to closely track exactly which elements/jobs $j$ are assigned to $S_i$ as the load/cost was always $c_i$. In contrast, the presence of general monotone norms $\| \cdot\|_i$ in \ref{eq:packGenSched} requires keeping track of the loads of the individual machines. 

To circumvent this, we decouple the decision of machine selection with the assignment decision into the machines. More precisely, now each machine $i$ has a fixed ``activation'' budget $b_i$: if this machine is ever used then it ``costs'' the full  $b_i$ (mirroring the Set Cover case), which is then aggregated over all the machines by the {aggregate load function $f$}. This leads to the following \emph{Online Budgeted Schedule-Packing} problem. 

\begin{definition}[Online Budgeted Schedule-Packing] 
The input to this problem, \ref{eq:budgetGenSched}$_{f,\{\|\cdot\|_i\}_i}$, is the same as \ref{eq:packGenSched}, except that each machine $i \in [n]$ also has an \emph{activation budget} $b_i \in \Rp$.  The goal is to find online a partial allocation $x$  together with machine activations $y$ ($y_i \in \{0,1\}$ indicates machine $i$ was activated) maximizing the number of  scheduled jobs subject to the  budget of each machine as well as the {aggregate load budget}, namely: 
\begin{equation}
  \max~  \sum_{i,j,k} x_{ijk} \text{ s.t. }  f(y_1 b_1,\ldots,y_m b_m) \le B ~\text{ and }~ \|(x_{ijk} p_{ijk})_{j,k}\|_i \le y_i b_i ~\forall i \in [m] . \quad \tag{Budgeted-Sched-Pack} \label{eq:budgetGenSched}
\end{equation}
\end{definition}
Notice that the activation requirement on the machines means that if $y_i = 0$, then we must have $x_{ijk} = 0$ for all jobs $j$ and  ways $k$.

In order to prove \Cref{thm:pboundPackSched}, we will first show that this budgeted version of the problem has a bi-criteria approximation as long as the single-machine scheduling problem is solvable.

\begin{theorem} \label{thm:pboundBudgetSched}
    Consider the online  \ref{eq:budgetGenSched}$_{f,\{\|\cdot\|_i\}_i}$ problem where the aggregate function $f$ is $p$-bounded and  $f(0) = 0$. Suppose for each  norm $\|\cdot\|_i$ the problem \ref{eq:normPack}$_{\|\cdot\|_i}$ is $(\alpha, c)$-solvable.
    Then, for any $s \geq 1$, there is an online algorithm for \ref{eq:budgetGenSched}$_{f,\{\|\cdot\|_i\}_i}$ with the following guarantee: Given the value of $\OptBSP$ up front, the algorithm obtains expected objective value at least $\Omega\big(\frac{\alpha}{(1 + 1/s)^p \log^2 m}\big) \cdot \OptBSP$ while violating the budget $B$ by at most a factor $s^p$ (i.e. incurring aggregate load at most $s^pB$) and violating each machine's load budget by at most a factor $c$ (i.e. so that $\|(x_{ijk} p_{ijk})_{j,k}\|_i \leq cy_i b_i$).
\end{theorem}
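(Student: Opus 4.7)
The plan is to generalize the random-threshold algorithm and two-case analysis from Algorithm~1 and Theorem~2.2 in the warm-up, with three key adjustments: handling a non-linear aggregate $f$, handling the non-trivial per-machine packing via the assumed $(\alpha,c)$-solvable algorithms, and supplying each internal algorithm $\cA_i$ with a lower bound on its local optimum.

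First, I would leverage the $p$-subadditivity of $f$ (\Cref{lem:p-subadditive}) to reduce the aggregate constraint to an ``effective linear'' one. Define effective costs $\hat c_i := f(b_i e_i)^{1/p}$ and effective budget $\hat B := s B^{1/p}$, so that any activated set $A$ with $\sum_{i \in A} \hat c_i \le \hat B$ automatically satisfies $f(y \odot b) \le s^p B$ by telescoping $p$-subadditivity. For each machine $i$, I would draw a random threshold $\tau_i := \bar\tau_i \cdot \hat c_i \cdot \OptBSP / (2 \hat B)$ with $\bar\tau_i$ distributed exactly as in Algorithm~1 on $\Theta(\log m)$ geometric levels. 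I would instantiate the $(\alpha,c)$-solvable algorithm $\cA_i$ for \ref{eq:normPack}$_{\|\cdot\|_i}$ on each machine; $\cA_i$ shadows the job stream and makes hypothetical accept/reject decisions. As each job arrives, it is offered in order to inactive machines whose shadow $\cA_i$ would accept; machine $i$ activates when its shadow count $|O_i|$ reaches $\tau_i$, and jobs accepted by $\cA_i$ thereafter are physically scheduled on $i$. Since $\cA_i$ requires a lower bound $M_i$ on its local optimum (unlike the trivial $\ell_\infty$ case of the warm-up), I would run $O(\log m)$ parallel copies of the whole algorithm with different guesses $M_i \in \{1,2,4,\ldots\}$, and the resulting $\log m$ factor combined with the $\log m$ from the threshold distribution produces the stated $\log^2 m$ factor.

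The analysis would follow the two-case structure of the warm-up's Claims~\ref{claim:warm-up-1} and~\ref{claim:warm-up-2}. In Case~1, where $\E[\sum_i y_i \hat c_i] \le \hat B/2$, Markov's inequality ensures $\sum_i y_i \hat c_i < \hat B$, and hence the aggregate constraint $f(y\odot b) \le s^p B$, with probability $\ge 1/2$; in this event $|O_i|<\tau_i$ for each inactive $i$, and a generalization of \Cref{claim:warm-up-1} bounds the unscheduled OPT jobs by $\sum_{i \in A^*} |O_i| \le \sum_{i \in A^*} \tau_i$ (with an extra $1/\alpha$ loss to account for the $\cA_i$ possibly rejecting OPT-scheduled jobs). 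In Case~2, where $\E[\sum_i y_i \hat c_i] > \hat B/2$, a generalization of \Cref{claim:warm-up-2} shows $\E[|\text{jobs on }i|] \ge \Pr(y_i=1) \cdot \Omega(\alpha \hat c_i \OptBSP / (\hat B \log m))$ per machine, where the extra factor $\alpha$ comes from the $(\alpha,c)$-solvability of $\cA_i$; summing over $i$ then yields $\Omega(\alpha \OptBSP / \log^2 m)$ expected scheduled jobs.

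The main obstacle I anticipate is bounding $\sum_{i \in A^*} \hat c_i$ in terms of $B$ in Case~1. The inequality $f(y^* \odot b)^{1/p} \le \sum_i y^*_i \hat c_i$ from $p$-subadditivity goes the wrong direction, and in pathological cases (e.g., $f$ close to $\ell_\infty$) OPT's total effective cost can exceed $\hat B$ by a factor as large as $m$. The $(1+1/s)^p$ slack in the statement suggests that a more delicate argument is required: either adaptive marginal effective costs $\hat c_i(A) := f((A\cup\{i\})\odot b)^{1/p} - f(A\odot b)^{1/p}$ (which telescope cleanly to $f(y \odot b)^{1/p}$), or bucketing the machines by the scale of $\hat c_i$ into $O(\log m)$ classes and running the algorithm within each class. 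A secondary challenge is the shadow-vs-physical bookkeeping for $\cA_i$: pre-activation shadow acceptances consume $\cA_i$'s internal budget, so one must either reset $\cA_i$ at activation (and re-derive local guarantees) or argue that the remaining capacity after activation still supports the promised $\Omega(\hat c_i \OptBSP / (\hat B \log m))$ acceptances within the factor-$c$ norm violation.
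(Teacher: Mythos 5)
Your high-level plan (two-case analysis with per-machine random thresholds, internal $(\alpha,c)$-solvable algorithms, and a randomized guess $M_i$ contributing the second $\log m$) is the right skeleton and matches the paper's shape. However, you correctly flag the central obstacle in Case~1 — bounding the total ``effective cost'' of $A^*$ by $O(\hat B)$ — and you do not actually close it, so this is a genuine gap rather than a different route. The static definition $\hat c_i := f(b_i e_i)^{1/p}$ cannot work, as you note; the $\ell_\infty$-like example defeats it. Your option (a), adaptive marginals, is the correct direction, but the missing ingredient you would need to make it go through is the \emph{gradient monotonicity} of a $p$-bounded $f$ — the property that distinguishes $p$-bounded from merely $p$-subadditive, and the reason the theorem hypothesizes $p$-bounded. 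In the paper's Lemma~\ref{lem:opt-upper-bound}, the threshold for machine $i$ is pegged to the marginal $\Delta_i f(y) := f(y + e_i) - f(y)$ (not raised to $1/p$), evaluated adaptively at the current activation vector; gradient monotonicity then yields the telescoping bound
\[
\sum_{i \in A^*} \Delta_i f(\chi_A) \;\le\; \sum_j \Bigl(f(\chi_A + \chi_{A^*_j}) - f(\chi_A + \chi_{A^*_{j-1}})\Bigr) \;\le\; f(\chi_A + \chi_{A^*}) \;\le\; \bigl(f(\chi_A)^{1/p} + f(\chi_{A^*})^{1/p}\bigr)^p \;\le\; (s+1)^p B,
\]
which is exactly where the $(1+1/s)^p$ slack appears. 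Without invoking this monotonicity of $\nabla f$, neither of your fallback options (static-cost bucketing or your $1/p$-root marginal) comes with a proof.

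Two further divergences from the paper that would cost you work if you tried to carry your plan out. First, you propose that each machine's internal $\cA_i$ ``shadows'' the job stream pre-activation, which creates the budget-consumption bookkeeping issue you yourself flag. The paper sidesteps this cleanly: before activation the threshold check is done against the \emph{offline} optimum $\Opt_i(O_i)$ of the offered-so-far instance (no online algorithm needed), and $\cA_i$ is started fresh at activation and fed only the post-activation stream $O_i^{post}$. Second, you propose running $O(\log m)$ parallel copies of the entire algorithm with different guesses $M_i$; the paper instead has each machine draw its own random $M_i$ from a geometric grid at activation time, which keeps a single run and makes the per-machine conditioning argument (fix $\omega = $ everything outside machine $i$, vary $\bar\tau_i$ and $M_i$) tractable. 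Your parallel-copies variant would need a new argument to handle the interaction between copies. The analytical skeleton you describe for Case~2 (lower-bounding $\E|\Alg_i|$ via the threshold gap, as in \Cref{claim:warm-up-2}) is essentially right once the threshold is tied to $\Delta_i f(y)$ and gradient monotonicity is used to argue that $\ba_i(\bar\tau_{\max},\omega) \ge \ba_i(\bar\tau_i,\omega)$.
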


Notice that in \Cref{thm:pboundBudgetSched}, we assume we are given the value of $\OptBSP$ exactly. This is a convenient assumption to adopt for proving the theorem, but in actuality, we will apply \Cref{thm:pboundBudgetSched} by only giving the algorithm a lower bound $M \leq \OptBSP$. Nevertheless, our algorithm must still obtain the approximation factor promised by \Cref{thm:pboundBudgetSched} in terms of $M$, since there is some point in the online arrivals at which the hindsight optimum crosses the value $M$.


\subsection{An Algorithm for \ref{eq:budgetGenSched}}\label{sec:pbound-budget-sched-alg}

In this section, we will now prove \Cref{thm:pboundBudgetSched}. Here, we will abuse notation and simply write $f(y)$ for $f((b_i y_i)_i)$, as the $b_i$ values are fixed and the aggregate load is only determined by $y$.

The algorithm we use will be similar to that in \Cref{sec:warmup}, in that we will offer jobs as they arrive to machines, and once a machine has received enough offers, it will activate. However, we will use $\Delta_i f$ instead of $c_i$ to make our activation decisions, where $\Delta_i f(y) := f(y + e_i) - f(y)$, since $\Delta_i f$ represents the marginal cost of activating machine $i$. 

During the description of algorithm, we will define the following sets, again similar to the warm-up: 
\begin{itemize}
    \item $O_i$ is the set of items offered to machine $i$ before it gets activated.
    \item $O^{post}_i$ the set of items offered after it is activated.
    \item $T_i = (O_i \cup O_i^{post})$ be the total set of items offered to machine $i$
    \item $\Alg_i$ is the set selected by the online algorithm of machine $i$ (after being active), and $\Alg := \bigcup_i \Alg_i$.
    \item For $S \subseteq [n]$, $\OPT_i(S)$ is the set of assigned jobs in an optimal solution to the \ref{eq:normPack}$_{\|\cdot\|_i}$ problem with budget $b_i$ and only jobs in $S$.
\end{itemize}

Also, we assume that each machine can be activated within the budget $B$, namely $f(e_i) \le B$, else we can just ignore it. We also can assume that we are allowed to exceed the (potentially expanded) budget $s^pB$ by a single machine activation, losing only a factor $2$ in our approximation. To do this, we simulate the budget-violating algorithm, and with probability $1/2$ we choose either to only allocate to the last machine (which exceeds the budget), or to allocate to all machines activated before the budget is violated. Finally, in the sequel by ``\ref{eq:normPack}$_{\|\cdot\|_i}$ algorithm'' we mean one that achieves the assumed $(\alpha, c)$-solvable for this problem. 

Given this, our algorithm is the following:

\begin{algorithm}
\caption{Algorithm for \ref{eq:budgetGenSched} with $p$-bounded aggregate function $f$}\label{alg:p-bounded}

First, for each machine, draw independently a threshold $\bar{\tau}_i$ distributed as $(1 - \frac{k}{3\log m})^+$ with probability $2^{-k-1}$ (for $k \ge 0$). At any given point in the algorithm, we will keep the threshold $\tau_i := \bar{\tau}_i \cdot \frac{\OptBSP \cdot \Delta_i f(y)}{10 \cdot (s+1)^pB}$ for machine $i$. Then for each arriving job $j$:
\begin{enumerate}
    \item Offer it to each active machine $i$ in order of activation time, i.e. add job $j$ to $O^{post}_i$. If the internal \ref{eq:normPack}$_{\|\cdot\|_i}$ algorithm of this machine schedules this job, set the corresponding $x_{ijk} = 1$ and add $j$ to $\Alg_i$. Else offer $j$ to the next active machine, etc.
    \medskip

    \item If the item is not selected by any active machine, then offer it to all inactive machines (e.g., in lexicographical order). In other words, add $j$ to the $O_i$'s of each inactive machine $i$ one at a time, until some machine (activates and) schedules $j$ or until all machines have been offered $j$.
    \medskip

    \item If an inactive machine $i$ is offered $j$ and crosses the threshold with its offered items, i.e., $\OPT(O_i, b_i) \ge \tau_i$, AND $f(y) < s^pB$, then do the following: 
    \begin{enumerate}
        \item Define $\ba_i := \Delta_i f(y)$ as the current marginal.
        \item Activate this machine (i.e., add $i$ to $A$ and update $y_i = 1$).
        \item Draw a guess $M_i$ 
        uniformly from $\{\frac{\OptBSP}{2}, \frac{\OptBSP}{4}, \frac{\OptBSP}{8}, \dots, \frac{\OptBSP}{2^{\floor{2\log m}}}\}$. 
        \item Start an internal \ref{eq:normPack}$_{\|\cdot\|_i}$ algorithm for machine $i$ with guess $M_i$. This algorithm will receive only jobs offered to $i$, i.e. jobs in $O_i^{post}$.
        \item Offer $j$ to machine $i$. If $i$ schedules $j$, set $x_{ijk} = 1$, add $j$ to $\Alg_i$, and skip to next job $j+1$. Else, continue Step 2, i.e., offer to the next inactive machine.
    \end{enumerate}
    \medskip

    \item Stop after all $n$ items have arrived.
\end{enumerate}
 \end{algorithm}


\begin{proof}[Proof overview for \Cref{thm:pboundBudgetSched}] 

It is clear that \Cref{alg:p-bounded} satisfies the budget violation requirements for \Cref{thm:pboundBudgetSched} (with our assumptions), so it only remains to show that it gets the required approximation factor of $\OptBSP$. Similar to \Cref{sec:warmup}, we seek to show argue two possible cases: either the algorithm activates many machines and reaches the aggregate load budget, or the algorithm activates few machines, meaning the marginal value of the remaining machines is small, and in both cases we obtain an approximation of $\OptBSP$.

To do this, we first show an upper bound on $\OptBSP$ that can be obtained in terms of the active set $A$ by considering both of these cases. 

\begin{lemma}\label{lem:opt-upper-bound}
    For any realization of the algorithm above, we have
    \[
    \frac{9}{10}\OptBSP \leq |\Alg| + \sum_{i \in A} \max \left\{\frac{\OptBSP \cdot\ba_i}{s^pB},~ |\Opt_i(T_i)| \right\} \enspace .
    \]
\end{lemma}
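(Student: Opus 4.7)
The plan is to case-split on whether the aggregate-load budget is exhausted at the end of the run. Let $y_{\textrm{end}}$ denote the final activation vector, let $A^* \subseteq [m]$ be the machines used by some fixed optimum, and let $\OPT^*_{i^*}$ be the jobs the optimum places on machine $i^*$, so $\OptBSP = \sum_{i^* \in A^*} |\OPT^*_{i^*}|$. The first step I would take is to locate any ``missed'' optimum job $j \in \OPT^*_{i^*} \setminus \Alg$: either $i^*$ was already active when $j$ arrived, in which case Step~1 of the algorithm adds $j$ to $O_{i^*}^{\mathrm{post}}$, or else Step~2 of the algorithm keeps offering $j$ to inactive machines (since $j \notin \Alg$), so $j$ is eventually added to $O_{i^*}$. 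Either way, when $i^* \in A$ we have $\OPT^*_{i^*} \setminus \Alg \subseteq T_{i^*}$, and since $\OPT^*_{i^*} \cap T_{i^*}$ is itself a feasible single-machine schedule on $i^*$ with budget $b_{i^*}$, we get $|\OPT^*_{i^*} \setminus \Alg| \le |\Opt_{i^*}(T_{i^*})|$ for every $i^* \in A$.

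\textbf{Case A:} $f(y_{\textrm{end}}) < s^p B$. Then $f(y) < s^p B$ throughout, so the budget test never blocks an activation; hence every $i^* \in A^* \setminus A$ must have failed the threshold test at termination, yielding
\[
|\OPT^*_{i^*} \setminus \Alg| \;\le\; |\Opt_{i^*}(O_{i^*})| \;<\; \tau_{i^*} \;\le\; \frac{\OptBSP \cdot \Delta_{i^*} f(y_{\textrm{end}})}{10 (s+1)^p B}.
\]
Now I sum over $i^* \in A^* \setminus A$. Convexity of $f$ and monotonicity of $\nabla f$ telescope to give $\sum_{i^* \in A^* \setminus A} \Delta_{i^*} f(y_{\textrm{end}}) \le f(y_{\textrm{end}} + \mathbf{1}_{A^*}) - f(y_{\textrm{end}})$, and then $p$-subadditivity from \Cref{lem:p-subadditive} controls the right-hand side via $f(y_{\textrm{end}} + \mathbf{1}_{A^*})^{1/p} \le f(y_{\textrm{end}})^{1/p} + f(\mathbf{1}_{A^*})^{1/p} \le (s+1) B^{1/p}$, since $f(\mathbf{1}_{A^*}) \le B$ by OPT feasibility and $f(y_{\textrm{end}}) \le s^p B$ by the case hypothesis. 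Altogether this yields $\sum_{i^* \in A^* \setminus A} |\OPT^*_{i^*} \setminus \Alg| \le \OptBSP/10$, so combining with the ``offered'' bound above gives $\OptBSP \le |\Alg| + \sum_{i \in A} |\Opt_i(T_i)| + \OptBSP/10$, which is even stronger than the lemma since $\max\{a,b\} \ge b$.

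\textbf{Case B:} $f(y_{\textrm{end}}) \ge s^p B$. This case is essentially automatic: $f(0) = 0$ and every activation of machine $i$ raises $f(y)$ by exactly $\ba_i$, so by telescoping $\sum_{i \in A} \ba_i = f(y_{\textrm{end}}) \ge s^p B$, yielding $\sum_{i \in A} \frac{\OptBSP \cdot \ba_i}{s^p B} \ge \OptBSP$. The right-hand side of the lemma already exceeds $\OptBSP$ on its own, independently of $|\Alg|$ or of the $|\Opt_i(T_i)|$ terms.

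The main obstacle is Case~A's accounting for $A^* \setminus A$, the jobs that OPT schedules on machines our algorithm never activates. The crucial ingredient is $p$-subadditivity, which combines the ``used budget'' $f(y_{\textrm{end}})$ and the ``optimum budget'' $f(\mathbf{1}_{A^*})$ into a bound of at most $(s+1)^p B$ on $f(y_{\textrm{end}} + \mathbf{1}_{A^*})$; the constant $10(s+1)^p$ in the algorithm's threshold is calibrated precisely so that this telescoping loses at most $\OptBSP/10$. A minor bookkeeping point I would double-check is that $\tau_{i^*}$ and $|\Opt_{i^*}(O_{i^*})|$ are both monotone non-decreasing over time, so an inactive machine at termination genuinely certifies $|\Opt_{i^*}(O_{i^*})| < \tau_{i^*}$ at the \emph{final} values of $y$ used in the bound above.
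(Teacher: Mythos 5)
Your proof is correct and takes essentially the same route as the paper's: the same two-case split on whether $f(y_{\mathrm{end}})$ exceeds $s^pB$, the same observation that Case B follows from telescoping $\sum_{i\in A}\ba_i = f(y_{\mathrm{end}})$, and in Case A the same accounting of missed OPT jobs via the threshold bound for inactive machines plus the gradient-monotonicity telescope and $p$-subadditivity to bound $\sum_{i^*\in A^*\setminus A}\Delta_{i^*}f(y_{\mathrm{end}})$ by $(s+1)^pB$. The paper phrases the Case-A accounting through sets $U_i := \OPT^*_i \setminus \Alg$ and sums over all of $A^*$ before splitting, but that is just a cosmetic difference from your decomposition.
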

In the above statement, the first term of the maximum dominates in the even that the fraction of the budget $B$ we use is large, reflecting Case~2 in \Cref{sec:warmup}. On the other hand, if little of the budget is used, as in Case~1 of \Cref{sec:warmup}, then we expect the hindsight optimum on activated machines to be large, which is given by the second term in the max. For this general setting, it is convenient to combine these cases into a single bound on $\OptBSP$, as opposed to considering them separately as we did previously.

Next, we will show that our algorithm obtains an approximation of this upper bound. Specifically, that $\E |\Alg_i|$ can by lower bounded in terms of the contribution of $i$ to \Cref{lem:opt-upper-bound}.

\begin{lemma}\label{lem:unifiedLemma}
    For some constant $C > 0$ we have for every machine $i \in [m]$
    \[
    \E\, |\Alg_i| \geq \frac{\alpha}{C\log^2 m} \cdot \E \left[ \ind(i \in A) \max\left\{\frac{ \OptBSP \cdot \ba_i}{B \cdot (s+1)^{p}},~ |\Opt_i(T_i)|\right\}\right] - \frac{2\OptBSP}{m^2}.
    \] 
\end{lemma}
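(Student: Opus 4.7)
The plan is to generalize \Cref{claim:warm-up-2}, with extra care for (i) the dynamic threshold that depends on $\Delta_i f(y)$, (ii) the random guess $M_i$ that feeds the internal \ref{eq:normPack}$_{\|\cdot\|_i}$ algorithm, and (iii) the two alternative lower bounds behind the two terms of the $\max$. The starting point is to fix all algorithmic randomness except $\bar{\tau}_i$, the guess $M_i$, and the internal randomness of machine $i$'s packing algorithm. Conditional on this, let $\bar{\tau}_{\max}=\bar{\tau}_{\max}(\bar{\tau}_{-i})$ be the largest value in the support of $\bar{\tau}_i$ for which $i$ still activates, so $i\in A$ iff $\bar{\tau}_i\le\bar{\tau}_{\max}$; let $T_1\le T_2$ be the corresponding activation times of $i$ under $\bar{\tau}_i$ and $\bar{\tau}_{\max}$.

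The core coupling observation is that for $\bar{\tau}_i \le \bar{\tau}_{\max}$, the algorithm runs identically up to time $T_1$, and between $T_1$ and $T_2$ every earlier-active machine is in the same internal state under both scenarios (it sees the same items in the same order, since adding $i$ to the list of active machines does not change which items reach the earlier-active ones). Hence items arriving in $(T_1,T_2]$ not scheduled by those earlier-active machines are routed to $i$ in both scenarios: to the $O_i$-set under $\bar{\tau}_{\max}$ (call this larger set $O_i^{\max}$) and to $O_i^{post}$ under $\bar{\tau}_i$. This yields $O_i^{\max}\subseteq O_i(\bar{\tau}_i)\cup O_i^{post}(\bar{\tau}_i)$, so subadditivity of the single-machine optimum ($|\Opt_i(S_1\cup S_2)|\le|\Opt_i(S_1)|+|\Opt_i(S_2)|$) gives $|\Opt_i(O_i^{post})|\ge|\Opt_i(O_i^{\max})|-|\Opt_i(O_i(\bar{\tau}_i))|\ge \tau_{\max}^* -\tau_i^* -1$, where $\tau^*_{\bullet}$ denotes the \emph{dynamic} threshold at the respective activation (and the $-1$ absorbs the triggering item). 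Finally, monotonicity of $\nabla f$ (the $p$-boundedness of $f$) gives $\ba_i\le\ba_i^*:=\Delta_i f(y_{T_2})$, which lets us rewrite $\tau_{\max}^* -\tau_i^* \ge (\bar{\tau}_{\max}-\bar{\tau}_i)\cdot \frac{\OptBSP\,\ba_i}{10(s+1)^p B}$ in terms of the \emph{realized} $\ba_i$ that appears in the lemma.

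I then invoke the internal packing algorithm: since $M_i$ is uniform over a dyadic grid of size $\lfloor 2\log m\rfloor$ over $[\OptBSP/m^2,\OptBSP]$, for any lower bound $Y\ge\OptBSP/m^2$ on $|\Opt_i(O_i^{post})|$ the event $M_i\in[Y/2,Y]$ has probability $\Omega(1/\log m)$, and on this event $(\alpha,c)$-solvability yields $\E|\Alg_i|\ge\alpha Y/2$ (values $Y<\OptBSP/m^2$ contribute only the additive $\OptBSP/m^2$ error). I apply this twice, with two different choices of $Y$: once with $Y=\tau_{\max}^* -\tau_i^*$, and once with $Y\ge|\Opt_i(T_i)|-\tau_i^* -1$ (using $T_i\subseteq O_i\cup O_i^{post}$ and subadditivity). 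Conditional on $i\in A$, the geometric distribution of $\bar{\tau}_i$ gives $\Pr[\bar{\tau}_i\le\bar{\tau}_{\max}-\tfrac{1}{3\log m}\mid\bar{\tau}_i\le\bar{\tau}_{\max}]=\tfrac{1}{2}$, so the first application produces an $\Omega(1/\log^2 m)$-fraction of $\ind(i\in A)\cdot\tfrac{\OptBSP\,\ba_i}{B(s+1)^p}$ in expectation.

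The two bounds are combined by a case split on whether $\tau_i^* \ge|\Opt_i(T_i)|/3$: in the ``yes'' case, $\tau_i^* \le\bar{\tau}_i\cdot\tfrac{\OptBSP\,\ba_i}{10(s+1)^p B}$ forces $|\Opt_i(T_i)|\le\tfrac{3\OptBSP\,\ba_i}{10(s+1)^p B}\le\tfrac{\OptBSP\,\ba_i}{B(s+1)^p}$, so the first-application bound already dominates $|\Opt_i(T_i)|$ inside the $\max$; in the ``no'' case, $Y\ge|\Opt_i(T_i)|/3$ and the second application yields an $\Omega(1/\log m)$-fraction of $\ind(i\in A)\cdot|\Opt_i(T_i)|$. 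Summing these loses only a constant in replacing $\max$ by sum, and taking expectation over $\bar{\tau}_{-i}$ yields the lemma. The main obstacle is the coupling argument: one must carefully show that the \emph{dynamic} thresholds of other machines do not break the ``identical behavior up to $T_2$'' property of earlier-active machines, and also use the monotonicity of $\nabla f$ to express the lower bound on $\tau_{\max}^* -\tau_i^*$ in terms of the realized $\ba_i$ rather than the surrogate $\ba_i^*$, which is what makes the final bound match the statement of the lemma.
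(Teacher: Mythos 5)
Your high-level structure is the same as the paper's (fix the scenario $\omega$, prove a monotonicity/coupling claim for $T_i$ and $\ba_i$, prove two separate lower bounds on $\E|\Alg_i|$ and combine them to get the $\max$), so there is no conceptual divergence. However, one step that you yourself flag as "the main obstacle" -- converting from the surrogate $\ba_i^* := \ba_i(\bar\tau_{\max},\omega)$ to the realized $\ba_i$ -- is done at the wrong place, and as written the argument does not produce the stated bound.

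Concretely, you rewrite the coupling inequality as $|\Opt_i(O_i^{post})| \gtrsim (\bar\tau_{\max}-\bar\tau_i)\cdot\ba_i$ with the \emph{realized} $\ba_i = \ba_i(\bar\tau_i,\omega)$, then condition on $\bar\tau_i < \bar\tau_{\max}$ and conclude that the first application yields an $\Omega(1/\log^2 m)$-fraction of $\E[\ind(i\in A)\cdot\ba_i]$. But after taking expectation over $\bar\tau_i$, this only lower bounds $\E|\Alg_i|$ by a multiple of $\E[\ind(\bar\tau_i < \bar\tau_{\max})\cdot\ba_i(\bar\tau_i)]$, which can be an arbitrarily small fraction of $\E[\ind(i\in A)\cdot\ba_i] = \E[\ind(\bar\tau_i \le \bar\tau_{\max})\cdot\ba_i(\bar\tau_i)]$: the event $\{\bar\tau_i = \bar\tau_{\max}\}$ carries half the conditional mass of $\{i\in A\}$ and, by the very gradient monotonicity you invoke (\Cref{claim:mono-new2}), the largest value of $\ba_i$. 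In the extreme, $\ba_i(\bar\tau_i)$ can be near zero for all $\bar\tau_i < \bar\tau_{\max}$ while $\ba_i^*$ is large, in which case your bound degenerates while the right-hand side of the lemma does not. The paper avoids this exactly by \emph{not} rewriting in terms of $\ba_i$: it keeps $\ba_i^*$ (which is a constant given $\omega$) through the conditional-probability step, obtaining $\E[|\Alg_i|\mid\omega] \gtrsim \big(\Pr(i\in A\mid\omega) - \tfrac{1}{2m^3}\big)\cdot\ba_i^*$, and only at the very end uses $\ba_i^* \ge \ba_i$ on $\{i\in A\}$ to conclude $\E_\omega[\Pr(i\in A\mid\omega)\,\ba_i^*(\omega)] \ge \E[\ind(i\in A)\,\ba_i]$. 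So the inequality $\ba_i \le \ba_i^*$ must be used in the direction that replaces $\ba_i^*$ \emph{downward} by $\ba_i$ inside the \emph{target} quantity, not to weaken the per-scenario lower bound before averaging.

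Two smaller remarks. Your case split on $\tau_i^* \ge |\Opt_i(T_i)|/3$ vs.\ not is a valid way to recover the $\max$ from the two bounds and is morally equivalent to the paper's simpler averaging observation $x + (z-x)^+ = \max\{x,z\}$ (applied with a smaller subtrahend); either works. And the coupling argument you sketch for why earlier-active machines see the same stream between the two activation times is essentially the paper's proof of \Cref{claim:mono-new2}; your statement of it is slightly informal (you need the fact that all machines activating \emph{after} time $T_1$ have lower priority than $i$ in the $\bar\tau_i$-run, so only the pre-$T_1$ active set $A_{prev}$ can ``steal'' items from $i$), but the substance is right.
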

Finally, combining \Cref{lem:unifiedLemma} with \Cref{lem:opt-upper-bound}, we obtain the competitive bound necessary for \Cref{thm:pboundBudgetSched}:
    \begin{align*}
        \E |\Alg| 
        &\geq \frac{\alpha}{C \log^2 m}\E\left[  \sum_{i \in A} \max\left\{\frac{\OptBSP \cdot\ba_i}{B \cdot (s+1)^p},~ |\Opt_i(T_i)|\right\}\right] - \frac{2\OptBSP}{m}\\
        &\geq \frac{\alpha}{C (1+ 1/s)^p\log^2 m}\E\left[  \sum_{i \in A} \max\left\{\frac{\OptBSP \cdot\ba_i}{s^p B},~ |\Opt_i(T_i)|\right\}\right] - \frac{2\OptBSP}{m}\\
        &\geq \frac{\alpha}{C (1 + 1/s)^p\log^2 m} \cdot \bigg(\frac{9}{10}\OptBSP - \E|\Alg|\bigg) - \frac{2\OptBSP}{m}.
    \end{align*}
    Now collecting the terms $\E |\Alg|$ gives the desired bound:
    \begin{align*}
        2\E |\Alg| \geq \frac{9\alpha (1 - o(1))}{10C(1 + 1/s)^p \log^2 m} \cdot \OptBSP \qquad \Longrightarrow \qquad  \E|\Alg| \geq \Omega\left(\frac{\alpha}{(1 + 1/s)^p \log^2 m} \right) \cdot \OptBSP.
    \end{align*}

This completes the proof of \Cref{thm:pboundBudgetSched}, assuming \Cref{lem:opt-upper-bound} and \Cref{lem:unifiedLemma}, which we prove next. 
\end{proof}


    \subsubsection{Bounding $\OptBSP$ using Active Sets: Proof of \Cref{lem:opt-upper-bound}}
To start, we prove our upper bound on $\OptBSP$.
\begin{proof}[Proof of \Cref{lem:opt-upper-bound}]
    First, suppose that the algorithm ends with $f(y) \geq s^pB$. Then the sum of the marginals activation satisfies $\sum_{i \in A} \ba_i = f(y) - f(0) \ge s^p B$. This implies  $\sum_{i \in A} \frac{|\OptBSP| \cdot \ba_i}{s^p B} \ge |\OptBSP|$, and the lemma holds. 
    

    So, from now consider the case that the algorithm ends with $f(y) < s^p B$.
    
    Let $U_i$ be the set of elements which are assigned in $\OptBSP$ to machine $i$, but are unassigned to any machine in $\Alg$ (i.e., not picked by any of the single-machine online algorithms; these are exactly the items we don't get value from). Then $\OptBSP - |\Alg| \leq \sum_{i \in A^*} |U_i|$, where $A^*$ is the set of machines activated in the optimum assignment.

    Additionally, notice that $|U_i| = |\Opt_i(U_i)| \leq |\Opt_i(T_i)|$, since all of $U_i$ can pack into machine $i$ and $U_i \subseteq T_i$ since items in $U_i$ are offered to every machine. Moreover, since we assumed that the algorithm terminates before crossing threshold $s^pB$, it must be the case that each inactive machine $i \in [m] \setminus A$ has not been activated because it never crossed its threshold, i.e., $|\Opt_i(T_i)| < \tau_i$ at all points in time. Note that as $\Delta_i f(y^t)$ is non-decreasing and $\bar{\tau}_i \leq 1$, this implies that $|\Opt_i(T_i)| < \frac{|\OptBSP| \cdot \Delta_i f(\chi_A)}{10 \cdot (s+1)^p B}$. Therefore, we have
    \begin{align}
        |\OptBSP| - |\Alg| ~\leq~ \sum_{i \in A^*} |U_i| 
        ~& \leq \sum_{i \in A^*} |\Opt_i(T_i)|\notag\\
        &\leq \sum_{i \in A} |\Opt_i(T_i)| + \sum_{i \in A^* \setminus A} \frac{|\OptBSP| \cdot \Delta_i f(\chi_A)}{10 \cdot (s+1)^p B}\notag\\
        &\le \sum_{i \in A} |\Opt_i(T_i)| + \frac{|\OptBSP|}{10 \cdot (s+1)^p B} \cdot \sum_{i \in A^*}  \Delta_i f(\chi_A). \label{eq:sumDeltas}
    \end{align}
    Using the gradient monotonicity of $f$ (which implies that the marginal function $y \mapsto \Delta_i f(y)$ is an increasing function), we expand the last summation: consider the elements of $A^*$ in some order $i^*_1, i^*_2, \ldots$, and define the prefix $A^*_j = \{i^*_1,\ldots, i^*_j\}$; then
    \begin{align*}
    \sum_{i \in A^*}  \Delta_i f(\chi_A) &= \sum_j \bigg(f(\chi_A + e_{i^*_j}) - f(\chi_A)\bigg)\\
    &\stackrel{\star}{\le} \sum_j \bigg(f(\chi_A + \chi_{A^*_{j-1}} + e_{i^*_j}) - f(\chi_A + \chi_{A^*_{j-1}})\bigg)\\
    &= \sum_j \bigg(f(\chi_A + \chi_{A^*_j}) - f(\chi_A + \chi_{A^*_{j-1}})\bigg)\\
    &\le f(\chi_A + \chi_{A^*}) ~\le~  \Big(f(\chi_A)^{1/p} + f(\chi_{A^*})^{1/p}\Big)^p ~\le~ (s+1)^p B,
    \end{align*}
    where $\star$ uses monotonicity of gradient and the second-last inequality uses $p$-subadditivity of $p$-bounded functions
    (\Cref{lem:p-subadditive}). Since $A^*$ is feasible for the budget $B$ and we are in the case where $A$ does not cross the budget, we get $\sum_{i \in A^*}  \Delta_i f(\chi_A) \le (s+1)^{p} B$. Employing this on \eqref{eq:sumDeltas} and reorganizing, we get
    $\frac{9}{10} |\OptBSP| \le \sum_{i \in A} |\OPT_i(T_i)|. $
    Together with the previous case, this implies the lemma. 
\end{proof}

    \subsubsection{Bounding the Contribution of a Single Machine: Proof of \Cref{lem:unifiedLemma}}

    Now, we come to the core part of our argument. We will prove \Cref{lem:unifiedLemma} by lower bounding the contribution of a single machine to the algorithm's solution. 

\begin{proof}[Proof of \Cref{lem:unifiedLemma}]
    
    To prove the lemma, we will fix all randomness regarding other machines, that is for every $i' \neq i$, fix the choice of $\bar{\tau}_{i'}$, $M_{i'}$, and any internal randomness in the algorithm for machine $i'$. We summarize all these random outcomes in a scenario $\omega$. Our plan will be to lower-bound $\E[\Alg_i \mid \omega]$ for any fixed choice of such a scenario $\omega$.

    %

    Notice that the set $T_i$ of elements offered to machine $i$ is a function of only $\omega$ and $\bar{\tau}_i$ (i.e., it is independent of the choice of $M_i$ and the internal algorithm on machine $i$). This is because $T_i$ is determined only by the time at which machine $i$ is activated and the elements taken by higher priority machines. Our goal will be to better understand the dependence on $\bar{\tau}_i$. To this end, the following claim is helpful, showing monotonicity.
    
    \begin{claim} \label{claim:mono-new2}
    Fix any scenario $\omega$. The set $T_i = T_i(\bar{\tau}_i, \omega)$ is monotonically decreasing in $\bar{\tau}_i$ for any $\omega$ (i.e., the lower the threshold $\bar{\tau}_i$, the more items are offered to machine $i$). Moreover, $\ba_i(\bar{\tau}_i, \omega)$ is monotonically increasing in $\bar{\tau}_i$ (i.e., the higher the threshold $\bar{\tau}_i$, the longer it takes for col $i$ to get activated, so the higher it is its marginal cost at the time of activation, as the gradient of $f$ is non-decreasing).
    \end{claim}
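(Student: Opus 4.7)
The plan is to couple two runs of the algorithm on the same scenario $\omega$ but with thresholds $\bar{\tau}_i^{(1)} < \bar{\tau}_i^{(2)}$ for machine $i$; call them run $1$ and run $2$. The starting observation is that $\bar{\tau}_i$ enters the algorithm only via the activation test for machine $i$, so until $i$ activates in either run, the activation vector $y$, the offer sets of all machines, and the internal states of the single-machine subroutines coincide across the two runs. Letting $t_\ell$ denote the arrival index at which $i$ activates in run $\ell$ ($+\infty$ if never), the identity $\tau_i(t) = \bar{\tau}_i \cdot \tfrac{\OptBSP \cdot \Delta_i f(y(t))}{10(s+1)^p B}$ gives $\tau_i^{(1)}(t) < \tau_i^{(2)}(t)$ throughout this common phase, so the condition $\Opt_i(O_i, b_i) \ge \tau_i$ triggers in run $1$ no later than in run $2$; thus $t_1 \le t_2$.

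Part (2) is then immediate: the activation vector at time $t_1$ coincides between the two runs, and since activations are irreversible, only more machines in $[m]\setminus\{i\}$ can become active in run $2$ between $t_1$ and $t_2$, whence $y^{(2)}(t_2) \ge y^{(1)}(t_1)$ componentwise (with $y_i = 0$ at the activation moment in both). The gradient monotonicity property of $p$-bounded functions (\Cref{def:p-bounded}) then yields $\ba_i^{(2)} = \Delta_i f(y^{(2)}(t_2)) \ge \Delta_i f(y^{(1)}(t_1)) = \ba_i^{(1)}$.

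For part (1) I plan to prove $T_i^{(2)} \subseteq T_i^{(1)}$ by induction on the arriving job index, strengthened to additionally assert that every machine activated strictly before $t_1$ has identical internal state in the two runs at all times. The strengthening powers the induction: a pre-$t_1$ machine $s$ receives only items not taken by machines that activated before $s$; all such predecessors also activated before $t_1$, so by induction they behave identically in the two runs and the input stream to $s$ is the same. Consequently, at every time the set of items that survive the pre-$t_1$ actives is the same in both runs. Now take any $j \in T_i^{(2)}$: in run $2$, $j$ must have passed through the pre-$t_1$ actives without being taken, so the same holds in run $1$. In the priority order of run $1$, the machine immediately after the pre-$t_1$ actives is $i$ itself — since only one machine activates per arrival and $t_1$ is by definition the earliest post-$t_1$ activation — so $j$ is offered to $i$, giving $j \in T_i^{(1)}$.

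The main obstacle, which the strengthened induction sidesteps, is that the set of machines activating after $t_1$ may genuinely differ between the two runs: some $i' \ne i$ can activate earlier in run $2$ than in run $1$ because $i$ absorbing items in run $1$ starves the inactive pool that feeds $i'$ (or vice versa). These post-$t_1$ activations are nevertheless irrelevant for the inclusion $T_i^{(2)} \subseteq T_i^{(1)}$: in run $1$, machine $i$ is placed in the priority chain immediately after the pre-$t_1$ actives and ahead of every post-$t_1$ active, so no later activation in run $1$ can intercept a job that would otherwise reach $i$. This completes the plan.
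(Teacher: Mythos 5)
Your argument takes the same coupling approach as the paper's proof: fix $\omega$, identify the first arrival $t_1$ at which the two runs diverge (when $i$ activates under the smaller threshold), observe that the machines with higher priority than $i$ behave identically in the two runs, and then conclude that in run~1 every job that survives those higher-priority machines is offered to $i$ (this is the paper's bullet chain $T_i = [n]\setminus\Alg_{\mathrm{prev}}$, $T_i'\subseteq [n]\setminus\Alg_{\mathrm{prev}}'$, $\Alg_{\mathrm{prev}}=\Alg_{\mathrm{prev}}'$). Your part~(2) is likewise the paper's argument via gradient monotonicity and the fact that $y$ never decreases.

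One claim you make in passing is false and should be repaired: you justify ``the machine immediately after the pre-$t_1$ actives is $i$ itself'' by asserting that ``only one machine activates per arrival.'' This is not what \Cref{alg:p-bounded} does: in step~3(e), if the newly activated machine declines to schedule $j$, the algorithm continues offering $j$ to further inactive machines, which may in turn activate. So on the arrival at $t_1$ several machines can activate, and any of them occurring earlier than $i$ in the offer chain has strictly higher priority than $i$ in run~1 and would sit between your ``pre-$t_1$ actives'' and $i$. The fix is to take the comparison set to be the machines that are active at the moment $i$ is offered the triggering item at $t_1$ (so including the machines activated earlier on that same arrival); this set is identical in the two runs, your strengthened induction applies to it without change, and the conclusion that $i$ is the next machine in run~1's priority chain then holds by definition of $t_1$. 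With this adjustment the rest of your argument --- in particular the closing observation that post-$t_1$ activations differing between the runs cannot intercept a job on its way to $i$ in run~1 --- is correct and coincides with what the paper proves.
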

    \begin{proof}[Proof of Claim \ref{claim:mono-new2}]
    Let $\bar{\tau}_i < \bar{\tau}_i'$. Consider the possible executions of the algorithm using thresholds $\bar{\tau}_i$ and $\bar{\tau}_i'$ for machine $i$, given $\omega$. Let $T_i$ and $T_i'$ respectively be the total items offered to machine $i$ in each run. We seek to show that $T_i \supseteq T_i'$.

    If in both executions, machine $i$ activates at the same time, then $T_i = T_i'$ and $\ba_i(\bar{\tau}_i, \omega) = \ba_i(\bar{\tau}_i', \omega)$. Thus, we may assume that at some arrival $t$, machine $i$ activates in only one of these executions. Since $\bar{\tau}_i < \bar{\tau}_i'$, it must be the case that $i$ activates at time $t$ with threshold $\bar{\tau}_i$, but not with threshold $\bar{\tau}_i'$. From this, we get the second part of our claim, since $\ba_i(\bar{\tau}_i, \omega) \leq \ba_i(\bar{\tau}_i', \omega)$ follows from the fact that $\ba_i$ only increases monotonically in the execution with $\bar{\tau}_i'$ after time $t$. 
    
    Let $A_{prev}$ be the set of active machines prior to time $t$, which is identical for both executions. Let $\Alg_{prev}$ and $\Alg_{prev}'$ denote the items assigned to machines in $A_{prev}$ by the end of each respective execution. Notice that from time $t$ onward, all items are offered to the machines of $A_{prev}$ \emph{before any other machine}. As a consequence, we have the following:
    \begin{itemize}
        \item $T_i = [n] \setminus \Alg_{prev}$, since any item not taken by $A_{prev}$ is then offered to $i$ in the execution with $\bar{\tau}_i$.
        \item $T_i' \subseteq [n] \setminus \Alg_{prev}'$, since any item taken by $A_{prev}$ is NOT offered to $i$ in the execution with $\bar{\tau}_i'$.
        \item $\Alg_{prev} = \Alg_{prev}'$ since the assignments to $A_{prev}$ are made independently of any other machine.
    \end{itemize}
    Therefore, we conclude $T_i \supseteq T_i'$, proving \Cref{claim:mono-new2}.
\end{proof}

    Notice that if $\Pr[i \in A \mid \omega] > 0$, then \Cref{claim:mono-new2} guarantees that there is some $\bar{\tau}_{max} = \bar{\tau}_{max}(\omega)$ such that $i \in A$ if and only if $\bar{\tau}_i \leq \bar{\tau}_{max}$. In addition, we seek to show that when $i$ activates, there is a constant chance that $\bar{\tau}_i$ is \textit{strictly} less than $\bar{\tau}_{max}$.
    
    \begin{claim}
    \label{lem:threshold-probabilities}
    In every scenario $\omega$, we have
    \[
    \Pr\bigg[\bar{\tau}_i \leq \bar{\tau}_{max} - \frac{1}{3 \log m} ~\bigg|~ \omega\bigg] \geq \frac{1}{2} \Pr[i \in A \mid \omega] - \frac{1}{2 m^3}.
    \]
    \end{claim}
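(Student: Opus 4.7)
The plan is to exploit the geometric structure of the distribution on $\bar{\tau}_i$: the whole point of defining $\bar{\tau}_i = (1 - k/(3\log m))^+$ with probability $2^{-k-1}$ is that each ``downward step'' of size $1/(3\log m)$ in the support cuts the tail probability exactly in half. Combined with the monotonicity from Claim~\ref{claim:mono-new2}, this is essentially all that is needed.

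First, I would condition on the scenario $\omega$. Since $\omega$ is independent of $\bar{\tau}_i$, the latter still has its unconditional distribution under the conditioning. By Claim~\ref{claim:mono-new2} and the discussion just before the statement, either $\Pr[i\in A\mid \omega]=0$ (in which case the inequality is trivial since the right-hand side is non-positive), or the event $\{i\in A\}$ is exactly $\{\bar{\tau}_i \le \bar{\tau}_{\max}(\omega)\}$ for a deterministic value $\bar{\tau}_{\max}(\omega)$ lying in the support of $\bar{\tau}_i$. So I may write $\bar{\tau}_{\max} = (1 - k^\star/(3\log m))^+$ for some integer $k^\star\ge 0$.

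Next, I would do the direct computation using the identity
\[
\Pr\!\bigl[\bar{\tau}_i \le 1 - k/(3\log m)\bigr] \;=\; 2^{-k} \qquad \text{for integer } k\in\{0,1,\ldots,\lfloor 3\log m\rfloor\},
\]
which follows immediately from the definition of the distribution. In the ``interior'' case $k^\star+1 \le 3\log m$, we have $\bar{\tau}_{\max} - 1/(3\log m) = 1 - (k^\star+1)/(3\log m) \ge 0$, so
\[
\Pr\!\bigl[\bar{\tau}_i \le \bar{\tau}_{\max} - 1/(3\log m)\,\big|\,\omega\bigr] \;=\; 2^{-(k^\star+1)} \;=\; \tfrac{1}{2}\,\Pr[i\in A\mid \omega],
\]
which gives the claimed inequality with no slack needed.

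Finally, I would handle the boundary case $k^\star+1 > 3\log m$, where $\bar{\tau}_{\max} - 1/(3\log m)$ can become $\le 0$ and the left-hand side of the claim may drop to (nearly) zero. Here $\Pr[i\in A\mid \omega]$ is upper bounded by the total mass of $\bar{\tau}_i$ at the bottom of its support, which is at most $2\cdot 2^{-\lceil 3\log m\rceil} \le 2/m^3$; hence $\tfrac{1}{2}\Pr[i\in A\mid \omega] \le 1/m^3$, and the $-1/(2m^3)$ term in the statement comfortably absorbs the deficit. The only subtlety here is keeping track of whether $3\log m$ is integral and how the $(\cdot)^+$ in the definition of $\bar{\tau}_i$ piles up mass at $0$; this bookkeeping is precisely the reason the $1/(2m^3)$ correction appears, and it is the one place in the argument that requires care, though not much depth.
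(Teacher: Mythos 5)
Your proof matches the paper's argument step for step: both exploit the geometric halving structure of the $\bar{\tau}_i$ distribution to obtain the exact factor $\frac{1}{2}$ in the interior case $k^\star + 1 \le 3\log m$, and both dispatch the boundary case by bounding the probability mass near the bottom of the support by $O(1/m^3)$ so that the $-1/(2m^3)$ slack absorbs the deficit. The integrality and $(\cdot)^+$ bookkeeping you flag at the end is handled at the same level of care (i.e., informally) in the paper's own proof, so there is no gap relative to the intended argument.
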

    \begin{proof}[Proof of \Cref{lem:threshold-probabilities}]
    The intuition is that we want to avoid an event where, as soon as $i$ activates, no more jobs arrive which can be scheduled on $i$. However, if we know that $i$ activates with $\bar{\tau}_i < \bar{\tau}_{max}$ strictly, then there must be at least enough jobs arriving in the future that, had we used $\bar{\tau}_{max}$, we would still have activated $i$. Crucially, the distribution from which we sampled $\bar{\tau}_i$ guarantees that this happens with probability roughly $\frac{1}{2}$, and that the gap $\bar{\tau}_{max} - \bar{\tau}_{i}$ is large enough to effectly lower bound $|\Opt_i(O_i^{post})|$ in this event.
    
    Recall that $\bar{\tau}_i = (1 - \frac{k}{3 \log m})^+$ with probability $2^{-k-1}$ for $k \geq 0$. Therefore, let $\bar{\tau}_{max} = (1 - \frac{k'}{3 \log m})$ for some value of $k'$. Note that if $k' \geq 3 \log m-1$, then $\Pr[i \in A] = 2 \cdot 2^{-k'-1} \leq \frac{2}{m^3}$. So, the claim holds.
    Otherwise,  $\Pr[\bar{\tau}_i \leq \bar{\tau}_{max} - \frac{1}{3 \log m}] = \Pr[\bar{\tau}_i \leq (1 - \frac{k'+1}{3 \log m})] = \frac{1}{2} \Pr[\bar{\tau}_i \leq (1 - \frac{k'}{3 \log m})] = \frac{1}{2} \Pr[\bar{\tau}_i \leq \bar{\tau}_{max}] = \frac{1}{2} \Pr[i \in A]$.
\end{proof}

Using \Cref{lem:threshold-probabilities}, we can now obtain the first half of the lower bound on $\E|\Alg_i|$ in \Cref{lem:unifiedLemma}, which we get by examining the value the algorithm obtains when $\bar \tau_i < \bar \tau_{max}$.
\begin{claim}\label{lem:alg_i-bound1}
\begin{align}
    \E[\Alg_i]  \ge \E \left[\ind(i \in A) \cdot  \frac{\alpha \cdot \ba_i \cdot \OptBSP}{240 B \log^2 m \cdot (s+1)^{p}} \right]  - \frac{2\OptBSP}{m^2}.\label{eq:alg_i-bound1}
\end{align}
\end{claim}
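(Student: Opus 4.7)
The plan mirrors the warm-up proof of \Cref{claim:warm-up-2}, generalized to handle (i) the marginal $\ba_i$ in place of a fixed cost $c_i$, and (ii) the internal $(\alpha,c)$-solvable algorithm on each machine. I would first fix a scenario $\omega$ encompassing all randomness outside of machine $i$ (i.e.\ $\bar{\tau}_{i'}$, $M_{i'}$ and internal randomness for $i' \neq i$), and assume $\Pr[i \in A \mid \omega] > 0$ (else the right-hand side is at most $-\tfrac{2\OptBSP}{m^2}$ and the claim is immediate). By \Cref{claim:mono-new2} there is a threshold $\bar{\tau}_{max}(\omega)$ such that $i \in A$ iff $\bar{\tau}_i \leq \bar{\tau}_{max}$, and by \Cref{lem:threshold-probabilities} the ``good'' event $\cE := \{\bar{\tau}_i \leq \bar{\tau}_{max} - \tfrac{1}{3\log m}\}$ satisfies $\Pr[\cE \mid \omega] \geq \tfrac{1}{2}\Pr[i \in A \mid \omega] - \tfrac{1}{2m^3}$.

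The crux of the argument will be to show that on $\cE$,
\[
|\Opt_i(O_i^{post})| \,\geq\, \tfrac{1}{30\log m} \cdot \tfrac{\OptBSP \cdot \ba_i(\bar{\tau}_{max}, \omega)}{(s+1)^p B} - 1,
\]
where I deliberately use the ``borderline'' marginal $\ba_i(\bar{\tau}_{max}, \omega)$ rather than the realized $\ba_i$ at threshold $\bar{\tau}_i$ — this conservative substitution is what will make the final step close the argument. To prove this, I would couple the actual execution with the hypothetical execution at threshold $\bar{\tau}_{max}$, noting that they coincide up to the moment $i$ activates in the former. By \Cref{claim:mono-new2} the total offer sets satisfy $T_i \supseteq T_i^{max} \supseteq O_i^{max}$, and the activation criterion in the hypothetical execution yields $|\Opt_i(O_i^{max})| \geq \bar{\tau}_{max} \cdot \tfrac{\OptBSP \cdot \ba_i(\bar{\tau}_{max}, \omega)}{10(s+1)^p B}$, hence $|\Opt_i(T_i)|$ is lower-bounded by the same quantity. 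At $i$'s activation in the actual execution, the algorithm's check gives $|\Opt_i(O_i)| \leq \bar{\tau}_i \cdot \tfrac{\OptBSP \cdot \ba_i(\bar{\tau}_i, \omega)}{10(s+1)^p B} + 1 \leq \bar{\tau}_i \cdot \tfrac{\OptBSP \cdot \ba_i(\bar{\tau}_{max}, \omega)}{10(s+1)^p B} + 1$, using the marginal monotonicity $\ba_i(\bar{\tau}_i, \omega) \leq \ba_i(\bar{\tau}_{max}, \omega)$ from \Cref{claim:mono-new2}. Combining via the elementary subadditivity $\Opt_i(S_1 \cup S_2) \leq \Opt_i(S_1) + \Opt_i(S_2)$ for disjoint $S_1, S_2$ (any optimal packing for $S_1 \cup S_2$, restricted to each part, remains feasible under the per-machine norm budget) and using $\bar{\tau}_{max} - \bar{\tau}_i \geq \tfrac{1}{3\log m}$ on $\cE$ gives the displayed bound.

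The rest is routine. Since $M_i$ is drawn uniformly from the length-$O(\log m)$ geometric sequence $\{\OptBSP/2^k\}_{k=1}^{\floor{2\log m}}$, whenever $|\Opt_i(O_i^{post})| \geq \OptBSP/m^2$ some value of $M_i$ lies within a factor $2$ of it with probability $\Omega(1/\log m)$ (independently of $\bar{\tau}_i$), and conditional on this the $(\alpha,c)$-solvability of \ref{eq:normPack}$_{\|\cdot\|_i}$ yields expected value at least $\tfrac{\alpha}{2}|\Opt_i(O_i^{post})|$ for the internal algorithm; the complementary regime $|\Opt_i(O_i^{post})| < \OptBSP/m^2$ forces $\ba_i(\bar{\tau}_{max},\omega) = O(\log m \cdot (s+1)^p B / m^2)$ by the previous bound, and its contribution is absorbed into the additive $-\tfrac{2\OptBSP}{m^2}$ slack. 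Multiplying $\Pr[\cE \mid \omega]$, the $\Omega(1/\log m)$ factor over $M_i$, and the lower bound on $|\Opt_i(O_i^{post})|$ gives
\[
\E[|\Alg_i| \mid \omega] \,\geq\, \Pr[\cE \mid \omega] \cdot \Omega\!\left(\tfrac{\alpha \OptBSP}{\log^2 m \cdot (s+1)^p B}\right) \cdot \ba_i(\bar{\tau}_{max}, \omega) \,-\, O(\OptBSP/m^2).
\]
To match the $\ba_i$ appearing in the claim, I invoke once more the monotonicity of $\ba_i(\cdot, \omega)$ (\Cref{claim:mono-new2}), which gives $\Pr[i \in A \mid \omega] \cdot \ba_i(\bar{\tau}_{max}, \omega) \geq \E[\ind(i \in A)\, \ba_i \mid \omega]$. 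Plugging in $\Pr[\cE\mid \omega] \geq \tfrac{1}{2}\Pr[i \in A \mid \omega] - \tfrac{1}{2m^3}$ and taking expectation over $\omega$ yields the claim with the stated constant $240$ (from tracking the explicit $\tfrac{1}{2}$, $\tfrac{1}{30}$, $\tfrac{1}{10}$, and $\tfrac{1}{2}$ factors). The main obstacle is the coupling in the middle paragraph — specifically, choosing to bound $|\Opt_i(O_i^{post})|$ in terms of $\ba_i(\bar{\tau}_{max},\omega)$ rather than the actually attained $\ba_i(\bar{\tau}_i,\omega)$, since the latter would leave a mismatch between $\E[\ind(\cE)\,\ba_i]$ and $\E[\ind(i \in A)\,\ba_i]$ that one cannot close in general.
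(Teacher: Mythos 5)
Your proposal follows the paper's proof essentially step for step: fix the scenario $\omega$ excluding machine $i$, invoke \Cref{claim:mono-new2} for the existence of $\bar{\tau}_{\max}(\omega)$ and the monotonicity of $\ba_i(\cdot,\omega)$, invoke \Cref{lem:threshold-probabilities} for the probability of the good event, couple the two executions to lower-bound $|\Opt_i(O_i^{post})|$ by a quantity proportional to $\ba_i(\bar{\tau}_{\max},\omega)$, multiply by the $\Omega(1/\log m)$ chance of a good $M_i$ guess and the internal $\alpha$-approximation, and finally pass from $\ba_i(\bar{\tau}_{\max},\omega)$ back to the realized $\ba_i$ via $\E_\omega\bigl[\Pr(i\in A\mid\omega)\cdot\ba_i(\bar{\tau}_{\max},\omega)\bigr]\geq\E\bigl[\ind(i\in A)\,\ba_i\bigr]$. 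In particular, your observation that one must work with $\ba_i(\bar{\tau}_{\max},\omega)$ rather than the realized $\ba_i(\bar{\tau}_i,\omega)$, so that the monotone substitution closes the argument, is exactly the key point of the paper's proof. Your $\pm 1$ accounting of $|\Opt_i(O_i)|$ is slightly more careful than the paper (which silently treats $|\Opt_i(O_i)|\leq\tau_i$ rather than $\leq\tau_i+1$); the extra $-1$ in your lower bound on $|\Opt_i(O_i^{post})|$ is absorbable, though your stated absorption into the $\OptBSP/m^2$ slack needs a word or two more (the point is that when the $-1$ matters, $|\Opt_i(O_i^{post})|=0$ and the additive error already covers the lost term), but this is a minor wrinkle that does not affect the logic or the constant. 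Overall, this is the same argument as the paper's.
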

    
    \begin{proof}
    We fix a scenario $\omega$. So, the only remaining randomness is stemming in $\bar{\tau}_i$, $M_i$, and the internal randomness of the algorithm of machine $i$.

    Let us first discuss the effect of different outcomes $\bar{\tau}_i \leq \bar{\tau}_{max}$. Note that we have\footnote{where we omit the scenario $(\bar{\tau}_i, \omega)$, e.g., $T_i = T_i(\bar{\tau}_i, \omega)$} $T_i \supseteq T_i(\bar{\tau}_{max}, \omega)$ by \Cref{claim:mono-new2}, so we have 
    \[
    |\Opt_i(O_i)| + |\Opt_i(O_i^{post})| \geq |\Opt_i(T_i)| \geq |\Opt_i(T_i(\bar{\tau}_{max}, \omega))|.
    \]
    Furthermore,
    \[
    |\Opt(T_i(\bar{\tau}_{max}, \omega))| \ge \bar{\tau}_{max} \cdot \frac{\ba_i(\bar{\tau}_{max}, \omega) \cdot \OptBSP}{10B \cdot (s+1)^{p}}
    \]
    because machine $i$ gets activated in scenario $(\bar{\tau}_{max}, \omega)$ and
    \[
    |\Opt(O_i)| \leq \bar{\tau}_i \cdot \frac{\ba_i(\bar{\tau}_i, \omega) \cdot \OptBSP}{10B \cdot (s+1)^{p}};
    \]
    because after $\Opt(O_i)$ has surpassed this quantity, no further elements will be added to $O_i$.
    
    %
    Since by Claim \ref{claim:mono-new2}, we have $\ba_i(\bar{\tau}_{max}, \omega) \ge \ba_i(\bar{\tau}_i, \omega)$, we get
    \begin{align}\label{eq:opt-opost-lowerb}
    |\Opt(O_i^{post})| \geq |\Opt(T_i(\bar{\tau}_{max}, \omega))| - |\Opt(O_i)| \geq (\bar{\tau}_{max} - \bar{\tau}_i) \cdot \frac{\ba_i(\bar{\tau}_{max}, \omega) \cdot \OptBSP}{10 B \cdot (s+1)^{p}}. 
    \end{align}

    Next, we take the expectation over $\bar{\tau}_i$ but keep the scenario $\omega$ fixed. Note that this does not affect $\ba_i(\bar{\tau}_{max}, \omega)$, since $\omega$ fixes $\bar{\tau}_{max}$. So, \Cref{lem:threshold-probabilities} implies
    \begin{align*}
    \E[|\Opt(O^{post}_i)| ~\mid~ \omega] & \geq \Pr\bigg[\bar{\tau}_i \leq \bar{\tau}_{max} - \frac{1}{3 \log m} ~\bigg|~ \omega\bigg] \cdot  \frac{\ba_i(\bar{\tau}_{max}, \omega) \cdot \OptBSP}{30 B \log m \cdot (s+1)^{p}} \\
    & \geq \frac{1}{2} \bigg(\Pr(i \in A \mid \omega) - \frac{1}{2m^3}\bigg) \cdot \frac{\ba_i(\bar{\tau}_{max}, \omega) \cdot \OptBSP}{30 B \log m \cdot (s+1)^{p}}.
    \end{align*}

    Recall that $\Alg_i$ obtains an $\alpha/2$ approximation of $\Opt(O_i^{post})$ as long as the guess $M_i$ is ``good,'' meaning $\frac{1}{2}|\Opt(O_i^{post})| \leq M_i \leq |\Opt(O_i^{post})|$. Moreover, this occurs with probability at least $\frac{1}{2\log m}$ unless $|\Opt(O_i^{post})| < \frac{\OptBSP}{m^2}$. Therefore, we get 
    %
    \begin{align*}
        \E[|\Alg_i| ~\mid~ \omega] &\ge \E\bigg[\frac{\alpha}{2} \cdot \frac{1}{2 \log m} \cdot \E[\Opt(O^{post}_i)] ~\Big|~ \omega\bigg] - \frac{\OptBSP}{m^2}\\
        & \ge \bigg(\Pr(i \in A \mid \omega) - \frac{1}{2m^3}\bigg)\cdot \frac{\alpha \cdot \ba_i(\bar{\tau}_{max}, \omega) \cdot \OptBSP}{240 B \log^2 m \cdot (s+1)^{p}} - \frac{\OptBSP}{m^2}.
    \end{align*}

    As argued before, we always have $\ba_i = \Delta_i f = f(\cdot + e_i) - f(\cdot) \le (s+1)^p B$, because $f(y + e_i) \le (f(y)^{1/p} + f(e_i)^{1/p})^p \le (s + 1)^p B$, the last inequality because of the stopping rule of the algorithm and because we assumed $f(e_i) \le B$ for all machines $i$. Then taking expectations w.r.t $\omega$ on the previous displayed inequality and using $\E_\omega[\Pr(i \in A \mid \omega) \cdot \ba_i(\bar{\tau}_{max},\omega)] \geq \E_{\omega, \bar\tau_i}[\ind({i \in A}) \cdot \ba_i]$, completes the proof of \Cref{lem:alg_i-bound1}.
\end{proof}

    Next, we prove the second half of \Cref{lem:unifiedLemma}, for which we seek to lower bound $\E |\Alg_i|$ in terms of $\Opt_i(T_i)$.

\begin{claim}\label{lem:alg_i-bound2}
    For every machine $i \in [m|$, we have
    \begin{align}
    \E|\Alg_i| \ge \E\bigg[\ind(i \in A) \cdot \frac{\alpha}{4 \log m} \cdot \left(|\Opt_i(T_i)| - \frac{\ba_i \cdot \OptBSP}{10B \cdot (s+1)^{p}}\right)^+\bigg] - \frac{\OptBSP}{m^2}.\label{eq:alg_i-bound2}
    \end{align}
\end{claim}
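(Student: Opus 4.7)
The plan is to mimic the structure of the proof of \Cref{lem:alg_i-bound1}, but replace its ``threshold gap'' argument with a direct subadditivity bound on $|\Opt_i(\cdot)|$. First I will fix a scenario $\omega$ freezing the randomness on all machines other than $i$, and also fix the threshold $\bar\tau_i$; this pins down $O_i$, $O_i^{post}$, $T_i = O_i \cup O_i^{post}$, the marginal $\ba_i$, and the indicator $\ind(i \in A)$, leaving only the guess $M_i$ and the internal randomness of the \ref{eq:normPack}$_{\|\cdot\|_i}$ subroutine on $i$ as unresolved. When $i \notin A$ the claim is trivial, so I condition on $i \in A$.

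The only genuinely new step, and the one I expect to be the main conceptual obstacle, is the pointwise inequality
\[
|\Opt_i(O_i^{post})| ~\ge~ \Bigl(|\Opt_i(T_i)| - \tfrac{\ba_i\, \OptBSP}{10\,B\,(s+1)^p}\Bigr)^+.
\]
I will establish it via two observations. First, subadditivity of $|\Opt_i(\cdot)|$ over the partition $T_i = O_i \cup O_i^{post}$: any feasible packing $S^* \subseteq T_i$ in budget $b_i$ splits as $S^* \cap O_i$ and $S^* \cap O_i^{post}$, each individually feasible, so $|\Opt_i(T_i)| \le |\Opt_i(O_i)| + |\Opt_i(O_i^{post})|$. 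Second, by the activation rule, machine $i$ remained inactive while items accumulated in $O_i$, so at the moment $i$ fires we have $|\Opt_i(O_i)| \le \tau_i = \bar\tau_i \cdot \tfrac{\OptBSP \cdot \ba_i}{10(s+1)^p B} \le \tfrac{\OptBSP \cdot \ba_i}{10(s+1)^p B}$ (using $\bar\tau_i \le 1$). The only delicate point here is that the triggering item can push $|\Opt_i(O_i)|$ past $\tau_i$ by at most one unit, but this additive slack is easily absorbed into the $\OptBSP/m^2$ error term appearing later.

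The remainder is a direct reuse of the final part of the proof of \Cref{lem:alg_i-bound1}. The internal \ref{eq:normPack}$_{\|\cdot\|_i}$ algorithm, conditional on a good guess $M_i \in [\tfrac{1}{2}|\Opt_i(O_i^{post})|,\, |\Opt_i(O_i^{post})|]$, returns at least $\tfrac{\alpha}{2}|\Opt_i(O_i^{post})|$ items in expectation; since $M_i$ is drawn uniformly from a dyadic grid of size $\floor{2\log m}$, the probability of a good guess is at least $\tfrac{1}{2\log m}$ as long as $|\Opt_i(O_i^{post})| \ge \OptBSP/m^2$, while the complementary case costs at most $\OptBSP/m^2$. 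Together these give
\[
\E\bigl[\,|\Alg_i| \,\big|\, \bar\tau_i,\omega,\, i \in A\bigr] ~\ge~ \frac{\alpha}{4\log m}\,|\Opt_i(O_i^{post})| - \frac{\OptBSP}{m^2}.
\]
Plugging the subadditivity bound into this inequality, multiplying by $\ind(i \in A)$, and taking full expectation over $\bar\tau_i$ and $\omega$ yields the claim. Once the subadditivity inequality is formulated, every remaining step is mechanical.
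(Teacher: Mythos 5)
Your proof is correct and follows the same approach as the paper's: subadditivity $|\Opt_i(T_i)|\le|\Opt_i(O_i)|+|\Opt_i(O_i^{post})|$, the activation-rule bound $|\Opt_i(O_i)|\lesssim\tau_i$, and the $\tfrac{1}{2\log m}$ good-guess probability for $M_i$. The paper's proof even states the bound $|\Opt_i(O_i)|\le\bar\tau_i\cdot\tfrac{\ba_i\OptBSP}{10B(s+1)^p}$ without the ceiling/slack caveat you raise, so your treatment is if anything more careful.
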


\begin{proof}
    For any outcome of the random draws, we have
    \[
    |\Opt_i(O^{post}_i)| \geq |\Opt_i(T_i)| - |\Opt_i(O_i)| \geq \left(|\Opt_i(T_i)| - \frac{\ba_i(\bar{\tau}_i, \omega) \cdot \OptBSP}{10B \cdot (s+1)^{p}}\right)^+.
    \]
    Notice that both the fact whether $i \in A$ and the set $T_i$ (and hence $\OPT_i(T_i)$) are independent of $M_i$. Therefore, the approximation guarantee of the algorithm, along with the probability of a ``good'' guess $M_i$, gives us
    \begin{align*}
    \E[|\Alg_i| ~\mid~ \omega, \bar{\tau}_i,  ] &\ge \ind(i \in A) \cdot \frac{\alpha |\OPT_i(O_i^{post})|}{4 \log m} - \frac{\OptBSP}{m^2} \\
    & \ge \ind(i \in A) \cdot \frac{\alpha}{4 \log m} \cdot \left(|\Opt_i(T_i)| - \frac{\ba_i(\bar{\tau}_i, \omega) \cdot \OptBSP}{10B \cdot (s+1)^{p}}\right)^+ - \frac{\OptBSP}{m^2}.
    \end{align*}
    Thus, taking expectations completes the proof of \Cref{lem:alg_i-bound2}.
\end{proof}

\paragraph{Putting the Pieces Together.}
    Now, we can combine  \Cref{lem:alg_i-bound1} and \Cref{lem:alg_i-bound2} to obtain \Cref{lem:unifiedLemma}.
    Averaging the bounds \eqref{eq:alg_i-bound1} and \eqref{eq:alg_i-bound2}, we now get 
    \begin{align*}
        \E|\Alg_i| \geq& \frac{1}{2}\E\left[\ind(i \in A) \cdot \left( \frac{\alpha \OptBSP\cdot \ba_i}{240 B \log^2 m \cdot (s+1)^{p}} + \frac{\alpha}{4 \log m} \cdot \left(|\Opt_i(T_i)| - \frac{\ba_i \cdot \OptBSP}{10B \cdot (s+1)^{p}}\right)^+ \right)\right]- \frac{2\OptBSP}{m^2}\\
        \geq& \frac{\alpha}{480\log^2 m}\E\left[\ind(i \in A) \cdot \left( \frac{\OptBSP\cdot \ba_i}{B \cdot (s+1)^{p}} + \left(|\Opt_i(T_i)| - \frac{\ba_i \cdot \OptBSP}{B \cdot (s+1)^{p}}\right)^+ \right)\right]- \frac{2\OptBSP}{m^2}\\
        \geq& \frac{\alpha}{480\log^2 m}\E\left[\ind(i \in A) \cdot  \max\left\{\frac{\OptBSP \cdot \ba_i}{B \cdot (s+1)^{p}}, ~|\Opt_i(T_i)|\right\}  \right]- \frac{2\OptBSP}{m^2} ,
    \end{align*}
    which proves \Cref{lem:unifiedLemma} with  constant $C = 480$.
\end{proof}

\subsection{Reducing \ref{eq:packGenSched} to \ref{eq:budgetGenSched}} \label{sec:reductionP}

Now with our algorithmic guarantees for \ref{eq:budgetGenSched} given by \Cref{thm:pboundBudgetSched}, we will complete the proof of \Cref{thm:pboundPackSched}, which  reduces \ref{eq:packGenSched} to \ref{eq:budgetGenSched}.

    Recall that the difference between the problems \ref{eq:packGenSched} and \ref{eq:budgetGenSched} is that in the latter each machine has a fixed activation budget (and incurred load) $b_i$, while in the former the machines may incur any amount of load, depending on the jobs scheduled. The idea of the reduction is to ``discretize'' the possible loads of the machines in \ref{eq:packGenSched} and create an instance of \ref{eq:budgetGenSched} with one copy of the machine with each of these budgets.

\begin{proof}[Proof of \Cref{thm:pboundPackSched}]Suppose we are given an instance $I_{SP}$ of \ref{eq:packGenSched} satisfying the conditions of \Cref{thm:pboundPackSched} (with load budget $B$, $p$-bounded aggregate function $f$, and  norms $\|\cdot\|_i$) and a lower bound $M \leq \Opt(I_{SP})$). We create an instance $I_{BSP}$ of  \ref{eq:budgetGenSched} as follows: for each machine $i \in [m]$ in $I_{SP}$, we make $\ceil{\log m}$ copies $(i,1), (i,2), \dots, (i, \ceil{\log m})$ in $I_{BSP}$ with the same internal norm $\|\cdot\|_{i,\ell} := \|\cdot\|_i$ and processing times $p_{(i,\ell) jk} := p_{ijk}$, but the $\ell$th copy has load budget $b_{i,\ell} = {b_i^{\max}}\cdot {2^{-\ell}}$, where
    \[b_i^{\max} = \sup\{b \geq 0 : f(b \cdot e_i) \leq B\};
    \]
    in other words, $b_i^{\max}$ is the largest cost that machine $i$ can incur without exceeding the aggregate load budget on its own. The aggregate function $f'$ of the new instance $I_{BSG}$ is defined by employing the original aggregate function $f$ over the sum of the copies of the machines
    \begin{align*}
    f'((z_{i,\ell})_{i,\ell}) = f\Big(\sum_\ell z_{1,\ell} \,,\, \sum_\ell z_{2,\ell} \,,\, \ldots \,,\, \sum_\ell z_{m,\ell}\Big).
    \end{align*}
    Finally, the new instance has aggregate load budget $3^p B$.

    We first claim that $\Opt(I_{BSP}) \geq \Opt(I_{SP})$. To see this, let $x^*$ denote the optimal scheduling for $I_{SP}$ (so $x^*_{ijk}$ indicates the scheduling of job $h$ onto machine $i$ in way $k$ in this solution). We then construct an scheduling for $I_{BSP}$ by making, for each machine $i$, the scheduling $(x^*_{ijk})_{j,k}$ on the copy $(i, \ell_i)$ of machine $i$ with the smallest budget $b_{i,\ell_i}$ satisfying $\|(x_{ijk}^* p_{ijk})_{j,k}\|_i \leq b_{i,\ell_i}$. This scheduling clearly has objective value $\Opt(I_{SP})$ (i.e., schedules as many elements), so we just need to show that it is a feasible schedule in $I_{BSP}$.

    Since in the constructed schedule, for each machine $i$ only the $\ell_i$th copy is used, its total load is given by 
    \begin{align}
        f'(0,\ldots, b_{1,\ell_1}, 0, \ldots, b_{2,\ell_2}, 0, \ldots, b_{m,\ell_m}) = f(b_{1,\ell_1}, b_{2, \ell_2}, \ldots, b_{m,\ell_m}). \label{eq:redBudget1}
    \end{align}
    Moreover, by the choice of each copy $\ell_i$ used, we have either $b_{i, \ell_i} \leq 2 \|(x_{ijk}^*p_{ijk})_{j,k}\|_i$ or $b_{i, \ell_i}$ is the minimum load budget copy of machine $i$, namely $\ell_i = \ceil{\log m}$ and so $b_{i, \ell_i} = \frac{b^{max}_i}{m}$. Therefore, the vector $(b_{1,\ell_1}, \ldots, b_{m,\ell_m})$ is coordinate-wise dominated by the vector $2 \Lambda(x^*) + \frac{1}{m} (b^{max}_1,\ldots,b^{max}_m)$ (recall that the $i$th coordinate of $\Lambda(x^*)$ is precisely $\|(x_{ijk}^*p_{ijk})_{j,k}\|_i$). Then using the $p$-subadditivity of $f$ (\Cref{lem:p-subadditive}) we have:
    \begin{align*}
        f(b_{1,\ell_1}, b_{2, \ell_2}, \ldots, b_{m,\ell_m}) \le \Big(f(\Lambda(x^*))^{1/p} + f(\Lambda(x^*))^{1/p} + f(\tfrac{1}{m} (b^{max}_1,\ldots,b^{max}_m))^{1/p} \Big)^p \le 3^p B,
    \end{align*}
    where the last inequality follows because each of the 3 terms in the middle sum is at most $B^{1/p}$: $f(\Lambda(x^*)) \le B$ because of the feasibility of $x^*$, and using convexity of $f$ and then the definition of $b^{max}_i$ we have $f(\tfrac{1}{m} (b^{max}_1,\ldots,b^{max}_m)) \le \frac{1}{m} \sum_{i=1}^m f(b^{max}_i \cdot e_i) \le B$. Therefore, the constructed schedule is indeed feasible for the instance $I_{BSP}$, which has aggregate load budget $3^p B$, and gives the claim $\Opt(I_{BSP}) \geq \Opt(I_{SP})$.

    Then we can use the algorithm for \ref{eq:budgetGenSched} given by \Cref{thm:pboundBudgetSched} with $s = p$ (call it $\cA_{BSP}$) to solve the instance $I_{BSP}$ as follows. It is easy to verify that since $f$ is $p$-bounded, the aggregate function $f'$ of the instance $I_{BSP}$ also has this property.
    Thus, we can employ the algorithm $\cA_{BSP}$ on the instance $I_{BSP}$, providing it $\ceil{M}$ as a guess on the optimum $\Opt(I_{BSP})$. Since this is a lower bound $\ceil{M} \leq \Opt(I_{SP}) \leq \Opt(I_{BSP})$, our algorithm obtains the guarantees of \Cref{thm:pboundBudgetSched} on some prefix of the arrivals, where the hindsight optimum is $\ceil{M}$. Hence, this algorithm yields a solution $\bar{x}, \bar y$ (where $\bar{x}_{(i,\ell) j k}$ indicated whether job $j$ is assigned to machine $(i,\ell)$ in way $k$, and $\bar y_{i, \ell}$ indicated whether machine $(i, \ell)$ is active) that schedules in expectation at least $\Omega\big(\frac{\alpha}{(1 + 1/p)^p \log^2 m}\big) \cdot M = \Omega\big(\frac{\alpha}{\log^2 m}\big) \cdot M$ jobs, incurring at most $(3p)^pB$ total load. By combining the schedules of all copies of each machine $i$, we can convert this solution $\bar{x}$ to a solution $\tilde{x}$ for the original instance $I_{SP}$ that schedules, i.e., $\tilde{x}_{ijk} := \sum_\ell \bar{x}_{(i,\ell)jk}$. We just need to verify that the solution $\tilde{x}$ blows up the aggregate budget of $I_{SP}$ by at most a factor of $(3pc)^p$, namely, that $f(\Lambda(\tilde{x})) \le (3pc)^pB$:
    \begin{align*}
    f(\Lambda(\tilde{x})) = f\Big((\|(\tilde{x}_{ijk} p_{ijk})_{j,k}\|_i)_i\Big) &\le f\Big(\sum_k \|(\bar{x}_{(1,\ell)jk}\cdot p_{1jk})_{j,k}\|_1\,,\, \ldots\,,\, \sum_k \|(\bar{x}_{(m,\ell)jk}\cdot p_{mjk})_{j,k}\|_m\Big) \\
    &= f'\Big( (\|(\bar{x}_{(i,\ell)jk} \cdot p_{ijk})_{j,k}\|_i)_{i,\ell}  \Big) \\
    &\le f'\Big( (c\bar y_{i,\ell} b_{i,\ell})_{i,\ell}  \Big) ~\le~  f'\Big( (\bar y_{i,\ell} b_{i,\ell})_{i,\ell}  \Big) \cdot c^p ~\le~ (3pc)^p B,
    \end{align*}
    where the first inequality follows from the subadditivity of norms (i.e., $\|(\tilde{x}_{ijk} p_{ijk})_{j,k}\|_i = \|\sum_\ell (\bar{x}_{(i,\ell)jk} \cdot p_{ijk})_{j,k}\|_i \le \sum_\ell \|(\bar{x}_{(i,\ell)jk} \cdot p_{ijk})_{j,k}\|_i$), the second equality holds by definition of the function $f'$, and the last inequality by the approximate feasibility of $\bar{x}$ for the instance $I_{BSP}$. Thus, $\tilde{x}$ is the desired solution, which concludes the proof of the theorem.
\end{proof}

\medskip


\section{Norm Aggregate Function} \label{sec:norm}
    We now design competitive algorithms for the \ref{eq:covGenSched} problem when the aggregate function $f$ is a norm. Recall again that the norm setting is not directly captured by the $p$-bounded setting from the previous section because, except for $\ell_1$, norms do not have monotone gradients. To differentiate the aggregate function norm from the norms on the machines, in this section we will call them \emph{outer norm} and \emph{inner norms}, respectively.

Since every norm is $1$-subadditive,  our reduction from \Cref{sec:reduction}  implies that \ref{eq:covGenSched} can be reduced to \ref{eq:packGenSched}: more precisely, if \ref{eq:packGenSched} is $(\alpha,c)$-solvable (i.e., given a lower bound $M$ on $\OPT_{SP}$ we can schedule $\alpha M$ jobs while violating the budget $B$ by a factor of $c$), we can obtain a $O(\frac{1}{\alpha} \cdot \log n \cdot c)$-competitive algorithm for \ref{eq:covGenSched}. This section  focuses on solvability of  \ref{eq:packGenSched}  for different classes of norms. 

\subsection{\ref{eq:packGenSched} with Norms and their Compositions} \label{sec:normComp}

Recall that \ref{eq:packGenSched}, with outer norm $\|\cdot\|$, is the problem whose offline version is given by 
\[ \textstyle \max\{\sum_{i,j,k} x_{ijk}  \text{ s.t. } \|\Lambda(x)\| \le B~,~ \sum_{i,k} x_{ijk} \le 1~,~ x \in \{0,1\}^{mnr}\},\] 
where $\Lambda_i(x) = \|(x_{ijk} p_{ijk})_{j,k}\|_i$ is the load on  $i$-th machine. 

The following is the first main result of this section.

\begin{theorem}\label{thm:genBudgetNorm}
    Consider the online problem \ref{eq:packGenSched}$_{\|\cdot\|, \{\|\cdot\|_i\}_i}$, with outer norm $\|\cdot\|$, where for each inner norm $\|\cdot\|_i$ the problem \ref{eq:normPack}$_{\|\cdot\|_i}$ is $(\alpha, c)$-solvable. 
    Then \ref{eq:packGenSched}$_{\|\cdot\|, \{\|\cdot\|_i\}_i}$ is $(\alpha', O(c))$-solvable, where $\alpha'$ is given for different norms $\|\cdot\|$ below:
    \begin{enumerate}
        \item For (weighted) symmetric norms, we obtain $\alpha' =\Omega(\alpha / \log^2 m)$. 
        
        \item For the special case of the symmetric norm being $\ell_p$ or Top-$k$ norms, we obtain $\alpha' = \Omega(\alpha / \log m)$. 
    \end{enumerate}
\end{theorem}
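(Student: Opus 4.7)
The plan is to follow the two-step pattern of Section~\ref{sec:p-bound}: first reduce \ref{eq:packGenSched} to its budgeted counterpart \ref{eq:budgetGenSched}, and then design a random-threshold activation algorithm for the latter. For the reduction, I would re-use the discretization from the proof of Theorem~\ref{thm:pboundPackSched}: create $\lceil \log m \rceil$ copies $(i,\ell)$ of each machine $i$ with geometrically decreasing budgets $b_{i,\ell} = b_i^{\max}/2^\ell$, where $b_i^{\max} = \sup\{b : \|b\,e_i\| \le B\}$. Because every norm is $1$-subadditive, the calculations there simplify: the aggregate-budget blowup reduces from $3^p$ to the constant $3$, so it suffices to establish $(\alpha', O(c))$-solvability for \ref{eq:budgetGenSched}$_{\|\cdot\|,\{\|\cdot\|_i\}_i}$ with the same outer norm.

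For the second step, I would run essentially Algorithm~\ref{alg:p-bounded} with $s=1$ and the marginals $\Delta_i f(y) = \|y + e_i\| - \|y\|$. Lemmas~\ref{lem:opt-upper-bound} and~\ref{lem:unifiedLemma} carry over almost verbatim, modulo one delicate point: in the non-saturating case, Lemma~\ref{lem:opt-upper-bound} needs to bound $\sum_{i \in A^*} \Delta_i f(\chi_A)$ by $O(B)$. For a general outer norm, plain subadditivity only yields $\Delta_i f(\chi_A) \le \|e_i\| = b_i$, and $\sum_{i\in A^*} b_i$ can easily exceed $B$, so some extra structure is needed.

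For part~(2), I would use the following. For $\ell_p$ outer norms, I would switch to the $p$-bounded surrogate $f = \|\cdot\|_p^p$, which is coordinatewise separable, so $\sum_{i \in A^*} \Delta_i f(\chi_A)$ telescopes across coordinates and is at most $\|\chi_{A^*}\|_p^p \le B^p$. For Top-$k$ outer norms one can exploit the extremal structure directly: only the top $k$ active coordinates contribute at each step, which gives the same $O(B)$ bound on the sum of marginals. Either way, the analysis loses only the single $\log m$ factor coming from the random threshold distribution itself, yielding $\alpha' = \Omega(\alpha/\log m)$. For part~(1), a general symmetric outer norm can be $O(\log m)$-approximated by a combination of (weighted) Top-$k$ norms in the style of Chakrabarty--Swamy~\cite{CS-STOC19}; running the Top-$k$ algorithm on the majorizing norm multiplies both the competitive ratio and the budget blowup by $O(\log m)$, giving $\alpha' = \Omega(\alpha/\log^2 m)$. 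Weighted symmetric norms reduce to the unweighted case by rescaling coordinates.

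The main obstacle is the absence of monotone gradients, which breaks the clean telescoping $\sum_{i \in A^*} \Delta_i f(\chi_A) \le f(\chi_A + \chi_{A^*})$ used in the $p$-bounded proof. Overcoming this requires either separability (as in $\ell_p$, via the $\|\cdot\|_p^p$ surrogate) or the special rank structure of Top-$k$; the extra $\log m$ loss for general symmetric norms in part~(1) stems precisely from the Chakrabarty--Swamy decomposition used to reduce to the Top-$k$ case. A minor bookkeeping subtlety is that, after the norm approximation, the inner $(\alpha,c)$-solvable subroutines run against a slightly distorted outer norm, but since they are invoked only on the individual machines, their guarantees are unaffected.
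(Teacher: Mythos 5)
Your high-level structure (reduce to \ref{eq:budgetGenSched}, run a random-threshold activation algorithm, identify the loss of monotone gradients as the obstacle) matches the paper's, but your route to the $\alpha'$ bounds diverges in ways that leave real gaps.

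The paper's actual proof does not analyze $\Delta_i f$ for a norm $f$ at all. Instead it first isolates a \emph{weighted $\ell_1$ building block}: \Cref{thm:l1-budget} shows \ref{eq:budgetGenSched} with aggregate $\sum_i w_i z_i$ loses only one $\log m$, and the reason it loses only one (rather than the two in \Cref{thm:pboundBudgetSched}) is \Cref{lem:normUnifiedLemma}: because the marginal $\ba_i = a_i = w_i b_i$ is \emph{static}, the guess $M_i$ can be placed at $\Theta\big(\tfrac{a_i \OptBSP}{B\log m}\big)$ with probability $\tfrac12$ rather than over a $\log m$-long geometric grid (\Cref{alg:norm}, step 3(b)). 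Your proposal never mentions this mechanism. You attribute the single $\log m$ in part (2) to ``the random threshold distribution itself,'' but running \Cref{alg:p-bounded} with a $p$-bounded surrogate as you suggest would pay $\log m$ twice — once for the thresholds and once for $M_i$ — reproducing the $\log^2 m$ of \Cref{thm:pboundBudgetSched}, not the claimed $\log m$. For $\ell_p$ your surrogate $\|\cdot\|_p^p$ does happen to have static marginals $b_i^p$, so with the fixed-$M_i$ trick (unmentioned by you) this is essentially the paper's Case 1 of \Cref{claim:relBSPl1} in disguise. But for Top-$k$ the marginals $\Delta_i f(y)$ are \emph{not} monotone or static (they can be zero when $b_i$ falls below the current $k$-th largest active budget), so both the telescoping in \Cref{lem:opt-upper-bound} and the fixed-$M_i$ sampling break. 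The paper avoids this entirely by replacing the Top-$k$ budget constraint, on the discretized instance, with a \emph{weighted $\ell_1$ constraint} with $0/1$ weights $w_{i\ell} = \ind[b_{i\ell} > 3B/k]$ (\Cref{claim:relBSPl1}, Case 2); your ``exploit the extremal structure directly'' is not a proof of this step.

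For part (1) you propose a Chakrabarty--Swamy style $O(\log m)$-decomposition of the symmetric norm into Top-$k$ norms. That decomposition produces a \emph{max} of scaled Top-$k$ norms, not a single Top-$k$ norm, and it is not clear how to run the Top-$k$ algorithm against a max. The paper instead goes through \Cref{claim:symm}: choose a uniform activation budget $\bar{b}$ from a geometric grid of $\log m$ values; with a uniform machine budget, the symmetric budget constraint $\|(y_i\bar b)_i\|\le 2B$ \emph{is exactly} a cardinality constraint $\sum_i y_i\le\kappa$ (see \eqref{eq:symmL1}), so one directly applies the $\ell_1$ building block, paying $\log m$ for the random choice of $\bar b$ and $\log m$ for the $\ell_1$ algorithm. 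Your proposal as written does not establish either part; the missing ingredients are (i) the improved $M_i$-sampling behind \Cref{thm:l1-budget}, (ii) the replacement of the Top-$k$ constraint by a $0/1$-weighted $\ell_1$ constraint rather than a marginal analysis, and (iii) a workable handling of the max in the CS decomposition (or, more simply, the uniform-budget trick that makes symmetric reduce to $\ell_1$).
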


Together with the \ref{eq:covGenSched}-to-\ref{eq:packGenSched} reduction from \Cref{thm:packCovReduction}, this result directly implies \Cref{thm:schedNorm} stated in the introduction.

To prove \Cref{thm:genBudgetNorm}, we follow the same two-step approach as in \Cref{sec:p-bound}: first step is to reduce \ref{eq:packGenSched} to \ref{eq:budgetGenSched} and the second step is to design a bi-criteria approximation algorithm for  \ref{eq:budgetGenSched}. The proof of the reduction in the first step is quite similar to \Cref{sec:p-bound}, but the precise details change since for norms we can obtain much better guarantees in this reduction.  However, the proof of the second step is differs a lot since we don't have gradient monotonicity. 

To overcome this, in \Cref{sec:BSPl1} we first design a good algorithm for \ref{eq:budgetGenSched} in the special case of a weighted $\ell_1$ outer norms $\|z\| = \sum_i w_i \cdot |z_i|$, and then in the later sections we give reductions from 
$\ell_p$, Top-$k$, and symmetric norms to a weighted $\ell_1$-norm to prove \Cref{thm:genBudgetNorm}. 
 This reduction may seem surprising because in general it is not possible to approximate general norms everywhere with a weighted $\ell_1$-norm within polylog factors; e.g., even $\ell_2$ in $m$ dimensions has a $\sqrt{m}$ multiplicative factor gap to all weighted $\ell_1$ norms. Our proof crucially exploits the structure of \ref{eq:budgetGenSched}, in particular that we are only interested in the integral problem and the norm only  appears in the budget constraint. 
 This allows us to discretize the norm and 
 approximate the feasibility constraint using a weighted $\ell_1$ norms after slightly expanding the budget constraint.

\medskip \textbf{Norm composition.}  
In fact, our techniques for \Cref{thm:genBudgetNorm}  extend  to a much larger class of norms, via norm compositions. Namely, suppose that we have a norm $\|\cdot\| : \R^m \to \Rp$ that can be written as a \emph{disjoint composition} of a norm $\|\cdot\|'$ with norms $\|\cdot\|'_1, \dots, \|\cdot\|'_L$. That is, for some disjoint subsets $S_1, \dots, S_L \subseteq [m]$, we have 
\[ \|y\| := \Big\|\big(\|y_{S_1}\|_1', \dots, \|y_{S_L}\|_L' \big)\Big\|'.\]
Then as long as $\|\cdot\|'$ and all $\|\cdot\|'_1, \dots, \|\cdot\|'_L$ norms are ``good'' outer norms for \ref{eq:packGenSched} (i.e. we have a statement like \Cref{thm:genBudgetNorm} for them), we can prove that the composite norm $\|\cdot\|$ is also a ``good'' outer norm for \ref{eq:packGenSched}. From the base cases set forth in \Cref{thm:genBudgetNorm}, this allows us to generate many other outer norms for which we can solve \ref{eq:packGenSched}, and by extension, \ref{eq:covGenSched}. For Set Cover settings (i.e. $\|\cdot\|_i = \|\cdot\|_\infty$), this result becomes even more powerful when combined with \cite{NS-ICALP17}, as we can then extend to the settings where the $S_\ell$ are not disjoint.

The idea behind this result is that we can run two layers of our \ref{eq:packGenSched} algorithm. Call the $S_\ell$ sets \emph{clusters}. First, we have an outer layer, which allocates jobs among the clusters $S_1, \dots, S_L$, minimizing the outer norm $\|\cdot\|'$ over the vector of loads on each cluster. Additionally, within a cluster $S_\ell$, we have an inner algorithm which seeks to schedule the jobs it receives among the machines $i \in S_\ell$, in a way that minimizes the $\|\cdot\|'_\ell$-load of the cluster. Intuitively, as long as we have a good algorithm for both levels of scheduling, we can obtain a good algorithm for the composite problem.

To make this notion formal, we first define what it means for a norm to be a good outer norm in \ref{eq:packGenSched}.
\begin{definition}\label{def:good-norm}
    We say norm $\|\cdot\| : \R^m \to \Rp$ is a \emph{$(\beta, \gamma)$-good outer norm} if \ref{eq:packGenSched}$_{\|\cdot\|, \{\|\cdot\|_i\}_i}$ is $(\beta\alpha, \gamma c)$-solvable whenever the inner-norm problems \ref{eq:normPack}$_{\|\cdot\|_i}$ are all $(\alpha, c)$-solvable.
\end{definition}

For example, by \Cref{thm:genBudgetNorm} we have that symmetric norms are $\big(\Omega({1}/{\log^2 m}),~O(1)\big)$-good outer norms. Using this theorem as a base case, we can construct a larger class of good outer norms using the following composition result.

\begin{theorem}\label{thm:norm-compose-sched-pack}
    Suppose $\|\cdot\| : \R^m \to \Rp$ is a disjoint composition of a norm $\|\cdot\|'$ 
     with norms $\|\cdot\|'_1, \dots, \|\cdot\|'_L$.
    Then, if $\|\cdot\|'$ is a $(\beta_1, \gamma_1)$-good outer norm, and all $\|\cdot\|'_\ell$ are $(\beta_2, \gamma_2)$-good outer norms for each $\ell \in [L]$, then $\|\cdot\|$ is a $(\beta_1\beta_2,~\gamma_1\gamma_2)$-good outer norm.
\end{theorem}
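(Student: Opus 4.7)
The plan is to reduce the composite problem to two nested applications of $(\beta, \gamma)$-goodness by introducing a ``super-machine'' for each cluster. Since $S_1, \ldots, S_L$ are disjoint, the aggregate load decouples as $\|\Lambda(x)\| = \|(L_1(x),\ldots,L_L(x))\|'$, where $L_\ell(x) := \|(\Lambda_i(x))_{i \in S_\ell}\|'_\ell$ is the load of cluster $\ell$. I define the effective super-norm $\|z\|^{\text{super}}_\ell := \|(\|z_i\|_i)_{i \in S_\ell}\|'_\ell$ on scheduling-load vectors indexed by $(i, j, k)$ with $i \in S_\ell$; since a composition of monotone norms is again a monotone norm, this is a legitimate norm on the full cluster space. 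The original problem is then identical to \ref{eq:packGenSched} on $L$ super-machines with outer norm $\|\cdot\|'$ and inner (super-)norms $\{\|\cdot\|^{\text{super}}_\ell\}_\ell$.

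The first step is to show that the single-machine problem \ref{eq:normPack}$_{\|\cdot\|^{\text{super}}_\ell}$ is $(\beta_2\alpha, \gamma_2 c)$-solvable for every cluster $\ell$. The key observation is that \ref{eq:normPack}$_{\|\cdot\|^{\text{super}}_\ell}$ is equivalent, as an online packing problem, to the \ref{eq:packGenSched} sub-instance on cluster $S_\ell$ with outer norm $\|\cdot\|'_\ell$ and inner norms $\{\|\cdot\|_i\}_{i \in S_\ell}$: both maximize the number of scheduled jobs subject to the same numerical constraint $L_\ell(x) \le B_\ell$, and the online job arrivals are the same. Since each \ref{eq:normPack}$_{\|\cdot\|_i}$ is $(\alpha, c)$-solvable by hypothesis and $\|\cdot\|'_\ell$ is $(\beta_2, \gamma_2)$-good by assumption, Definition~\ref{def:good-norm} gives that this cluster sub-instance is $(\beta_2\alpha, \gamma_2 c)$-solvable, and this solvability transfers job-for-job to \ref{eq:normPack}$_{\|\cdot\|^{\text{super}}_\ell}$.

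For the final step, I plug these super-machine algorithms into the outer $(\beta_1, \gamma_1)$-good algorithm for $\|\cdot\|'$ on the reduced $L$-machine instance. By Definition~\ref{def:good-norm} applied to $\|\cdot\|'$ with the ``$\alpha$'' and ``$c$'' parameters replaced by $\beta_2\alpha$ and $\gamma_2 c$ respectively, the outer algorithm achieves $(\beta_1 \beta_2 \alpha,\ \gamma_1 \gamma_2 c)$-solvability for the reduced instance, and hence for the original $\|\cdot\|$-composite problem; this is exactly the required $(\beta_1\beta_2,\ \gamma_1\gamma_2)$-goodness of $\|\cdot\|$.

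The main obstacle is the bookkeeping in the first step: verifying that the super-machine really does correspond to a bona fide \ref{eq:normPack} instance (so that it can be passed as a norm-packing subroutine to the outer algorithm) rather than to a more general multi-machine problem. This hinges on two facts: (a) $\|\cdot\|^{\text{super}}_\ell$ is a genuine monotone norm on the full $(i, j, k)$-indexed scheduling-load vector of the cluster, which follows from closure of monotone norms under composition; and (b) disjointness of the $S_\ell$ ensures that any single job's scheduling decision affects at most one cluster's load, so the outer algorithm can treat the clusters as independent super-machines without double-counting. Without disjointness, the decoupling $\|\Lambda(x)\| = \|(L_1(x),\ldots,L_L(x))\|'$ would fail and the two-level reduction would break, which is why the theorem hypothesis explicitly assumes a \emph{disjoint} composition.
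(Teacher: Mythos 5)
Your proposal is correct and follows essentially the same two-level route as the paper's own proof: you group each cluster $S_\ell$ into a single ``super-machine'' with inner norm $\|\cdot\|^{\text{super}}_\ell = \|(\|\cdot\|_i)_{i\in S_\ell}\|'_\ell$ (the paper calls this $\|\cdot\|''_\ell$), observe that \ref{eq:normPack}$_{\|\cdot\|^{\text{super}}_\ell}$ is exactly \ref{eq:packGenSched}$_{\|\cdot\|'_\ell,\{\|\cdot\|_i\}_{i\in S_\ell}}$ so it inherits $(\beta_2\alpha,\gamma_2 c)$-solvability from the goodness of $\|\cdot\|'_\ell$, and then feed these super-machines into the $(\beta_1,\gamma_1)$-good outer algorithm for $\|\cdot\|'$ to get $(\beta_1\beta_2\alpha,\gamma_1\gamma_2 c)$-solvability overall, which is exactly $(\beta_1\beta_2,\gamma_1\gamma_2)$-goodness of $\|\cdot\|$. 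The only detail you brush over, which the paper addresses in a footnote, is that the super-machine $\ell$ has $|S_\ell|\cdot r$ scheduling ways rather than a uniform $r$; this is handled by padding with dummy infinite-load ways and does not affect the argument.
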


 Combining this composition theorem with \Cref{thm:genBudgetNorm}, we obtain \Cref{thm:genSchedComposition} and \Cref{cor:setCoverComp} stated in the introduction, although in the latter we also need an argument from~\cite{NS-ICALP17} to handle non-disjoint compositions; details are given in \Cref{app:proofsIntro}.
 
\medskip \textbf{Roadmap for the remaining section.} 
In \Cref{sec:BSPl1}, we design a bi-criteria approximation algorithm for \ref{eq:budgetGenSched} and prove  \Cref{thm:genBudgetNorm} in the special case of weighted $\ell_1$-outer norms, which will form our basic building block. In Sections \ref{sec:sym-reduction} and \ref{sec:lp-topk-reduction},
we design bi-criteria approximation algorithms for \ref{eq:budgetGenSched} with the norms given in \Cref{thm:genBudgetNorm} by reducing them to weighted $\ell_1$-outer norm. 
Finally, in \Cref{sec:norm-compose}, we will prove our norm composition result in \Cref{thm:norm-compose-sched-pack}.

\subsection{\ref{eq:budgetGenSched} with Weighted $\ell_1$ Outer Norm} \label{sec:BSPl1}

Consider the \ref{eq:budgetGenSched} problem with a weighted $\ell_1$ outer norm $\|z\| = \sum_i w_i \cdot |z_i|$. Observe that this function is $1$-bounded, so \Cref{thm:pboundBudgetSched} already implies an approximation for our problem. However, we can actually save a $\log m$ factor in this case and obtain the following result.

\begin{theorem}\label{thm:l1-budget}
    Consider the online problem \ref{eq:budgetGenSched}$_{\|\cdot\|,\{\|\cdot\|_i\}_i}$ where the outer norm $\|\cdot\|$ is a weighted $\ell_1$ norm and for each inner norm $\|\cdot\|_i$ the problem \ref{eq:normPack}$_{\|\cdot\|_i}$ is $(\alpha, c)$-solvable. Then, there is an online algorithm for \ref{eq:budgetGenSched}$_{\|\cdot\|,\{\|\cdot\|_i\}_i}$ with the following guarantee: Given a lower bound $M \leq \Opt$ up front, the algorithm obtains expected objective value at least $\Omega\big(\frac{\alpha}{\log m}\big) \cdot M$ while violating each machine's load budget by at most a factor $c$ (i.e. so that $\|(x_{ijk} p_{ijk})_{j,k}\|_i \leq cy_i b_i$).
\end{theorem}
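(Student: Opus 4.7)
I would take the algorithm and analysis of the $p$-bounded case (\Cref{thm:pboundBudgetSched} and \Cref{alg:p-bounded}) with $p=1$ as a starting point, and exploit two structural simplifications of a weighted $\ell_1$ outer norm. First, $f(y) = \sum_i w_i b_i y_i$ has state-independent marginal $\Delta_i f(y) = w_i b_i$, so the activation threshold $\tau_i = \bar\tau_i \cdot \tfrac{w_i b_i M}{CB}$ depends only on the warm-up randomness $\bar\tau_i$ and fixed data. Second, $1$-subadditivity is just ordinary subadditivity, so the slack parameter can be taken as $s=1$ and the single-activation overshoot absorbed into the standard factor-$2$ trick already used in \Cref{alg:p-bounded}. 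This delivers the claimed machine-budget violation factor $c$, and the remaining task is to sharpen the per-machine bound \Cref{lem:unifiedLemma} from $\Omega(\alpha/\log^2 m)$ to $\Omega(\alpha/\log m)$.

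The key algorithmic modification is how the inner algorithm's lower-bound input $M_i$ is chosen on each activation. In \Cref{alg:p-bounded} this is a uniform draw from a $\log m$-sized geometric range, which is exactly what costs the extra $\log m$ in \Cref{lem:alg_i-bound1}. I would instead flip a fresh fair coin at each activation: with probability $1/2$ use the deterministic value $M_i^{\mathrm{det}} := \tfrac{w_i b_i M}{C' B \log m}$, and with probability $1/2$ use the geometric random guess $M_i^{\mathrm{rand}}$ as in \Cref{alg:p-bounded}. Only one inner algorithm is ever started on machine $i$, so the machine-budget violation stays at $c$.

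For the analysis I would follow Section~4.2 step by step. The upper bound
\[
\tfrac{9}{10} M \;\le\; |\Alg| + \sum_{i \in A} \max\!\Bigl\{\tfrac{M w_i b_i}{B},\; |\Opt_i(T_i)|\Bigr\}
\]
is \Cref{lem:opt-upper-bound} specialized to $p=s=1$, $\ba_i = w_i b_i$. The per-machine lower bound
\[
\E|\Alg_i| \;\ge\; \Omega\!\left(\tfrac{\alpha}{\log m}\right) \cdot \E\!\left[\ind(i \in A) \cdot \max\!\Bigl\{\tfrac{M w_i b_i}{B},\; |\Opt_i(T_i)|\Bigr\}\right] - O\!\left(\tfrac{M}{m^2}\right)
\]
will be proved by splitting over the coin. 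On the ``deterministic'' half: fixing the scenario $\omega$ of the other machines and using \Cref{claim:mono-new2} to extract the threshold $\bar\tau_{\max}(\omega)$, the warm-up distribution of $\bar\tau_i$ forces the event $\bar\tau_i \le \bar\tau_{\max} - \tfrac{1}{C'' \log m}$ with probability $\ge \Pr(i \in A \mid \omega)/2 - O(1/m^3)$; in that event $|\Opt_i(O_i^{post})| \ge (\bar\tau_{\max} - \bar\tau_i) \cdot \tfrac{w_i b_i M}{2B} \ge M_i^{\mathrm{det}}$, so $(\alpha, c)$-solvability delivers $\alpha M_i^{\mathrm{det}} = \Omega\!\bigl(\tfrac{\alpha w_i b_i M}{B \log m}\bigr)$ in expectation. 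This recovers the $\tfrac{M w_i b_i}{B}$ side of the max with a single $\log m$ factor, saving exactly the $\log m$ that the random $M_i$ guess had cost in \Cref{lem:alg_i-bound1}. On the ``random'' half the proof of \Cref{lem:alg_i-bound2} applies essentially verbatim and contributes $\Omega(\alpha/\log m) \cdot (|\Opt_i(T_i)| - \tfrac{M w_i b_i}{10B})^+$, covering the $|\Opt_i(T_i)|$ side of the max. Averaging the two halves (using $\tfrac{1}{2}A + \tfrac{1}{2}B \ge \tfrac{1}{2}\max(A,B)$) gives the displayed per-machine bound; summing over $i$ and combining with the upper bound on $M$ exactly as at the end of the proof of \Cref{thm:pboundBudgetSched} yields $\E|\Alg| \ge \Omega(\alpha/\log m) \cdot M$.

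The hard part will be the bookkeeping to keep the two halves of the averaging clean: the threshold event on the ``deterministic'' coin must hold with constant probability \emph{conditional} on ``heads'' (and similarly for the good-guess event on ``tails''), which in turn requires the per-activation coin to be independent of $\bar\tau_i$ and of the randomness of the other machines. This is a mild design requirement, but it must be stated explicitly in the algorithm and carefully tracked through the scenario $\omega$ conditioning used in the proofs of \Cref{lem:alg_i-bound1} and \Cref{lem:alg_i-bound2}. Once this is in place, the rest is a routine specialization of Section~4.2.
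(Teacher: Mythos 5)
Your proposal matches the paper's proof essentially exactly: the paper's \Cref{alg:norm} also exploits the state-independent marginal $a_i = w_i b_i$, draws the inner guess $M_i$ with probability $1/2$ from the deterministic value $\frac{\OptBSP \cdot a_i}{60 B \log m}$ and otherwise uniformly from the geometric grid $\{2^{-k}\OptBSP\}$, and then averages the resulting analogues of \Cref{lem:alg_i-bound1} and \Cref{lem:alg_i-bound2} (the paper's \eqref{eq:norm-alg_i-bound1} and \eqref{eq:norm-alg_i-bound2}) against the unchanged upper bound \Cref{lem:opt-upper-bound} to save the $\log m$. The independence bookkeeping you flag is exactly what the paper handles by drawing $M_i$ fresh and independent at activation time; nothing further is needed.
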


To lighten the notation, we define $a_i := w_i b_i$, so the budget constraint $\|(y_1 b_1,\ldots,y_m b_m)\| \le B$ can be written as $\sum_i a_i y_i \le B$; so the offline version of the problem is $\max\{\sum_{i,j,k} x_{ijk} : \sum_i a_i y_i \le B,~ \Lambda_i(x) \le y_i b_i~\forall i,~ \sum_{i,k} x_{ijk} \le 1, x \in \{0,1\}^{mnr}, y \in \{0,1\}^m\}$, where $\Lambda_i(x) = \|(x_{ijk} p_{ijk})_{j,k}\|_i$.

 To save the $\log m$ factor, we only need to modify the \Cref{alg:p-bounded} with respect to how we guess $M_i$ for each machine. At a very high level, in \Cref{lem:alg_i-bound1} in the previous analysis we were losing a $\log m$ factor since we had to condition on the event that $M_i$ was a ``good'' guess on {$|\Opt_i(O_i^{post})|$}, which only happens with probability $\frac{1}{\Theta(\log m)}$. However, in the proof of this lemma, we only truly need to have $M_i = \Theta(\frac{\ba_i(\bar \tau_{max}, \omega) \cdot \Opt}{B\log m})$, where $\ba_i$ is the discrete derivative $\Delta_i f(y)$ at the current machine activation $y$. This value is difficult to guess when $\ba_i$ may change dynamically, but in the weighted $\ell_1$ setting we have $\ba_i = a_i$ statically, so we can choose the appropriate value of $M_i$ with constant probability. 
 
 We now present the new \Cref{alg:norm} (with the changes relative to \Cref{alg:p-bounded} marked in {\color{blue}\textbf{blue}}). As before, the \ref{eq:normPack}$_{\|\cdot\|_i}$ algorithm employed in each machine $i$ is one that achieves the assumed $(\alpha, c)$-solvability. We also assume that each machine $i$ has $a_i \le B$, else due to the budget constraint this machine is not activated in the optimal solution and can be ignored.

\begin{algorithm}
    \caption{Algorithm for \ref{eq:budgetGenSched} with weighted $\ell_1$ outer norm} \label{alg:norm}

First, for each machine $i$, draw independently a threshold $\bar{\tau}_i$ distributed as $(1 - \frac{k}{3\log m})^+$ with probability $2^{-k-1}$ (for $k \ge 0$), and set its threshold to be $\tau_i := \bar{\tau}_i \cdot \blue{\frac{1}{20 B}\cdot \OptBSP \cdot a_i}$. Then for each arriving job $j$:
\begin{enumerate}
    \item Offer it to each active machine $i$ in order of activation time, i.e. add job $j$ to $O^{post}_i$. If the internal \ref{eq:normPack}$_{\|\cdot\|_i}$ algorithm of this machine schedules this job, set the corresponding $x_{ijk} = 1$ and add $j$ to $\Alg_i$. Else offer $j$ to the next active machine, etc.
    \medskip

    \item If the item is not selected by any active machine, then offer it to all inactive machines (e.g. in lexicographical order). In other words, add $j$ to the $O_i$'s of each inactive machine $i$ one at a time. Continue this until some machine activates and schedules $j$, or until all machines have been offered $j$.
    \medskip

    \item If an inactive machine $i$ is offered $j$ and crosses the threshold with its offered items, i.e., $\OPT(O_i, b_i) \ge \tau_i$, AND \blue{$\sum_i a_i y_i < B$}, then do the following: 
    \begin{enumerate}
        \item Activate this machine (i.e., add $i$ to $A$ and update $y_i = 1$).

        \item Draw a guess $M_i$ from the distribution: 
               {\color{blue} \[M_i = 
                \begin{cases}
                    {\OptBSP \cdot a_i}\cdot ({60 B \log m})^{-1} & \textrm{w.p. } \frac{1}{2},\\
                     {2^{-k}}\cdot {\OptBSP} & \textrm{w.p. } \frac{1}{2\floor{2\log m}}, ~\forall k \in \{1, \dots, \floor{2 \log m}\}.
                \end{cases}
                \]}
        
        \item Start an internal \ref{eq:normPack}$_{\|\cdot\|_i}$ algorithm for machine $i$ with guess $M_i$. This algorithm will receive only jobs offered to $i$, i.e. jobs in $O_i^{post}$.
        \item Offer $j$ to machine $i$. If $i$ schedules $j$, set $x_{ijk} = 1$, add $j$ to $\Alg_i$, and skip to next job $j+1$. Else, continue Step 2, i.e., offer to the next inactive machine.
    \end{enumerate}
    \medskip

    \item Stop after all $n$ items have arrived.
\end{enumerate}
 \end{algorithm}

It is clear that \Cref{alg:norm} satisfies the budget violation requirements for \Cref{thm:pboundBudgetSched}, so it only remains to show that it gets the required approximation factor of $\OptBSP$. Much of the analysis is the same as in \Cref{sec:p-bound}, since they do not depend on the choice of the $M_i$ and, in all other ways, \Cref{alg:norm} is identical to \Cref{alg:p-bounded} with $p = s = 1$. Thus, for brevity, we only highlight the modifications in the argument in order to prove \Cref{thm:l1-budget}. Since these modification are mostly syntactical, we only recall the following necessary elements: 1) $\Alg_i$ is the set of jobs that the algorithm schedules in machine $i$; 2) $O_i^{post}$ is the instance of the \ref{eq:normPack}$_{\|\cdot\|_i}$ algorithm employed in machine $i$; 3) When the estimate $M_i$ is lower bound on the opt of this instance, namely $M_i \le \OPT(O_i^{post})$, the guarantee of the \ref{eq:normPack}$_{\|\cdot\|_i}$ algorithm kicks in and it schedules at least $\alpha M_i$ jobs on this machine. 

The key change is the following alternative to \Cref{lem:unifiedLemma} that saves the aforementioned $\log m$ factor. 

\begin{lemma}\label{lem:normUnifiedLemma}
    In \Cref{alg:norm}, for some constant $C > 0$ we have for every machine $i \in [m]$.
    \[
    \E\, |\Alg_i| \geq \frac{\alpha}{C\log m} \cdot \E \left[ \ind(i \in A) \max\left\{\frac{ \OptBSP \cdot a_i}{B},~ |\Opt_i(T_i)|\right\}\right] - \frac{\OptBSP}{m^2}.
    \] 
\end{lemma}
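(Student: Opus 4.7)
The plan is to mirror the two-step structure from the proof of \Cref{lem:unifiedLemma}: establish two separate lower bounds on $\E|\Alg_i|$ (one proportional to $a_i$ and the other to $|\Opt_i(T_i)|$), then average them and apply the pointwise identity $x + (y-x)^+ \ge \max\{x,y\}$ for $x,y \ge 0$. The only conceptual novelty is that for a weighted $\ell_1$ outer norm $\ba_i = \Delta_i f(y) = a_i$ is independent of the current activation state $y$, which allows \Cref{alg:norm} to commit to a \emph{single} deterministic choice of $M_i$ tailored to $a_i$ (with constant probability $\tfrac12$) instead of needing to uniformly sample across $\Theta(\log m)$ candidates. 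This is precisely the source of the saved $\log m$ factor.

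I would first fix a scenario $\omega$ encoding all randomness outside of $(\bar\tau_i, M_i, \text{internal randomness of the $i$-th \ref{eq:normPack}$_{\|\cdot\|_i}$ algorithm})$. The structural facts used in the proof of \Cref{lem:unifiedLemma} transfer essentially verbatim: by \Cref{claim:mono-new2} there exists $\bar\tau_{\max} = \bar\tau_{\max}(\omega)$ with $i\in A$ iff $\bar\tau_i \le \bar\tau_{\max}$, and the computation yielding \eqref{eq:opt-opost-lowerb}, specialized to $p=s=1$ and $\ba_i = a_i$ and using the threshold coefficient $\tfrac{a_i\OptBSP}{20B}$ from \Cref{alg:norm}, gives
\[
    |\Opt_i(O_i^{post})| \;\ge\; (\bar\tau_{\max}-\bar\tau_i)\cdot \frac{a_i\OptBSP}{20B}.
\]
In particular, whenever $\bar\tau_i \le \bar\tau_{\max} - \tfrac{1}{3\log m}$ this produces the clean bound $|\Opt_i(O_i^{post})| \ge \tfrac{a_i\OptBSP}{60B\log m}$.

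The saving now appears. With probability $\tfrac12$ the algorithm sets $M_i = \tfrac{a_i\OptBSP}{60B\log m}$; in the event above, this $M_i$ is a valid lower bound on $|\Opt_i(O_i^{post})|$ and the inner algorithm returns $|\Alg_i| \ge \alpha M_i$. Combining with \Cref{lem:threshold-probabilities} and taking expectation over $\omega$, absorbing the $\tfrac{1}{2m^3}$ slack into $O(\OptBSP/m^2)$ using $a_i \le B$, yields the analog of \Cref{lem:alg_i-bound1} but with only a single $\log m$ factor:
\[
    \E|\Alg_i| \;\ge\; \frac{\alpha}{C'\log m}\,\E\!\left[\ind(i\in A)\cdot \frac{\OptBSP\cdot a_i}{B}\right] - \frac{\OptBSP}{m^2}.
\]
For the analog of \Cref{lem:alg_i-bound2} nothing structural changes: from $|\Opt_i(O_i^{post})| \ge (|\Opt_i(T_i)| - \tfrac{a_i\OptBSP}{20B})^+$ and the fact that the remaining half of the $M_i$ distribution places mass $\tfrac{1}{2\lfloor 2\log m\rfloor}$ on each dyadic value $2^{-k}\OptBSP$, a good guess is obtained with probability $\Omega(1/\log m)$, giving
\[
    \E|\Alg_i| \;\ge\; \frac{\alpha}{C''\log m}\,\E\!\left[\ind(i\in A)\cdot \Bigl(|\Opt_i(T_i)| - \tfrac{\OptBSP\cdot a_i}{B}\Bigr)^{\!+}\right] - \frac{\OptBSP}{m^2}.
\]
Averaging these two bounds and using $x + (y-x)^+ \ge \max\{x,y\}$ produces the claimed inequality.

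I expect the only nontrivial work to be bookkeeping: checking that the slightly altered threshold coefficient in \Cref{alg:norm} reproduces the key inequality \eqref{eq:opt-opost-lowerb} with matching constants, and verifying that the $\tfrac{1}{2m^3}$ error from \Cref{lem:threshold-probabilities} combined with $a_i \le B$ fits inside the additive $\tfrac{\OptBSP}{m^2}$ slack. Conceptually, the proof is just the proof of \Cref{lem:unifiedLemma} with the observation that constancy of $\ba_i$ replaces the uniform guess over $\Theta(\log m)$ dyadic values in the first of the two bounds with a single deterministic guess hit with probability $\tfrac12$.
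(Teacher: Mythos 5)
Your proposal is correct and follows essentially the same two-bound-then-average structure as the paper's proof, including the key observation that constancy of $\ba_i = a_i$ lets the algorithm hit the right guess $M_i = \frac{a_i\OptBSP}{60B\log m}$ with probability $\tfrac12$ rather than $\Theta(1/\log m)$. The only cosmetic differences are that you invoke \Cref{lem:threshold-probabilities} where the paper does an equivalent inline computation of $\Pr[\bar\tau_i<\bar\tau_{\max}]$, and your second bound subtracts $\tfrac{\OptBSP a_i}{B}$ rather than $\tfrac{\OptBSP a_i}{20B}$ inside the $(\cdot)^+$; this is a weaker (but still sufficient) form, since $(y-x)^+\ge (y-20x)^+$ for $x\ge 0$, and the final averaging via $x+(y-x)^+\ge\max\{x,y\}$ still produces the stated inequality up to the constant $C$.
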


\begin{proof}
    Recall that \Cref{lem:unifiedLemma} was proven by combining \eqref{eq:alg_i-bound1} and \eqref{eq:alg_i-bound2}. Moreover, we still have \eqref{eq:alg_i-bound2} with an additionally factor $\frac{1}{2}$ loss, as \Cref{alg:norm} still has probability $\frac{1}{2}$ of sampling $M_i$ the same way as in \Cref{alg:p-bounded}. This gives
    \begin{align}
    \E|\Alg_i| \ge \E\bigg[\ind(i \in A) \cdot \frac{\alpha}{8 \log m} \cdot \left(|\Opt_i(T_i)| - \frac{a_i \cdot \OptBSP}{20B}\right)^+\bigg] - \frac{\OptBSP}{m^2}.\label{eq:norm-alg_i-bound2}
    \end{align}

    Next, we obtain an alternative to \eqref{eq:alg_i-bound1}. We begin with \eqref{eq:opt-opost-lowerb}, which holds independent of the method of choosing $M_i$:
    \[
    |\Opt(O_i^{post})| \geq (\bar{\tau}_{max} - \bar{\tau}_i) \cdot \frac{a_i \cdot \OptBSP}{20 B}.
    \]
    Hence, if $\bar \tau_i$ is strictly less than $\bar \tau_{\max}$, and $\bar \tau_{\max} \geq \frac{1}{3\log m}$, then we have $|\Opt(O_i^{post})| \geq \frac{a_i \cdot \OptBSP}{60 B \log m}$. Thus, if in addition we have $M_i = \frac{a_i \cdot \OptBSP}{60B \log m}$ (which happens independently with probability $\frac{1}{2}$), $M_i$ is a lower bound on the optimum $|\Opt(O_i^{post})|$ for the $i$th machine and the guarantee for the internal algorithm for \ref{eq:normPack}$_{\|\cdot\|_i}$ used in this machine kicks and it schedules at least $\alpha M_i = \alpha \frac{a_i \cdot \OptBSP}{60B \log m}$ jobs. Therefore:    
    \begin{align}
        \E |\Alg_i| &\geq \frac{1}{2}\,\E\left[\ind_{\bar \tau_{\max} \geq \frac{1}{3\log m}} \cdot \ind_{\bar \tau_i < \bar \tau_{\max}} \cdot \alpha \frac{a_i \OptBSP}{60 B \log m}\right] \notag\\
        &\geq \frac{1}{2}\,\E\left[\ind_{\bar \tau_i < \bar \tau_{\max}} \cdot \alpha \frac{a_i \OptBSP}{60 B \log m}\right] - \Pr\bigg[\bar \tau_i < \bar \tau_{\max} < \frac{1}{3\log m}\bigg] \cdot \frac{\OptBSP}{2} \notag\\
        &\geq \frac{1}{2}\Pr[\bar \tau_i \leq \bar \tau_{\max}]\cdot \alpha\frac{a_i \OptBSP}{60 B \log m} - \frac{\OptBSP}{m^3} \notag\\
        &= \Pr[i \in A] \cdot \alpha \frac{a_i \OptBSP}{120 B \log m} - \frac{\OptBSP}{m^3}, \label{eq:norm-alg_i-bound1}
    \end{align}
    where the second inequality uses the fact that $a_i \le B$, and the following inequality uses the fact that $\Pr(\bar{\tau}_i < \frac{1}{3 \log m}) \le \frac{2}{m^3}$. 
    Finally, averaging bounds \eqref{eq:norm-alg_i-bound2} and \eqref{eq:norm-alg_i-bound1}  gives
        \begin{align*}
        \E|\Alg_i| &\ge \frac{1}{2}\,\E\left[\ind(i \in A) \cdot \left( \frac{\alpha a_i \cdot \OptBSP}{120 B \log m} + \frac{\alpha}{8 \log m} \cdot \left(|\Opt_i(T_i)| - \frac{a_i \cdot \OptBSP}{20B}\right)^+ \right)\right]- \frac{\OptBSP}{m^2}\\
        &\ge \frac{1}{2}\,\E\left[\ind(i \in A) \cdot \left(\frac{\alpha}{6\log m} \frac{a_i \cdot \OptBSP}{20 B} + \frac{\alpha}{8 \log m} \cdot \left(|\Opt_i(T_i)| - \frac{a_i \cdot \OptBSP}{20B}\right)^+ \right)\right]- \frac{\OptBSP}{m^2}\\
        &\ge \frac{\alpha}{16\log m} \cdot\E\left[\ind(i \in A) \cdot  \max\left\{\frac{a_i \OptBSP}{20B}, ~|\Opt_i(T_i)|\right\} \right]- \frac{\OptBSP}{m^2}.
    \end{align*}
    The lemma then holds with constant $C = 16 \cdot 20$, thus concluding the proof. 
\end{proof}

In addition to the single-machine control provided by the previous lemma, the upper-bound on $\OptBSP$ that \Cref{lem:opt-upper-bound} provides for \Cref{alg:p-bounded} still holds for the current algorithm  \Cref{alg:norm}, namely in every scenario we have
    \begin{align}
    \frac{9}{10}\OptBSP \leq |\Alg| + \sum_{i \in A} \max \left\{\frac{\OptBSP \cdot a_i}{B},~ |\Opt_i(T_i)| \right\}. \label{eq:oldLemma}
    \end{align}

To complete the proof of \Cref{thm:l1-budget}, we again sum \Cref{lem:normUnifiedLemma} over all machines $i \in [m]$ and apply \eqref{eq:oldLemma}:
  \begin{align*}
        \E |\Alg| 
        &\geq \frac{\alpha}{C \log m}\E\left[  \sum_{i \in A} \max\left\{\frac{\OptBSP \cdot a_i}{B},~ |\Opt_i(T_i)|\right\}\right] - \frac{\OptBSP}{m}\\
        &\geq \frac{\alpha}{C \log m} \cdot \bigg(\frac{9}{10}\OptBSP - \E|\Alg|\bigg) - \frac{\OptBSP}{m}, 
    \end{align*}
    and collecting the terms $\E |\Alg|$ gives $\E|\Alg| \geq \Omega(\frac{\alpha}{\log m}) \cdot \OptBSP$ as desired. This concludes the proof of \Cref{thm:l1-budget}.



\subsection{\ref{eq:packGenSched} with Symmetric Outer Norm}\label{sec:sym-reduction}

    We now prove \Cref{thm:norm-compose-sched-pack}, namely we design a competitive algorithm for \ref{eq:packGenSched} with symmetric outer norm. So consider an instance $I_{SP}$ of the \ref{eq:packGenSched} problem where the outer norm is given by a symmetric norm $\|\cdot\|$, i.e., we want to find online a schedule of jobs to machines $x$ so that $\max\{\sum_{ijk} x_{ijk} : \|\Lambda(x)\| \le B, \sum_i x_{ijk} \le 1 ~~\forall j, x \in \{0,1\}^{m n r}\}$.
    
    We solve this problem by reducing it to \ref{eq:budgetGenSched} with (weighted) $\ell_1$ outer norm. The key element for this is the claim that there is a budget $\bar{b}$ that we can set for the load of each of the machines and still at least a $\frac{1}{\log m}$ fraction of $\OPT(I_{SP})$. To make this precise, for a scalar $b \ge 0$, let $I_{BSP-symm}(b)$ denote the instance of \ref{eq:budgetGenSched} with: 1) the same jobs outer and inner norms as $I_{SP}$; 2) with outer load budget $2B$ instead of $B$; 3) where all machines have a load budget $b$, namely, whose offline version is $\max\{\sum_{ijk} x_{ijk} : \|(y_1 b, y_2 b, \ldots, y_m b)\| \le 2B, \Lambda_i(x) \le b y_i ~~\forall i, \sum_{ik} x_{ijk} \le 1 ~~\forall j, x \in \{0,1\}^{m n r}, y \in \{0,1\}^m.\}$.
    
\begin{claim} \label{claim:symm}
    Let $\tilde{B} = \frac{B}{\|e_i\|}$ (which is the same for every canonical vector $e_i$, since $\|\cdot\|$ is symmetric), and consider the $\log m$ values $\{\tilde{B}, \frac{\tilde{B}}{2}, \frac{\tilde{B}}{4}, \ldots, \frac{\tilde{B}}{2^{\log m - 1}}\}$. Then $\sum_{\bar{b} \in \{\tilde{B}, \frac{\tilde{B}}{2}, \frac{\tilde{B}}{4}, \ldots, \frac{\tilde{B}}{2^{\log m - 1}}\}} \OPT(I_{BSP-symm}(\bar{b})) \ge \OPT(I_{SP})$.
\end{claim}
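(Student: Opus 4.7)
The plan is to take an optimal schedule $x^*$ for $I_{SP}$ and partition its used machines by load level into $\log m$ buckets, producing one feasible sub-schedule of the corresponding budgeted instance per bucket, such that the total number of scheduled jobs across the sub-schedules equals $\OPT(I_{SP})$. Concretely, for $k = 1, \ldots, \log m - 1$, I would set $B_k := \{i : \Lambda^*_i \in (\tilde{B}/2^k,\, \tilde{B}/2^{k-1}]\}$, and $B_{\log m} := \{i : \Lambda^*_i \in (0,\, \tilde{B}/2^{\log m - 1}]\}$, where $\Lambda^*_i = \|(x^*_{ijk} p_{ijk})_{j,k}\|_i$. The first step is to check that every machine with $\Lambda^*_i > 0$ falls into some bucket, i.e.\ that $\Lambda^*_i \leq \tilde{B}$ always: this follows from monotonicity of the outer norm, since $B \geq \|\Lambda^*\| \geq \|\Lambda^*_i \cdot e_i\| = \Lambda^*_i \|e_i\|$, so $\Lambda^*_i \leq B/\|e_i\| = \tilde{B}$.

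For each $k$, define the sub-solution $(x^{(k)}, y^{(k)})$ by restricting $x^*$ to machines in $B_k$ (and setting $y^{(k)}_i = \ind_{i \in B_k}$); I claim this is feasible for $I_{BSP\text{-}symm}(\bar b_k)$ with $\bar b_k := \tilde{B}/2^{k-1}$. The per-machine budget is respected by the very definition of $B_k$, since $\Lambda^*_i \leq \bar b_k$ for each $i \in B_k$. The heart of the argument is verifying the (doubled) outer budget $\bar b_k\, \|\ind_{B_k}\| \leq 2B$, where by symmetry of $\|\cdot\|$ I used that $\|(\bar b_k y^{(k)}_i)_i\| = \bar b_k \|\ind_{B_k}\|$.

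I would handle this outer bound in two cases, and this is really the main (though still mild) obstacle. For $k \le \log m - 1$, every $i \in B_k$ has $\Lambda^*_i > \bar b_k/2$, so by coordinate-wise monotonicity of $\|\cdot\|$ we have $B \geq \|\Lambda^*\| \geq (\bar b_k/2)\, \|\ind_{B_k}\|$, giving $\bar b_k \|\ind_{B_k}\| \leq 2B$ directly. For $k = \log m$, this lower-bound trick fails since loads can be arbitrarily small, and here I would instead use the trivial bound $|B_{\log m}| \le m$ together with the triangle inequality plus symmetry: $\|\ind_{B_{\log m}}\| \leq \|\ones\| \leq \sum_{i=1}^m \|e_i\| = m \|e_i\|$, whence $\bar b_{\log m} \|\ind_{B_{\log m}}\| \leq (2\tilde{B}/m) \cdot m \|e_i\| = 2\tilde{B}\|e_i\| = 2B$. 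This is where the choice of $\tilde B = B/\|e_i\|$ and the decision to stop the geometric grid at exactly $\log m$ levels is critical.

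To conclude, every job scheduled by $x^*$ is placed on some machine with positive load, hence in some $B_k$; writing $J_k$ for the jobs that $x^*$ assigns to machines in $B_k$, we have $\sum_k |J_k| = \OPT(I_{SP})$. Since $(x^{(k)}, y^{(k)})$ is a feasible solution of value $|J_k|$ for $I_{BSP\text{-}symm}(\bar b_k)$, it witnesses $\OPT(I_{BSP\text{-}symm}(\bar b_k)) \geq |J_k|$, and summing over $k \in \{1, \ldots, \log m\}$ yields the claim.
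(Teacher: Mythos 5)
Your proof is correct and takes essentially the same approach as the paper: bucket the machines used by $x^*$ into $\log m$ geometric load ranges, show each bucket's restriction of $x^*$ is feasible for the budgeted instance at the corresponding budget (per-machine budget trivially, outer budget via "load is at least half the budget" for the non-catch-all buckets and via the triangle-inequality/convexity bound $\|(\tfrac{2\tilde B}{m},\ldots)\|\le 2B$ for the catch-all bucket), and sum the values. The only differences are cosmetic index conventions; the substantive content matches the paper's argument.
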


\begin{proof}
    Consider the optimal offline solution $x^*$ of $I_{SP}$. For $\ell = 0,1,\ldots, \log m - 1$, define $G_\ell$ as the group of machines $i$ where $x^*$ has load $\Lambda_i(x^*)$ between $\frac{\tilde{B}}{2^{\ell+1}}$ and $\frac{\tilde{B}}{2^\ell}$, and define the group $G_{\log m - 1}$ of machines with $\Lambda_i(x^*) \le {\tilde{B}}/{2^{\log m - 1}}$.

    Fix any group $G_\ell$ and let $\bar{b} = \frac{\tilde{B}}{2^{\ell}}$. We claim that $\OPT(I_{BSP-symm}(\bar{b}))$ is at least the number of jobs that the optimal solution $x^*$ schedules on the machines in group $G_\ell$, i.e., at least $\sum_{i \in G_{\ell}, j,k} x^*_{ijk}$; to proves this, let $\bar{x}$ denote the assignment that $x^*$ makes of jobs into the machines in $G_{\ell}$ (and let $\bar{y}_i = 1$ iff $i \in G_{\ell}$ be the indicator of activating all machines in this group), so it suffices to show that $(\bar{x}, \bar{y})$ is a feasible solution for $I_{BSP-symm}(\bar{b})$.

    For that, first, the budget $\bar{b}$ of the machines in the instance $I_{BSP-symm}(\bar{b})$ are respected by the schedule $\bar{x}$, since it just mimics the assignment of $x^*$ onto the machines $i$ in group $G_{\ell}$, which by definition have load $\Lambda_i(x^*) \le \frac{\tilde{B}}{2^{\bar{\ell}}} = \bar{b}$. For feasibility of the outer budget, namely $\|(\bar{y}_1 \bar{b}, \ldots, \bar{y}_m \bar{b})\| \le 2B$: if $\bar{\ell} \neq \log m - 1$, then for every machine we have the load bound $\bar{y}_i \bar{b} \le 2 \Lambda_i(\bar{x})$ (if $i \notin G_{\bar{\ell}}$ then both sides are 0, and if $i \in G_{\bar{\ell}}$ the RHS is $\bar{b}$ and the LHS is at least $2 \cdot \frac{\tilde{B}}{2^{\bar{\ell} - 1}} = \bar{b}$); thus, $\|(\bar{y}_1 \bar{b}, \ldots, \bar{y}_m \bar{b})\| \le \|2 \Lambda(\bar{x})\| \le \|2 \Lambda(x^*)\| \le 2B$, the last inequality by feasibility of $x^*$ in $I_{SP}$, and the claim holds. If $\bar{\ell} = \log m - 1$, then $\bar{b} = \frac{2\tilde{B}}{m}$ and so using convexity of norms $\|(\bar{y}_1 \bar{b}, \ldots, \bar{y}_m \bar{b})\| \le \|(\frac{2\tilde{B}}{m}, \ldots, \frac{2\tilde{B}}{m})\| \le \frac{1}{m} \sum_{i = 1}^m \|2\tilde{B} \cdot e_i\| = 2B$, the last equation by the definition of $\tilde{B}$. Thus, in both cases of $\bar{\ell}$ we have that $(\bar{x}, \bar{y})$ is a feasible solution for  $I_{BSP-symm}(\bar{b})$, proving the desired result. 

    Adding this result over all groups $G_\ell$, we get  
    \[ \sum_{\bar{b} \in \{\tilde{B}, {\tilde{B}}/{2}, {\tilde{B}}/{4}, \ldots, {\tilde{B}}/{2^{\log m - 1}}\}} \OPT(I_{BSP-symm}(\bar{b})) ~~\ge~~ \sum_\ell \sum_{i \in G_\ell, j, k} x^*_{ijk} ~~= ~~\OPT(I_{SP}),\] 
    which concludes the proof of the claim. 
\end{proof}

    The second key observation is that because the outer norm $\|\cdot\|$ is symmetric, and because all machine budgets are the same, the instance $I_{BSP-symm}(\bar{b})$ is equivalent to an instance $I_{BSP-\ell_1}(\bar{b})$ of \ref{eq:budgetGenSched} with $\ell_1$ outer norm (over which we can apply the algorithm from \Cref{sec:BSPl1}). More precisely, let $\kappa := \max\{ k : \|(\underbrace{\bar{b},\ldots,\bar{b}}_{k}, 0, \ldots,0)\| \le 2B \}$ be the maximum number of machines that can be activated and remain within budget $2B$. Then let $I_{BSP-\ell_1}(\bar{b})$ be the instance that is identical to $I_{BSP-symm}(\bar{b})$ but with the $\ell_1$ outer budget constraint $\sum_i y_i \le \kappa$ instead of $\|(y_1 \bar{b}, \ldots, y_m \bar{b})\| \le 2B$. The symmetry of $\|\cdot\|$ immediately gives the equivalence of these constraints, and thus, of these instances: for any $y \in \{0,1\}^m$,
    \begin{align}
        \sum_i y_i \le \kappa  \quad \Longleftrightarrow\quad \|(y_1 \bar{b}, \ldots, y_m \bar{b})\| \le 2B \enspace .\label{eq:symmL1}
    \end{align}

    To solve $I_{SP}$, we then to the following: 1) Pick uniformly at random a value $\bar{b}$ in the set $\{\tilde{B}, \frac{\tilde{B}}{2}, \frac{\tilde{B}}{4}, \ldots, \frac{\tilde{B}}{2^{\log m - 1}}\}$; 2) Run the algorithm from \Cref{thm:l1-budget} over the instance $I_{BSP-\ell_1}(\bar{b})$, obtaining a solution $(\bar{x},\bar{y})$.

    First, we check that the schedule $\bar{x}$ is a solution for the original \ref{eq:packGenSched} instance $I_{SP}$ that: (1) Violates the outer budget by a factor of at most $2c$; (2) schedules at least $\frac{\alpha}{\log^2 m} \OPT(I_{SP})$ jobs (the parameters $\alpha,c$ come from the assumed  $(\alpha, c)$-solvability of \ref{eq:normPack}$_{\|\cdot\|_i}$ for the inner norms $\|\cdot\|_i$). To verify (1), 
    \Cref{thm:l1-budget} and the equivalence from \eqref{eq:symmL1} guarantee that $\|(\bar{y}_1 \bar{b}, \ldots, \bar{y}_m \bar{b})\| \le 2B$ and $\Lambda_i(\bar{x}) \le c\bar{y}_i \bar{b}$, which combined give $\|\Lambda(\bar{x})\| \le 2c B$, giving item (1). For Item (2), \Cref{thm:l1-budget} guarantees that $\bar{x}$ schedules at least $\Omega(\frac{\alpha}{\log m}) \OPT(I_{BSP-symm}(\bar{b}))$; taking expectation over the $\log m$ possibilities of $\bar{b}$, the expected number of jobs scheduled is $\frac{1}{\log m} \sum_{\bar{b} \in \{\tilde{B}, {\tilde{B}}/{2}, {\tilde{B}}/{4}, \ldots, {\tilde{B}}/{2^{\log m - 1}}\}} \Omega(\frac{\alpha}{\log m}) \OPT(I_{BSP-symm}(\bar{b})) \ge \Omega(\frac{\alpha}{\log^2 m}) \OPT(I_{SP})$, where the last inequality follows from \Cref{claim:symm}. 

    Thus, $\bar{x}$ is the desired solution for the original \ref{eq:packGenSched} instance $I_{SP}$, concluding the proof of Item 2 of \Cref{thm:genBudgetNorm}.
    


\subsection{\ref{eq:packGenSched} with $\ell_p$ or Top-$k$ Outer Norm}\label{sec:lp-topk-reduction}

    We now prove Item 2 of \Cref{thm:genBudgetNorm}, namely competitive algorithms for \ref{eq:packGenSched} with $\ell_p$ and Top-$k$ outer norms; since the analyses for both norms share most elements, we do them together. So consider an instance $I^f_{SP}$ of the \ref{eq:packGenSched} problem where the outer function $f$ is either the $\ell_p$ or the Top-$k$ norm, and the inner norms $\|\cdot\|_i$ are $(\alpha, c)$-packable (with job processing times $p$ and load budget $B$); i.e., we want to find online a schedule of jobs to machines $x$ so that $\max\{\sum_{ijk} x_{ijk} : f(\Lambda(x)) \le B, \sum_{i,k} x_{ijk} \le 1 ~~\forall j, x \in \{0,1\}^{mnr}\}$, and $\Lambda_i(x) = \|(x_{ijk} p_{ijk})_{jk}\|_i$.

    We reduce this problem to \ref{eq:budgetGenSched} with weighted $\ell_1$ outer norm in two steps. First, we employ the reduction to \ref{eq:budgetGenSched} used in  \Cref{thm:pboundPackSched}, which we recall here since we need to exploit finer grained information from that construction. This reduction created the following instance $I^f_{BSP}$ of \ref{eq:budgetGenSched}: for each machine $i \in [m]$ in $I_{SP}$, we make $\log m$ copies $(i,1), (i,2), \dots, (i, \log m)$ in $I_{BSP}$ with the same internal norm $\|\cdot\|_{i,\ell} := \|\cdot\|_i$ and job processing times $p_{i\ell j k} := p_{i j k}$, but the $\ell$th copy has load budget    
    $b_{i \ell} = \frac{b_i^{\max}}{2^\ell}$, where
    \[b_i^{\max} = \sup\{b \geq 0 : f(b \cdot e_i) \leq B\};
    \]
    in other words, $b_i^{\max}$ is the largest cost that machine $i$ can incur without exceeding the {outer load budget} on its own. The outer function $f'$ of the new instance $I_{BSG}$ is defined by employing the original outer function $f$ over the sum of the copies of the machines:
    \begin{align*}
    f'((z_{i, \ell})_{i,\ell}) = f\Big(\sum_\ell z_{1,\ell} \,,\, \sum_\ell z_{2,\ell} \,,\, \ldots \,,\, \sum_\ell z_{m,\ell}\Big).
    \end{align*}
    Finally, the new instance has load budget $3 B$. So the offline version of $I_{BSP}$ is given by $\max\{\sum_{i\ell j k} x_{i \ell j k} : f((\sum_\ell y_{i \ell} b_{i \ell})_i) \le 3B, \Lambda'_{i \ell}(x) \le y_{i \ell} b_{i \ell}  ~~\forall i,\ell,~ \sum_{i,\ell, k} x_{i\ell j k} \le 1 ~~\forall j, x \in \{0,1\}^{m (\log m) n r}, y \in \{0,1\}^{m \cdot \log m}.\}$, where the machine load is given by $\Lambda'_{i \ell}(x) = \|(x_{i \ell j k} p_{i j k})_{j,k} \|_i$. 

    The main property is that this instance is a ``relaxation'' of the original instance $I_{SP}$; this is argued in the proof in \Cref{sec:reductionP} (we remark that while the theorem statement assumes $p$-bounded outer functions, this part of the argument only uses their $p$-subadditivity; since our outer function $f$ is a norm, it is 1-subadditive and the guarantees still hold). 
    
    \begin{claim} \label{claim:relSPtoBSP}
        For $f$ being any monotone norm, we have:
        \begin{enumerate}
            \item $\OPT(I^f_{BSP}) \ge \OPT(I_{SP})$.
            \item Any online solution $(x^f, y^f)$ for $I^f_{BSP}$ that violates the outer load budget by a factor of at most $\alpha$ (i.e., $f((\sum_\ell y^f_{i \ell} b_{i \ell})_i) \le \alpha \cdot 3B$) violates each machine's load budget by at most a factor of $c$ (i.e., $\Lambda'_{i \ell}(x^f) \le c y^f_{i \ell} b_{i \ell}$ for all $i,\ell$) can be converted online into a solution $\bar{x}$ for $I_{SP}$ that schedules the same number of jobs (i.e., $\sum_{ijk} \bar{x}_{ijk} = \sum_{i\ell j k} x^f_{i \ell j k}$) and violates the outer load budget by a factor of at most $3 \alpha c$ (i.e., $f(\Lambda(\bar{x})) \le 3 \alpha c B$). 
        \end{enumerate}
    \end{claim}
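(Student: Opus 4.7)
The plan is to adapt the reduction in the proof of \Cref{thm:pboundPackSched} (\Cref{sec:reductionP}) from $p$-subadditive aggregates to norms: since every norm satisfies the triangle inequality, $f$ is $1$-subadditive, so the same argument goes through with $p = 1$, yielding the stated constant $3$ in Item~1 and $3\alpha c$ in Item~2 (instead of $3^p$ and $(3\alpha c)^p$).

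For Item~1, I would take the optimal offline schedule $x^*$ of $I_{SP}$ and, for each machine $i$, place its entire $x^*$-assignment onto the copy $(i,\ell_i)$ whose budget $b_{i,\ell_i}$ is the smallest one satisfying $b_{i,\ell_i} \ge \|(x^*_{ijk} p_{ijk})_{j,k}\|_i$ (defaulting to $\ell_i = \log m$ if no such copy exists). This respects every per-copy budget by construction and preserves the number of scheduled jobs and the per-job constraint $\sum_{i,k} x_{ijk} \le 1$. By the choice of $\ell_i$, in every coordinate
\[
b_{i,\ell_i} \;\le\; 2\,\Lambda_i(x^*) + \frac{b_i^{\max}}{m},
\]
so using monotonicity, $1$-subadditivity, and positive homogeneity of the norm $f$,
\[
f(b_{1,\ell_1},\ldots,b_{m,\ell_m}) \;\le\; f\big(2\,\Lambda(x^*)\big) + f\big((b_i^{\max}/m)_i\big) \;\le\; 2B + B \;=\; 3B,
\]
where the second term is bounded via convexity as $f((b_i^{\max}/m)_i) \le \tfrac{1}{m}\sum_i f(b_i^{\max} e_i) \le B$ by the very definition of $b_i^{\max}$. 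This certifies feasibility of the constructed solution in $I^f_{BSP}$ (whose outer budget is $3B$) and hence $\OPT(I^f_{BSP}) \ge \OPT(I_{SP})$.

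For Item~2, I would simply set $\bar{x}_{ijk} := \sum_\ell x^f_{(i,\ell)jk}$, an immediate online transformation (the copy index is just forgotten once the $I^f_{BSP}$-algorithm commits). This preserves the number of scheduled jobs and inherits $\sum_{i,k} \bar{x}_{ijk} \le 1$ from $x^f$. Using subadditivity of each inner norm, namely $\|(\bar{x}_{ijk} p_{ijk})_{j,k}\|_i \le \sum_\ell \Lambda'_{i,\ell}(x^f)$, together with monotonicity and positive homogeneity of $f$ and the assumed violation bounds on $(x^f, y^f)$, I would derive
\begin{align*}
f(\Lambda(\bar{x}))
\;\le\; f'\big((\Lambda'_{i,\ell}(x^f))_{i,\ell}\big)
\;\le\; f'\big((c\, y^f_{i,\ell} b_{i,\ell})_{i,\ell}\big)
\;=\; c \cdot f'\big((y^f_{i,\ell} b_{i,\ell})_{i,\ell}\big)
\;\le\; 3\alpha c\, B.
\end{align*}

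I do not anticipate a real obstacle: the argument is bookkeeping, and the only care needed is that each invocation of the triangle inequality, positive homogeneity, and monotonicity is applied to a norm (which is automatic here), and that the conversion in Item~2 genuinely uses no future information — which is clear since $\bar{x}_{ijk}$ is determined as soon as the $I^f_{BSP}$-algorithm decides on any copy of machine $i$.
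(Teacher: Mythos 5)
Your proposal is correct and follows essentially the same route as the paper, which handles this claim by pointing back to the reduction in the proof of \Cref{thm:pboundPackSched} (\Cref{sec:reductionP}) and noting that the argument only needs $1$-subadditivity of the outer function, which every norm has. Your only cosmetic deviation is using positive homogeneity of the norm (e.g.\ $f(2\Lambda(x^*)) = 2f(\Lambda(x^*))$ and $f'(cz)=cf'(z)$) where the paper instead repeatedly invokes $p$-subadditivity with $p=1$; for norms these give identical bounds, so the two derivations coincide.
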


    Now we reduce the instance $I^f_{BSP}$ to an instance $I^{\ell_1}_{BSP}$ of the same problem but with weighted $\ell_1$; now this will depend whether $f$ is the $\ell_p$ or the Top-$k$ norm. The instance $I^{\ell_1}_{BSP}$ is the same as $I^f_{BSP}$ but: 
    
    \begin{enumerate}
        \item If $f = \ell_p$, then $I^{\ell_1}_{BSP}$ uses the weighted $\ell_1$ norm where the weight of machine $(i,\ell)$ is $w_{i \ell} := b_{i \ell}^{p-1}$, and the outer load budget is $(3 B)^p$. That is, the offline version of the problem is $\max\{\sum_{i\ell j k} x_{i \ell j k} : \sum_{i, \ell} y_{i \ell} b_{i \ell}^p \le (3 B)^p, \Lambda'_{i \ell}(x) \le y_{i \ell} b_{i \ell}  ~~\forall i,\ell,~ \sum_{i,\ell,k} x_{i\ell j k} \le 1 ~~\forall j, x \in \{0,1\}^{m (\log m) n r}, y \in \{0,1\}^{m \cdot \log m}.\}$.

        \item If $f = \ell_p$, then $I^{\ell_1}_{BSP}$ uses the weighted $\ell_1$ norm where the weight of machine $(i,\ell)$ is $w_{i \ell} := 1$ if $b_{i \ell} > \frac{3B}{k}$ and $w_{i \ell} := 0$ otherwise, and the outer load budget is $3 B$. That is, the offline version of the problem is $\max\{\sum_{i\ell j k} x_{i \ell j k} : \sum_{i, \ell : b_{i \ell} > \frac{3B}{k}} y_{i \ell} b_{i \ell} \le 3 B, \Lambda'_{i \ell}(x) \le y_{i \ell} b_{i \ell}  ~~\forall i,\ell,~ \sum_{i,\ell,k} x_{i\ell j k} \le 1 ~~\forall j, x \in \{0,1\}^{m (\log m) n r}, y \in \{0,1\}^{m \cdot \log m}.\}$.
    \end{enumerate}

    The main claim is that in both cases, $I^{\ell_1}_{BSP}$ is a relaxation of the respective instance $I^f_{BSP}$.

    \begin{claim} \label{claim:relBSPl1}
        For $f$ being either $\ell_p$ or Top-$k$, the instance $I^{\ell_1}_{BSP}$ defined above satisfies the following:
        \begin{enumerate}
            \item $\OPT(I^{\ell_1}_{BSP}) \ge \OPT(I^f_{BSP})$
            \item Any online solution $(x^{\ell_1}, y^{\ell_1})$ for $I^{\ell_1}_{BSP}$ that violates each machine's load budget by at most a factor of $c$ (i.e., $\Lambda'_{i \ell}(x^{\ell_1}) \le c y^{\ell_1}_{i \ell} b_{i \ell}$ for all $i,\ell$)  is also a solution for $\OPT(I^f_{BSP})$ that violates the outer load budget by a factor of at most $3$ and violates each machine's load budget by at most a factor of $c$.
        \end{enumerate}
    \end{claim}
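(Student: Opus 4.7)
The plan is to verify both items by directly comparing the outer budget constraints of $I^{\ell_1}_{BSP}$ and $I^f_{BSP}$, since the two instances share identical jobs, inner norms, machine-level budgets $b_{i\ell}$, and machine-load constraints $\Lambda'_{i\ell}(x) \le c y_{i\ell} b_{i\ell}$. Thus Part 1 reduces to showing that any $(x,y)$ satisfying the outer budget of $I^f_{BSP}$ also satisfies that of $I^{\ell_1}_{BSP}$, while Part 2 reduces to showing that any $(x^{\ell_1}, y^{\ell_1})$ satisfying the outer budget of $I^{\ell_1}_{BSP}$ satisfies that of $I^f_{BSP}$ up to a multiplicative factor of $3$; the machine-level violation $c$ then passes through trivially. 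For both cases, let $z_i := \sum_\ell y_{i\ell} b_{i\ell}$ and recall the geometric structure $b_{i\ell} = b_i^{\max}/2^\ell$.

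For $f = \ell_p$, Part 1 follows from the elementary inequality $\sum_\ell y_{i\ell} b_{i\ell}^p \le \bigl(\sum_\ell y_{i\ell} b_{i\ell}\bigr)^p = z_i^p$, valid for $y_{i\ell} \in \{0,1\}$ and $p \ge 1$ (using that $y_{i\ell}^p = y_{i\ell}$ and $\sum a_\ell^p \le (\sum a_\ell)^p$ for nonnegative reals); summing over $i$ and using $\sum_i z_i^p = \|z\|_p^p \le (3B)^p$ gives the weighted $\ell_1$ budget. For Part 2, let $\ell^*_i$ be the smallest index with $y^{\ell_1}_{i\ell^*_i} = 1$ (skipping $i$'s with no active layer, for which $z_i = 0$). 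The geometric decrease gives $z_i \le \sum_{\ell \ge \ell^*_i} b_{i\ell} \le 2 b_{i\ell^*_i}$, so $z_i^p \le 2^p b_{i\ell^*_i}^p \le 2^p \sum_\ell y^{\ell_1}_{i\ell} b_{i\ell}^p$. Summing over $i$ yields $\|z\|_p^p \le 2^p \sum_{i,\ell} y^{\ell_1}_{i\ell} b_{i\ell}^p \le (2 \cdot 3B)^p$, hence $\|z\|_p \le 6B \le 3 \cdot 3B$.

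For $f = \textrm{Top-}k$, Part 1 uses that if $\textrm{Top-}k(z) \le 3B$, then at most $k$ indices have $z_i > 3B/k$, and all such indices lie among the top-$k$ largest coordinates of $z$; hence $\sum_{i: z_i > 3B/k} z_i \le \textrm{Top-}k(z) \le 3B$. Any pair $(i,\ell)$ with $y_{i\ell}=1$ and $b_{i\ell} > 3B/k$ forces $z_i > 3B/k$, and the contribution $\sum_{\ell: b_{i\ell} > 3B/k} y_{i\ell} b_{i\ell}$ is at most $z_i$, so the weighted $\ell_1$ constraint follows. For Part 2, decompose $z_i = \mathrm{large}_i + \mathrm{small}_i$ with $\mathrm{large}_i := \sum_{\ell: b_{i\ell} > 3B/k} y^{\ell_1}_{i\ell} b_{i\ell}$ and $\mathrm{small}_i$ analogously. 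Summing the large parts recovers exactly the weighted $\ell_1$ constraint, so $\sum_i \mathrm{large}_i \le 3B$. For the small parts, geometric decrease gives $\mathrm{small}_i \le \sum_{\ell: b_{i\ell} \le 3B/k} b_{i\ell} \le 2 \cdot (3B/k) = 6B/k$ for every $i$. Letting $T$ be the top-$k$ index set of $z$, we conclude $\textrm{Top-}k(z) = \sum_{i \in T}(\mathrm{large}_i + \mathrm{small}_i) \le \sum_i \mathrm{large}_i + |T| \cdot \max_i \mathrm{small}_i \le 3B + k \cdot (6B/k) = 9B = 3 \cdot 3B$.

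The main obstacle is keeping the constants tight through the double reduction: the outer budget of $I^f_{BSP}$ is already $3B$ (from Claim~\ref{claim:relSPtoBSP}), so Part 2 can afford at most another factor of $3$ blow-up while preserving the overall $O(c)$ bound promised by Theorem~\ref{thm:genBudgetNorm}. Both norm cases achieve this $-$ $\ell_p$ with factor $2$ and Top-$k$ with exactly $3$ $-$ thanks to the geometric halving $b_{i\ell} = b_i^{\max}/2^\ell$, which tames both the lower tail $\sum_{\ell \ge \ell^*_i} b_{i\ell}$ and the ``small-$b$'' tail $\sum_{\ell: b_{i\ell} \le 3B/k} b_{i\ell}$. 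With these constants verified, both items of Claim~\ref{claim:relBSPl1} follow.
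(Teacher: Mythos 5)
Your proposal is correct and follows essentially the same argument as the paper: Part~1 via $\sum_\ell y_{i\ell} b_{i\ell}^p \le z_i^p$ (resp.\ the observation that every active large-budget copy forces $z_i > 3B/k$ to be among the top-$k$ coordinates), and Part~2 via the geometric-halving bound $z_i \le 2 b_{i\ell_i^*}$ (resp.\ the large/small split with the top-$k$ index set). The only difference is presentational — you sum directly over the top-$k$ index set where the paper invokes the triangle inequality for $\|\cdot\|_{\topk}$ — but the bounds and logic are identical.
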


    \begin{proof}
    \textbf{Case 1 ($f = \|\cdot\|_p$)}: Observe that for every possible machine activation $y \in \{0,1\}^{m \cdot \log m}$ and every $i$ we have the following connection between the $\ell_p$-norm (to the power of $p$) outer constraint of $I^f_{BSP}$ and the weighted $\ell_1$-norm outer constraint of $I^{\ell_1}_{BSP}$: 
    \begin{align}
        \sum_\ell y_{i \ell} b_{i \ell}^p \le \bigg(\sum_\ell y_{i \ell} b_{i \ell}  \bigg)^p \le 2^p \cdot \sum_\ell y_{i \ell} b_{i \ell}^p; \label{eq:lpTol1}
    \end{align}
    the first inequality is straightforward and just uses the fact $y_{i \ell}^p = y_{i \ell}$, and the second uses that because the $(b_{i,\ell})_\ell$ are defined using powers of 2 we have $\sum_\ell y_{i \ell} b_{i \ell} \le 2 \max_\ell y_{i \ell} b_{i \ell}$ and so $(\sum_\ell y_{i \ell} b_{i \ell})^p \le 2^p \max_\ell (y_{i \ell} b_{i \ell})^p \le 2^p \cdot \sum_\ell y_{i \ell} b_{i \ell}^p$. 

    To prove $\OPT(I^{\ell_1}_{BSP}) \ge \OPT(I^f_{BSP})$, consider an optimal solution $(x^*,y^*)$ for $\OPT(I^f_{BSP})$; it suffices to show that this solution is feasible for $\OPT(I^{\ell_1}_{BSP})$. For that, it suffices to check validity of the outer load budget $\sum_{i, \ell} y^*_{i \ell} b_{i \ell}^p \le (3 B)^p$. But this follows from the first inequality of \eqref{eq:lpTol1} and $\|(\sum_\ell y^*_{i \ell} b_{i \ell})\|_p \le 3B \equiv \sum_i (\sum_\ell y^*_{i \ell} b_{i \ell})^p \le (3B)^p$ (by feasibility of $y^*$ in $I^f_{BSP}$). 

    To prove the second item of the claim, consider any solution $(x^{\ell_1}, y^{\ell_1})$ for $I^{\ell_1}_{BSP}$. Again it suffices to verify that, seeing it as a solution for $I^f_{BSP}$, it violates the outer budget by a factor of at most 3, namely $\|(\sum_\ell y^{\ell_1}_{i \ell} b_{i \ell})\|_p \le 9 B \equiv  \sum_i (\sum_\ell y^{\ell_1}_{i \ell} b_{i \ell})^p \le (6 B)^p$. But this follows from the second inequality of \eqref{eq:lpTol1} and the fact $\sum_{i,\ell} y_{i \ell} b_{i \ell}^p \le (3 B)^p$ (by feasibility of $y^{\ell_1}$ in $I^{\ell_1}_{BSP}$). 

\medskip
    \textbf{Case 2 ($f = \|\cdot\|_{\topk}$)}:
    The analogous of \eqref{eq:lpTol1} relating the $\topk$-norm with the weighted $\ell_1$-norm in his case is:
    \begin{align}
        \min\bigg\{\sum_{i, \ell : b_{i \ell} > \frac{3B}{k}} y_{i \ell} b_{i \ell}\,,\, 3B\bigg\}
 <  \bigg\|\bigg(\sum_\ell y_{i \ell} b_{i \ell}  \bigg)_i \bigg\|_{\topk} \le \sum_{i, \ell : b_{i \ell} > \frac{3B}{k}} y_{i \ell} b_{i \ell} + 6B; \label{eq:topkTol1}
    \end{align}    
    To see that the first inequality holds: assume $\|(\sum_\ell y_{i \ell} b_{i \ell})_i\|_{\topk} \le 3B$, else the inequality is trivial. Notice that in this case, every coordinate of the vector $(\sum_\ell y_{i \ell} b_{i \ell})_i$ that is $\ge \frac{3B}{k}$ is among one od its top $k$ coordinates (else its top $k$ coordinates would all be strictly bigger than $\frac{3B}{k}$ and we could not have $\|(\sum_\ell y_{i \ell} b_{i \ell})_i\|_{\topk} \le 3B$). In particular, for any coordinate $i$, its contribution to the top-$k$ norm is at least $\sum_{\ell : b_{i \ell} > \frac{3B}{k}} y_{i \ell} b_{i \ell}$ (i.e., if the RHS is greater than 0, then it is greater than $\frac{3B}{k}$ and thus these terms (plus others) are picked up by the top-$k$ norm). Considering all coordinates, this proves $\|(\sum_\ell y_{i \ell} b_{i \ell} )_i\|_{\topk} \ge \sum_i \sum_{\ell : b_{i \ell} > \frac{3B}{k}} y_{i \ell} b_{i \ell}$, giving the first inequality in \eqref{eq:topkTol1}. To prove the second inequality in \eqref{eq:topkTol1}, we use $u = (\sum_{\ell : b_{i \ell} > \frac{3B}{k}} y_{i \ell} b_{i \ell})$ and $v = (\sum_{\ell : b_{i \ell} \le \frac{3B}{k}} y_{i \ell} b_{i \ell})$ and the triangle inequality $\|u+v\|_{\topk} \le \|u\|_{\topk} + \|v\|_{\topk}$ to obtain
    \begin{align*}
    \bigg\|\bigg( \sum_\ell y_{i \ell} b_{i \ell}   \bigg)_i \bigg\|_{\topk} &\le \bigg\|\bigg( \sum_{\ell : b_{i \ell} > \frac{3B}{k}} y_{i \ell} b_{i \ell}   \bigg)_i \bigg\|_{\topk} +     \bigg\|\bigg( \sum_{\ell : b_{i \ell} \le \frac{3B}{k}} y_{i \ell} b_{i \ell}   \bigg)_i \bigg\|_{\topk} \\
    &\le \sum_{i,\ell : b_{i \ell} > \frac{3B}{k}} y_{i \ell} b_{i \ell} \,+\, k \cdot \max_i  \sum_{\ell : b_{i \ell} \le \frac{3B}{k}} y_{i \ell} b_{i \ell},
    \end{align*}
    where for the last inequality we upper bound the top-$k$ coordinates in the $\topk$-norm by all coordinates and by $k$ times the top coordinate, respectively. Finally, since the $b_{i\ell}$'s grow as powers of 2, we can upper bound the sum in the last term as $\sum_{\ell : b_{i \ell} \le \frac{3B}{k}} y_{i \ell} b_{i \ell} \le \sum_{\ell : b_{i \ell} \le \frac{3B}{k}} b_{i \ell} \le 2 \cdot \frac{3B}{k}$. We thus conclude that $\|( \sum_\ell y_{i \ell} b_{i \ell}   )_i \|_{\topk} \le  \sum_{i,\ell : b_{i \ell} > \frac{3B}{k}} y_{i \ell} b_{i \ell} + 6 B$, concluding the proof of \eqref{eq:topkTol1}.

    As before, \eqref{eq:topkTol1} immediately proves \Cref{claim:relBSPl1} when $f = \topk$: To prove $\OPT(I^{\ell_1}_{BSP}) \ge \OPT(I^f_{BSP})$, consider an optimal solution $(x^*,y^*)$ for $\OPT(I^f_{BSP})$; it suffices to show that this solution is feasible for $\OPT(I^{\ell_1}_{BSP})$, i.e., that it satisfies the outer load budget $\sum_{i, \ell : b_{i \ell} > \frac{3B}{k}} y^*_{i \ell} b_{i \ell} \le  3B$. But this follows from the first inequality of \eqref{eq:topkTol1} and $\|(\sum_\ell y^*_{i \ell} b_{i \ell})\|_{\topk} \le 3B$ (by feasibility of $y^*$ in $I^f_{BSP}$). 

    To prove the second item of the claim, consider any solution $(x^{\ell_1}, y^{\ell_1})$ for $I^{\ell_1}_{BSP}$. Again it suffices to verify that, seeing it as a solution for $I^f_{BSP}$, it violates the outer budget by a factor of at most 3, namely $\|(\sum_\ell y^{\ell_1}_{i \ell} b_{i \ell})\|_{\topk} \le 9 B$. But this follows from the second inequality of \eqref{eq:topkTol1} and the fact $\sum_{i,\ell : b_{i \ell} > \frac{3B}{k}} y_{i \ell} b_{i \ell} \le  3B$ (by feasibility of $y^{\ell_1}$ in $I^{\ell_1}_{BSP}$). This concludes the proof of \Cref{claim:relBSPl1}.
    \end{proof}

    Now we can finally approximate our original instance $I^f_{SP}$ of \ref{eq:packGenSched} using these reductions and the algorithm for the weighted $\ell_1$-norm case from \Cref{sec:BSPl1} (\Cref{thm:l1-budget}): We run this algorithm on the instance $I^{\ell_1}_{BSP}$ to obtain a solution $(\tilde{x}, \tilde{y})$ that violates each machine's load budget by at most a factor of $c$ (i.e., $\Lambda'_{i \ell}(x^{\ell_1}) \le c y^{\ell_1}_{i \ell} b_{i \ell}$ for all $i,\ell$), use \Cref{claim:relBSPl1} to convert this to a solution $(\check{x}, \check{y})$ for $I^f_{BSP}$, then use \Cref{claim:relSPtoBSP} to convert it to a solution $\bar{x}$ for $I^f_{SP}$. By the guarantees of these theorems/claims, we have that $\bar{x}$ violates the original outer load budget by a factor of at most $9 c$ and schedules at least $\Omega(\frac{\alpha}{\log m}) \OPT(I^{\ell_1}_{BSP})$ jobs (\Cref{thm:l1-budget}; recall that $\alpha$ is the assumed solvability guarantee of \ref{eq:normPack}$_{\|\cdot\|_i}$ for the inner norms $\|\cdot\|_i$ of the instance $I_{SP}$). This is at least $\Omega(\frac{\alpha}{\log m}) \OPT(I^f_{SP})$ by the claims, which then proves Items 1 of \Cref{thm:genBudgetNorm}.



\subsection{\ref{eq:packGenSched} with Composite Outer Norms}\label{sec:norm-compose}

    Finally, we now prove \Cref{thm:norm-compose-sched-pack}. Namely we design a competitive algorithm for \ref{eq:packGenSched}$_{\|\cdot\|,\{\|\cdot\|_i\}_i}$ when the outer norm is a disjoint composition of norms. 
\begin{proof}[Proof of \Cref{thm:norm-compose-sched-pack}]
    Let $S_1, \dots, S_L \subseteq [m]$ be disjoint, and let $\|\cdot\|'_1, \dots, \|\cdot\|'_L$ be norms such that $\|\cdot\|'_\ell : \R^{|S_\ell|} \to \R$, and let $\|\cdot\|' : \R^L \to \Rp$ be a norm. Additionally, we assume that $\|\cdot\|'$ is a $(\beta_1, \gamma_1)$-good outer norm, and each $\|\cdot\|'_\ell$ is a $(\beta_2, \gamma_2)$-good outer norm, as defined by \Cref{def:good-norm}.
    Now, let $\|\cdot\| : \R^m \to \R$ be the composite norm given by
\[ 
\|y\| := \Big\|\big(\|y_{S_1}\|_1', \dots, \|y_{S_L}\|_L' \big)\Big\|'.
\]
We seek to show that $\|\cdot\|$ is a $(\beta_1\beta_2,~\gamma_1 \gamma_2)$-good outer norm. Thus, we consider the problem \ref{eq:packGenSched}$_{\|\cdot\|, \{\|\cdot\|_i\}_i}$ with outer norm $\|\cdot\|$ and inner norms $\|\cdot\|_i : \R^{nr} \to \R$ such that for $i \in [m]$ \ref{eq:normPack}$\|\cdot\|_i$ is $(\alpha, c)$-solvable. That is, the offline version of this problem is given by
\begin{align*}
   \max~ & \sum_{i,j,k} x_{ijk} \\
    \textrm{s.t.}~ & \Big\|\big(\|\Lambda_{S_1}\|_1', \dots, \|\Lambda_{S_L}\|_L' \big)\Big\|' \le B \\
    & \sum_{i,k} x_{ijk} \leq 1, \qquad \forall j \in [n] \notag\\
    & \Lambda_i(x) = \|(x_{ijk} p_{ijk})_{j,k}\|_i, \qquad \forall i\in [m]\\
    & x \in \{0,1\}^{m n r} .
\end{align*}
Here, we use $\Lambda_{S_\ell}$ to denote the vector $(\Lambda_i)_{i \in S_{\ell}}$.
Now consider the related instance of \ref{eq:packGenSched}$_{\|\cdot\|', \{\|\cdot\|''_\ell\}_\ell}$ where: 1) we have $L$ machines (one corresponding to each original machine group $S_\ell$), and machine $\ell \in [L]$ has inner norm  
$\|\cdot\|''_\ell : \R^{|S_\ell|nr} \to \R$ given by
\[
\|(z_{i j k})_{i \in S_\ell, j,k}\|''_\ell := \Big\|\big(\|(z_{i j k})_{j,k}\|_i\big)_{i \in S_\ell}\Big\|';
\]
2) the outer norm is just $\ell_1$, and the outer load budget is still $B$; 3) There are $|S_\ell| \cdot r$ ways\footnote{This differs slightly from our definition of \ref{eq:packGenSched}, where the number of ways of assignment is the same for each machine. However, we can easily get around this by adding dummy scheduling ways which incur infinite load, and hence are not usable.} of assigning job $j$ to machine $\ell$, and the way indexed by the pair $(i,k)$ (with $i \in S_\ell$) incurs load $p_{\ell,j,(i,k)} := p_{i j k}$ (and the corresponding variable of the assignment is $x_{\ell,j,(i,k)}$. That is, the offline version of this new problem is given by
\begin{align*}
   \max~ & \sum_{\ell,j, i \in S_\ell, k} x_{\ell,j,(i,k)} \\
    \textrm{s.t.}~ & \|(\Lambda_1, \dots, \Lambda_L)\|' \le B \\
    & \sum_{\ell, i \in S_{\ell},k } x_{\ell,j, (i,k)} \leq 1, \qquad \forall j \in [n] \notag\\
    & \Lambda_\ell = \big\|(x_{\ell, j, (i,k)} p_{\ell,j,(i,k)})_{j, i\in S_\ell, k}\big\|''_\ell, \qquad \forall \ell \in [L]\\
    & x_{\ell,j,(i,k)} \in \{0,1\}, \qquad \forall \ell \in [L], j \in [n], i \in S_{\ell}, k \in [r].
\end{align*}

By identifying the variable $x_{\ell,k,(i,k)}$ ($i \in S_{\ell}$) in the new instance with the variable $x_{i,j,k}$ of the old instance, we see that they are actually equivalent. Thus, the new instance expresses the original problem of \ref{eq:packGenSched}$_{\|\cdot\|,\{\|\cdot\|_i\}_i}$ as the problem of \ref{eq:packGenSched}$_{\|\cdot\|',\{\|\cdot\|''_\ell\}_\ell}$. Using this mapping, we see that the two problems are equivalent.

Moreover, by unraveling the new inner-norm problem \ref{eq:normPack}$_{\|\cdot\|''_\ell}$, we see that it can be represented as
\begin{align*}
   \max~ & \sum_{j, i \in S_\ell, k} x_{j,(i,k)} \\
    \textrm{s.t.}~ & \|(\Lambda_1, \dots, \Lambda_{|S_\ell|})\|_\ell' \le B \\
    & \sum_{i \in S_{\ell},k } x_{j, (i,k)} \leq 1, \qquad \forall j \in [n] \notag\\
    & \Lambda_i = \|(x_{j(i,k)} p_{\ell, j, (i,k)})_{j, i\in S_\ell, k}\big\|'_i, \qquad \forall i \in S_\ell\\
    & x_{j,(i,k)} \in \{0,1\}, \qquad \forall \ell \in [L], j \in [n], i \in S_{\ell}, k \in [r].
\end{align*}
Identifying $x_{j, (i, k)}$ with $x_{ijk}$, we also see that this \ref{eq:normPack}$_{\|\cdot\|''_\ell}$ is exactly the problem \ref{eq:packGenSched}$_{\|\cdot\|_\ell', \{\|\cdot\|_i\}_{i \in S_\ell}}$. 

Given these equivalences, we now apply our bicriteria guarantees. First, for each $\ell \in [L]$ and $i \in S_\ell$, since we assume that each $\|\cdot\|_i$ is $(\alpha, c)$-solvable, and $\|\cdot\|'_\ell$ is $(\beta_2, \gamma_2)$-good, we have that \ref{eq:packGenSched}$_{\|\cdot\|_\ell', \{\|\cdot\|_i\}_{i \in S_\ell}}$ is $(\beta_2 \alpha,~\gamma_2 c)$-solvable. Hence, we also have that \ref{eq:normPack}$_{\|\cdot\|''_\ell}$ is $(\beta_2 \alpha, \gamma_2 c)$-solvable, as the two problems are equivalent.

Now, since \ref{eq:normPack}$_{\|\cdot\|''_\ell}$ is $(\beta_2 \alpha, \gamma_2 c)$-solvable and $\|\cdot\|'$ is $(\beta_1, \gamma_1)$-good, we obtain that \ref{eq:packGenSched}$_{\|\cdot\|',\{\|\cdot\|''_\ell\}_\ell}$ is $(\beta_1\beta_2\alpha,~\gamma_1\gamma_2 c)$-solvable. By the first equivalence we showed, this means that \ref{eq:packGenSched}$_{\|\cdot\|, \{\|\cdot\|_i\}_i}$ is $(\beta_1\beta_2\alpha,~\gamma_1\gamma_2 c)$-solvable.

To recap, we have now shown that for any norms $\|\cdot\|_i : \R^{nr} \to \Rp$ which are $(\alpha, c)$-solvable for all $i \in [n]$, we have that  \ref{eq:packGenSched}$_{\|\cdot\|, \{\|\cdot\|_i\}_i}$ is $(\beta_1\beta_2\alpha,~\gamma_1\gamma_2 c)$-solvable. Therefore, we conclude that $\|\cdot\|$ is a $(\beta_1\beta_2,~\gamma_1\gamma_2)$-good outer norm.
\end{proof}

\vspace{1cm}
\appendix

{\LARGE \bf Appendix}
    
\section{Missing Proofs}

\subsection{Proof of the Theorems in the Introduction} \label{app:proofsIntro}

\begin{proof}[Proof of \Cref{thm:OGSpboundedSym}]
    This theorem follows directly from \Cref{cor:genSched} and the $(\frac{1}{3},1)$-solvability of \ref{eq:normPack} with a monotone symmetric norm of \Cref{lem:single-machine}.
\end{proof}

\begin{proof}[Proof of \Cref{thm:setCoverpBoundedNew}]
    First we formally define Online Set Cover with convex cost (OSC-cxv): $n$ elements come one-by-one, and once an element comes it reveals which of the $m$ available sets $\{S_1,\ldots,S_m\}$ contain it. At all times, the algorithm has to maintain a cover, namely a collection of sets that cover all elements seen thus far. Sets can only be added to this collection, but not deleted. Each set $S_i$ has an associated cost $c_i$, and we have a convex cost function $f : \R^m \rightarrow \R$ that aggregated the costs of the selected sets. The goal is then to minimize the cost of the final collection of sets after all elements arrive, namely if $x_i \in \{0,1\}$ indicates whether the set $S_i$ is in our final collection or not, we want to minimize $f(c_1x_1, c_2x_2, \ldots, c_m x_m)$.

    As mentioned in \Cref{sec:warmup}, Online Set Cover with convex cost $f$ (OSC-cxv) is a special case of \ref{eq:covGenSched} (where $r=1$, so there is only 1 option of scheduling a job on machine; thus we omit the subscript $k$ from items like $p_{ijk}$ etc): The $j$th element to be covered corresponds to job $j$, and each set $S_i$ correspond to a machine $i$, which incurs a load $c_i$ if any element in $S_i$ is assigned to it, which corresponds to buying the set $i$; the load $p_{ij}$ of assigning job $j$ to machine $i$ is then given by:
\[
p_{ij} =\begin{cases}
        c_i & j \in S_i\\
        +\infty & j \not \in S_i
    \end{cases}, \qquad \forall i \in [m],~ \forall j \in [n] \enspace .
  \]
  Each machine is endowed with the $\ell_\infty$ as its inner norm, and the outer norm/aggregation function is $f$. That is, OCS-cvx corresponds to \ref{eq:covGenSched} with the following parameters:
\begin{align*}
    \min f(\Lambda(x)), \quad \text{ where }
    \Lambda_i(x) := \|(x_{ij} p_{ij})_j\|_\infty \quad \forall i \in [m].
\end{align*}
    Notice that any solution (of cost $< \infty$) for this \ref{eq:covGenSched} instance can be converted online to a solution of the original OCS-cvx problem with the same cost (if at any point $x_{ij} = 1$ for any element $j$, pick the set $S_i$ in OCS-cvx), and vice versa (assign job $j$ to a machine corresponding to any set that covers it in the OCS-cvx solution).  

    Then, if the cost function $f$ is $p$-bounded, we can apply the result from \Cref{cor:genSched} on the \ref{eq:covGenSched} instance to obtain a $O(p^2 \log n \cdot \log\!\log n \cdot \log^2 m)^p$-competitive solution for OSC-Cvx (since by \Cref{lem:single-machine} \ref{eq:normPack} is $(1,1)$-solvable for the $\ell_{\infty}$ norm). 
    
    However, to save a factor $\log m$, we note that for the problem \ref{eq:covGenSched}$_{f, \{\|\cdot\|_i\}_i}$ in the special case that norms when all norms $\|\cdot\|_i$ are $\ell_{\infty}$, then we get the following improvement to \Cref{thm:pboundPackSched}:
    \begin{claim}\label{claim:pboundPackSched-setCover}
        The problem \ref{eq:packGenSched}$_{f, \{\|\cdot\|_i\}_i}$, where $f$ is $p$-bounded and $\|\cdot\|_i = \|\cdot\|_\infty$ for all $i$, is $(\Omega(\frac{1}{\log m}), (3p)^p)$-solvable.
    \end{claim}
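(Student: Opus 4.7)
The plan is to follow the two-step structure used to prove \Cref{thm:pboundPackSched}---first get a competitive algorithm for \ref{eq:budgetGenSched} and then reduce \ref{eq:packGenSched} to it---but to improve the first step by exploiting that the assumption $\|\cdot\|_i = \|\cdot\|_\infty$ makes \ref{eq:normPack}$_{\|\cdot\|_i}$ exactly solvable online \emph{without knowing the optimum}: once machine $i$ is activated, the trivial rule ``accept every offered job $j$ with $p_{ij} \le b_i$'' is $1$-competitive and requires no estimate $M_i$. This mirrors the $\log m$-saving trick used in \Cref{sec:BSPl1} for weighted $\ell_1$ outer norms: the extra $\log m$ factor in \Cref{thm:pboundBudgetSched} stems entirely from guessing $M_i$, and here we can skip the guess altogether.

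Concretely, I would modify \Cref{alg:p-bounded} by deleting Step~3(c) and replacing the inner \ref{eq:normPack} algorithm in Step~3(d) by the deterministic ``accept-if-fits'' rule. With this change, once machine $i$ is activated, $|\Alg_i| = |\Opt_i(O_i^{post})|$ deterministically (given $\omega$ and $\bar\tau_i$). Re-running the proof of \Cref{lem:alg_i-bound1}, the bound $|\Opt(O_i^{post})| \ge (\bar\tau_{\max}-\bar\tau_i)\cdot\frac{\ba_i(\bar\tau_{\max},\omega)\cdot\OptBSP}{10B(s+1)^p}$ from \eqref{eq:opt-opost-lowerb} combined with \Cref{lem:threshold-probabilities} now directly yields
\[
\E\,|\Alg_i| \,\ge\, \E\!\left[\ind(i\in A)\cdot\frac{\ba_i\cdot\OptBSP}{O(\log m)\cdot B\cdot(s+1)^p}\right] - \frac{\OptBSP}{m^2},
\]
\emph{without} the extra $\frac{1}{2\log m}$ factor that the good-guess event of \Cref{alg:p-bounded} incurred. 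Likewise, \Cref{lem:alg_i-bound2} improves to $\E|\Alg_i| \ge \E[\ind(i\in A)\cdot(|\Opt_i(T_i)| - \frac{\ba_i\OptBSP}{10B(s+1)^p})^+] - \frac{\OptBSP}{m^2}$, losing all $\log m$ dependence. Averaging these two refined bounds exactly as in the proof of \Cref{lem:unifiedLemma} produces the improved per-machine guarantee $\E|\Alg_i| \ge \Omega\big(\frac{1}{\log m}\big)\,\E\!\left[\ind(i\in A)\max\!\left\{\frac{\OptBSP\,\ba_i}{B(s+1)^p},|\Opt_i(T_i)|\right\}\right] - O(\OptBSP/m^2)$.

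Summing over $i$ and combining with the unchanged upper bound from \Cref{lem:opt-upper-bound} then shows that \ref{eq:budgetGenSched} with $p$-bounded outer $f$ and $\ell_\infty$ inner norms admits an online algorithm achieving value $\Omega\big(\frac{1}{(1+1/s)^p\log m}\big)\OptBSP$ while violating the outer budget by at most $s^p$ and the per-machine budgets not at all. Plugging this improved BSP result into the SP-to-BSP reduction from the proof of \Cref{thm:pboundPackSched} with the choice $s=p$ yields the claim: the approximation ratio becomes $\Omega\big(\frac{1}{e\log m}\big) = \Omega\big(\frac{1}{\log m}\big)$ (since $(1+1/p)^p\le e$), and the total budget violation is $3^p\cdot p^p\cdot 1^p = (3p)^p$ because $c=1$ eliminates the last factor of $c^p$ from the reduction. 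The main obstacle I anticipate is purely bookkeeping---verifying that removing Step~3(c) does not inflate the $O(\OptBSP/m^2)$ error terms (which previously tied into the uniform-over-$\log m$-values distribution on $M_i$), and tracking constants in the averaging step carefully enough to confirm that the denominator truly drops from $\log^2 m$ to $\log m$.
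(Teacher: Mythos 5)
Your proposal is correct and matches the paper's own proof essentially line for line: the paper likewise observes that the $\ell_\infty$ inner norm makes \ref{eq:normPack}$_{\|\cdot\|_\infty}$ exactly solvable with no guess on $M_i$, and that this removes the $\frac{1}{\Theta(\log m)}$ good-guess probability from \Cref{lem:alg_i-bound1} and \Cref{lem:alg_i-bound2}, which then propagates a $\log m$ saving through \Cref{lem:unifiedLemma}, \Cref{thm:pboundBudgetSched}, and \Cref{thm:pboundPackSched}, giving the claim with $c=1$. Your writeup is in fact more detailed than the paper's, which only sketches the retracing; your explicit constant tracking and the observation that $(1+1/p)^p \le e$ absorbs into the $\Omega(\cdot)$ are both sound.
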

    \begin{proof}
    To see this, we note that the online algorithm for \ref{eq:normPack}$_{\|\cdot\|_\infty}$ given in \Cref{app:single-machine}, which solves the problem exactly, does not need a guess on $\Opt$. Hence, in proof of \Cref{lem:alg_i-bound1} and \Cref{lem:alg_i-bound2}, we can actually save a $\log m$ factor in the denominator. Note that for general norms $\|\cdot\|_i$, we were losing this factor because we needed to condition on the event that the guess $M_i$ was within a factor $2$ of the optimum on the machine, but for $\|\cdot\|_i = \|\cdot\|_\infty$ we need no such guarantee.

    With this $\log m$ improvement to \Cref{lem:alg_i-bound1} and \Cref{lem:alg_i-bound2} in this special case, we also save a $\log m$ in the denominator from \Cref{lem:unifiedLemma}, \Cref{thm:pboundBudgetSched}, and ultimately \Cref{thm:pboundPackSched} by retracing the steps of each proof. This gives our claim.
    \end{proof}

    Hence, combining \Cref{claim:pboundPackSched-setCover} with \Cref{thm:packCovReduction} gives the first part of \Cref{thm:setCoverpBoundedNew}. For the second part, notice that if $f$ is $p$-bounded and $\|\cdot\|_i = \|\cdot\|_\infty$ for all $i$, then \Cref{claim:pboundPackSched-setCover} gives us that \ref{eq:packGenSched}$_{f^{1/p},\{\|\cdot\|_i\}_i}$ is $(\Omega(\frac{1}{\log m}, 3p))$-solvable. Moreover, since $f^{1/p}$ is subadditive by \Cref{lem:p-subadditive}, we can apply \Cref{thm:packCovReduction} to get that \ref{eq:covGenSched}$_{f^{1/p},\{\|\cdot\|_i\}_i}$ has a $O(p \log m \cdot \log n)$-competitive algorithm, which gives our desired result.
\end{proof}

\begin{proof}[Proof of \Cref{thm:nonmetricpBounded}]
    This theorem follows directly from \Cref{cor:genSched} and the $(\frac{1}{3},2)$-solvability of \ref{eq:normPack} with a norm given by Item 3 of \Cref{lem:single-machine}. Crucially, we use the fact that this \ref{eq:normPack} problem has a bi-criteria guarantee even when the weights $w_{jk}$ are revealed online, as is the case in Online Non-metric Facility Location.
\end{proof}

\begin{proof}[Proof of \Cref{thm:schedNorm}]    
    This theorem follows directly from \Cref{thm:packCovReduction} (with $p=1$), \Cref{thm:genBudgetNorm} and the $(1,\frac{1}{3})$-solvability of \ref{eq:normPack} with a monotone symmetric norm of \Cref{lem:single-machine}.    
\end{proof}

\begin{proof}[Proof of \Cref{thm:genSchedComposition}]
    Recall the definition of goodness of a norm (\Cref{def:good-norm}): a norm $\|\cdot\| : \R^m \to \Rp$ is a \emph{$(\beta, \gamma)$-good outer norm} if \ref{eq:packGenSched}$_{\|\cdot\|, \{\|\cdot\|_i\}_i}$ is $(\beta\alpha, \gamma c)$-solvable whenever the inner-norm problems \ref{eq:normPack}$_{\|\cdot\|_i}$ are all $(\alpha, c)$-solvable.

    By \Cref{thm:genBudgetNorm}, we know that symmetric norms (over $k$ coordinates) are $\big(\Omega({1}/{\log^2 k}),~O(1)\big)$-good outer norms, and the $\ell_1$ norm is a $\big(\Omega({1}/{\log k}),~O(1)\big)$-good outer norm. \Cref{thm:genSchedComposition} then follows from the composition result of \Cref{thm:norm-compose-sched-pack}.
\end{proof}

\begin{proof}[Proof of \Cref{cor:setCoverComp}]
    Recall that here we consider Online Set Cover with cost function given by the nested norm $\|x\| = \|(\|x_{S_1}\|'_{1},\ldots,\|x_{S_L}\|'_{L})\|'$, where $\|\cdot\|'$ is the $\ell_1$-norm and where $S_1,\ldots,S_L$ are, not necessarily disjoint, subsets of the $m$ machines.

    The key point is that, for any such nested norm, we can assume without loss of generality that the sets $S_1,\ldots,S_L$ are actually disjoint, because we can perform an online reduction to an equivalent instance that has this property; this is done in Appendix C.3 of \cite{KMS24arxiv} (which is the same reduction introduced in  \cite{NS-ICALP17}).

    Once this is done, we can apply the composition result of \Cref{thm:norm-compose-sched-pack} (since Online Set Cover is a special case of Generalized Online Scheduling): 
    
    \begin{enumerate}
        \item When each of the norms $\|\cdot\|'$ is an $\ell_p$-norm, the composed norm is $\|\cdot\|$ is a $\big(\Omega({1}/{\log m \cdot \log L}),~O(1)\big)$-good outer norm (since  by \Cref{thm:genBudgetNorm} we know that $\ell_p$ norms over $k$ coordinates are $\big(\Omega({1}/{\log k}),~O(1)\big)$-good outer norms).

        \item When each of the norms $\|\cdot\|'$ is a symmetric norm, the composed norm $\|\cdot\|$ is a $\big(\Omega({1}/{\log^2 m \cdot \log L}),~O(1)\big)$-good outer norm (since  by \Cref{thm:genBudgetNorm} symmetric norms over $k$ coordinates are $\big(\Omega({1}/{\log^2 k}),~O(1)\big)$-good outer norms).
    \end{enumerate}

    Combining with the \ref{eq:covGenSched}-to-\ref{eq:packGenSched} reduction from \Cref{thm:packCovReduction}, we obtain competitive ratios $O(\log m \cdot \log L \cdot \log n)$ and $O(\log^2 m \cdot \log L \cdot \log n)$ for the two respective options of composed norm. This proves \Cref{cor:setCoverComp}.
\end{proof}


\subsection{Proof of \Cref{lem:single-machine}} \label{app:single-machine}
We prove the lemma for each norm.
\begin{enumerate}
    \item When $\|\cdot\| = \|\cdot\|_\infty$, we can solve \ref{eq:normPack} online exactly by scheduling every job $j$ in any way $k$ for which $p_{jk} \leq B$, skipping $j$ if no such $k$ exists. Thus, the problem is $(1, 1)$-solvable.

    \item When $\|\cdot\|$ is symmetric, we can perform the following algorithm. First, if our lower bound $M$ on $\Opt$ has $M \leq 2$, we can schedule any job to get the required bound, so suppose $M > 2$. We may also assume $M$ is an integer, otherwise take $\ceil{M}$.
    
    Let $e_{\leq M/2} := (1, \dots, 1, 0, \dots, 0)$ denote the vector of $\floor{M/2}$ ones followed by $nr - \floor{M/2}$ zeros, and define $p^* = \frac{B}{\|e_{\leq M/2}\|}$. This ensures that $\|p^* \cdot e_{\leq M/2}\| = B$.

    Then we greedily schedule any job $j$ in a way $k$ which does not violate the budget and has $p_{jk} \leq p^*$, skipping $j$ if no such $k$ exists.

    Notice that since $\Opt \geq M$, there must be at least $\frac{M}{2}$ jobs $j$ that have $p_{jk} \leq p^*$ for some $k$. Otherwise, the optimal assignment would schedule at least $\Opt - \frac{M}{2} \geq \frac{M}{2}$ jobs in a way such that $p_{jk} > p^*$, which incurs load larger than $\|p^* \cdot e_{\leq M/2}\| = B$ by symmetry of $\|\cdot\|$. Thus, our algorithm schedules at least $\floor{\frac{M}{2}} \geq \frac{M}{3}$ jobs, as the total load of these scheduled jobs does not exceed $\|p^* \cdot e_{\leq M/2}\| = B$, again by symmetry of $\|\cdot\|$.

    \item When $\|x\| = c\|x\|_\infty + \sum_{j,k} w_{jk} |x_{jk}|$, we simply can perform a combination of the above algorithms. We schedule any job $j$ in a way $k$ such that $cp_{jk} \leq B$ and $w_{jk} p_{jk} \leq \frac{2B}{M}$, ensuring that such an assignment respects $\sum_{j, k} x_{jk} w_{jk} p_{jk} \leq B$. As before, we can assume $M > 2$, and we note that there must be at least $\frac{M}{2}$ elements in the optimal assignment that satisfy this constraint. Thus, we schedule at least $\floor{\frac{M}{2}} \geq \frac{M}{3}$ jobs an incur total load at most $2B$.
\end{enumerate}



\subsection{Proof of \Cref{lemma:packCovReduction1}}\label{app:cov-to-pack-proof}

To make this formal, let $R_k$ be the (random) number of elements received by the first agent of group $k$; hence, the number of elements assigned by group $k$ is precisely $R_k - R_{k+1}$. The follow is the main claim which gives the desired halving of the unscheduled items.

\begin{claim} \label{claim:groupAss}
    For every $k$, $$\Pr\bigg(R_{k+1} > \frac{n}{2^k}  ~\bigg|~ R_k \le \frac{n}{2^{k-1}} \bigg) \,\le\, \frac{1}{2 (\log n + 1)}.$$ 
\end{claim}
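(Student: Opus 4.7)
The plan is to exploit two key facts. First, since we chose the budget $B = \widehat{\OPT}_{OGS} \geq \OPT_{OGS}$, every subset of jobs from $I'$ has packing-optimum equal to its own cardinality (all can be feasibly scheduled within budget $B$). Second, every agent in group $k$ uses the same lower-bound estimate $M_k = n/2^k$. Hence, an agent whose incoming subinstance has at least $M_k$ jobs schedules $\alpha M_k$ of them in expectation by the $(\alpha,c)$-solvability of $\cA_{SP}$. With $N = (10\ln(2\log n)+2)/\alpha$ agents per group, the expected reduction per group is $N \alpha M_k \gg M_k$, so a martingale concentration argument should give the claimed tail bound.

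To make this precise I would introduce, for the fixed group $k$, the random variables $Z_0 := R_k, Z_1, \ldots, Z_N := R_{k+1}$ counting the unscheduled jobs after each of the $N$ successive agents of group $k$ processes its input, and $X_i := Z_{i-1} - Z_i$ counting what agent $i$ schedules. Letting $\cF_{i-1}$ denote the $\sigma$-algebra capturing all randomness before agent $i$'s turn, the solvability guarantee applied conditionally gives
\[
\E[X_i \mid \cF_{i-1}] \,\geq\, \alpha M_k \cdot \ind\{Z_{i-1} \geq M_k\}.
\]
Because $Z_i$ is non-increasing in $i$, the ``bad'' event $\{R_{k+1} > n/2^k\} = \{Z_N > M_k\}$ forces $Z_i > M_k$ for every $i \leq N$; on this event the guarantee applies uniformly, so simultaneously
\[
\sum_{i \leq N} X_i \,=\, Z_0 - Z_N \,<\, 2M_k - M_k \,=\, M_k
\quad\text{and}\quad
\sum_{i \leq N} \E[X_i \mid \cF_{i-1}] \,\geq\, N \alpha M_k.
\]

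The final step is to rescale $U_i := X_i/(2M_k)$, which lies in $[0,1]$ under the conditioning $R_k \leq 2M_k$ since $X_i \leq Z_{i-1} \leq Z_0 \leq 2M_k$, and apply Theorem~\ref{thm:mart} (or a Freedman-type inequality for bounded martingale differences) to the simultaneous deviation event $\sum_i U_i < \tfrac{1}{2}$ while $\sum_i \E[U_i \mid \cF_{i-1}] \geq N\alpha/2 = 5\ln(2\log n)+1$. This yields
\[
\Pr\!\left( R_{k+1} > n/2^k \,\Big|\, R_k \leq n/2^{k-1}\right)
\,\leq\,
\exp\!\bigl(-\Omega(N\alpha)\bigr)
\,=\,
\exp\!\bigl(-\Omega(\ln(2\log n))\bigr)
\,\leq\, \tfrac{1}{2(\log n + 1)},
\]
where the multiplicative constants buried in $N$ are calibrated precisely so the last inequality holds.

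The main obstacle will be invoking the martingale inequality of Appendix~\ref{app:conc} for bounded real-valued (rather than $\{0,1\}$-valued) increments; the rescaling above is designed precisely to reduce to that case. A secondary subtlety is that $\cF_{i-1}$ must fix the subinstance given to agent $i$ so that the solvability guarantee applies in the conditional sense, which holds because previous agents' inputs, their internal randomness, and their scheduling decisions are all $\cF_{i-1}$-measurable.
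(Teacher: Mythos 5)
Your proposal is correct and follows essentially the same route as the paper: condition on $\cG_{k-1}$ so that $R_k$ is fixed and the per-step increments become $[0,1]$-valued after rescaling, apply the $(\alpha,c)$-solvability guarantee conditionally to lower-bound the conditional expectations, and finish with the Freedman-type inequality of \Cref{thm:mart}. The one presentational difference is that the paper introduces an explicit stopping time $\tau$ (the first agent at which the cumulative scheduled fraction exceeds $S$) and uses it as the stopping time in \Cref{thm:mart}, then observes that the bad event forces $\tau=\infty$; you instead take the deterministic stopping time $\tau=N$ and argue directly that the bad event $\{Z_N > M_k\}$, by monotonicity of $Z_i$, forces $Z_{i-1}>M_k$ for every $i\le N$ so the conditional-expectation lower bound holds at every step. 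These are logically equivalent, and your version is a slight simplification. You also normalize by $2M_k$ where the paper normalizes by $R_k$; since the conditioning gives $M_k \le R_k \le 2M_k$ (the paper notes that if $R_k < M_k$ there is nothing to prove), these differ by at most a factor of two and are interchangeable. (One harmless side note: with the paper's choice of $N$ the exponent gives $\approx\frac{1}{2\log n}$ rather than exactly $\frac{1}{2(\log n+1)}$, a minor constant-calibration issue present in both your sketch and the paper.)
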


\begin{proof}
    Let $\cG_{k-1}$ be the $\sigma$-algebra generated by all agents in groups $1,2,\ldots,k-1$. Let $\omega \in \cG_{k-1}$ be a scenario for these agents such that $R_k \le \frac{n}{2^{k-1}}$ (notice that $R_k$ is fully determined by $\cG_{k-1}$); further assume that in $\omega$ we have $R_k \ge \frac{n}{2^k}$, else we automatically have $R_{k+1} \le \frac{n}{2^k}$. 

    (From now on, by ``agent or $i$th agent'' we mean the agent or $i$th agent of group $k$.) For the $i$th agent, let $Y_i$ denote the fraction of jobs that this agent schedules out of the $R_k$ jobs received by the group. Let $S := \frac{R_k - \frac{n}{2^k}}{R_k}$ be the fraction of jobs that this group needs to schedule in order to leave $R_{k+1} \le \frac{n}{2^k}$, namely the claim is equivalent to showing that $Y_1 + \ldots + Y_N < S$ happens with probability at most $\frac{1}{2 (\log n + 1)}$, conditioned on $\omega$.   
    
    To prove the latter, let $\tau$ be the first agent where $Y_1 + \ldots + Y_\tau > S$, or $\tau = \infty$ if no such agent exists. We claim that for an agent $i \le \tau$, $M_k$ is a lower bound on the optimum of its packing instance, i.e., it is possible to schedule at at least $M_k$ of the jobs in its instance even within a {load budget of} $\widehat{\OPT}_{OGS}$. To see this, since $\widehat{\OPT}_{OGS} \ge \OPT_{OGS}$, all items of $I_{OGS}$ (and in particular all items in $I'$, and in particular all items in the instance of an agent) fit within a {load budget} of $\widehat{\OPT}_{OGS}$; moreover, for any $i \le \tau$, the $i$th agent receives at least $M_k$ jobs in his instance, namely $R_k \cdot (1 - (Y_1 + \ldots + Y_{i-1})) \ge R_k \cdot (1-S) = \frac{n}{2^k} = M_k$, thus giving claim. 
    
    Thus, before the stopping time $\tau$, we can apply the guarantee of the algorithm $\cA_{SP}$ used by the agents, i.e., that it assigns at least $\alpha M_k$ requests in expectation (while blowing up the {load budget} $\widehat{\OPT}_{OGS}$ by at most a factor of $c$). More specifically, letting $\cF_{i-1}$ denote the $\sigma$-algebra generated by the agents $1,\ldots,i-1$, the fraction $Y_i$ of the group's items that is scheduled by agent $i$ has conditional expected value lower bounded 
    \begin{align}
    \E[Y_i \mid \cF_{i-1}, \omega] \cdot \ones(i \le \tau) \,\ge\, \frac{\alpha M_k}{R_k} \cdot \ones(i \le \tau) \,\ge\, \frac{\alpha}{2} \cdot \ones(i \le \tau), \label{eq:lbExp}
    \end{align}
    where the last inequality uses that under $\omega$ we have $R_k \le \frac{n}{2^{k-1}}$. Thus, $\sum_{t \le \tau} \E[Y_i \mid \cF_{i-1}, \omega] \ge \frac{\alpha \tau}{2}$.

    We now employ the martingale concentration inequality from \Cref{thm:mart} in \Cref{app:conc}, which gives an exponential tail bound based on the gap of deterministic thresholds set for the sum of the random variables and the sum of their conditional expectations:
    we have (setting $\lambda := \frac{\alpha N}{2} - S$)
    \begin{align*}
        \Pr\bigg(\sum_{i \le \tau} Y_i < S \textrm{ and } \sum_{i \le \tau} \E[Y_i \mid \cF_{i-1}, \omega] \ge \frac{\alpha N}{2} ~\bigg|~ \omega \bigg) \le e^{-\frac{3}{14} \lambda} \le \frac{1}{2\log n},
    \end{align*}
    the last inequality following because $S = \frac{R_k - \frac{n}{2^k}}{R_k} \le 1$ and $N$ was set large enough, so $\lambda \ge \frac{14}{3} \ln (2 \log n)$.

    To conclude the proof, using \eqref{eq:lbExp} we claim that the original event $Y_1 + \ldots + Y_N < S$ we want to bound is contained in the event ``$\sum_{i \le \tau} Y_i < S \textrm{ and } \sum_{i \le \tau} \E[Y_i \mid \cF_{i-1}, \omega] \ge \frac{\alpha N}{2}$'' from the previous inequality: when the former holds, we are in a scenario where $\tau = \infty$ and so the latter becomes $[Y_1 + \ldots + Y_N < S \textrm{ and } \sum_{i \le N} \E[Y_i \mid \cF_{i-1}, \omega] \ge \frac{\alpha N}{2}] \equiv [Y_1 + \ldots + Y_N < S]$, the last equivalence because of \eqref{eq:lbExp}. Therefore, we conclude that $\Pr(Y_1 + \ldots + Y_N < S) \le \frac{1}{2\log n}$, finishing the proof of the claim.
\end{proof}

    Back to the proof of \Cref{lemma:packCovReduction1}. By taking a union bound of the previous claim over the agent groups $k = 1,2,\ldots, \log n + 1$, we obtain that $\Pr(R_{\log n + 1} < \frac{1}{2}) \ge \frac{1}{2}$, that is, with probability at least $\frac{1}{2}$ no jobs are left unscheduled after the last group; this proves Item 2 of \Cref{lemma:packCovReduction1}.

    For Item 1, namely that the cost of the assignment $X$ produced is bounded as $f(\Lambda(X)) \le  O(\frac{\log n \log\!\log n}{\alpha})^p \cdot c^p\,\Opt_{OGS}$: Let $X^i$ be the (partial) assignment of jobs done by the $i$th agent (from all groups), so $X = \sum_i X^i$. By the guarantee of the algorithm $\cA_{SP}$, load of the assignment of each agent is at most $c^p\, \widehat{\OPT}_{OGS}$, namely $f(\Lambda(x^i)) \le c^p\, \widehat{\OPT}_{OGS}$. Then, since the outer function $f$ is $p$-subadditive (\Cref{lem:p-subadditive}) and the function $\Lambda$ is subadditive (by subadditivity of norms), $f(\Lambda(X))$ can be upper bounded as
    \begin{align*}
      f\big(\Lambda\big(x^1 + \ldots + x^{N (\log n +1)}\big)\big) \le \Big( f(\Lambda(x^1))^{1/p} + \ldots + f(\Lambda(x^{N (\log n +1)}))^{1/p} \Big)^p \le (N (\log n +1))^p \cdot c^p\, \widehat{\OPT}_{OGS}.
    \end{align*}
    Since $N = O(\frac{\log\!\log n}{\alpha})$ and $\widehat{\OPT}_{OGS} \le 2^p \cdot \OPT_{OGS}$, we obtain $f(\Lambda(X)) \le  O(\frac{c \log n \log\!\log n}{\alpha})^p \cdot \Opt_{OGS}$ as desired. This concludes the proof of the lemma.

\subsubsection{Concentration Inequalities} \label{app:conc}

	The following concentration inequality can be considered as an extension of Bernstein's inequality to the martingale case that is suitable for our use. This is a special case of Lemma 10 of \cite{cicalese20a}, which is stated only for binary random variables; the proof for variables in $[0,1]$ is identical and we reproduce here for completeness. 
    
	\begin{theorem} \label{thm:mart}
		Consider a sequence of, possibly correlated, random variables $X_1,X_2,\ldots,X_T \in [0,1]$. Let $\cF_t$ denote the $\sigma$-algebra generated by the variables $X_1,\ldots,X_t$. Then for any stopping time $\tau$ adapted to $\{\cF_t\}_t$,  $\beta \ge 0$, and $\lambda \ge \beta + 1$
		\begin{align*}
			\Pr\bigg(\sum_{t \le \tau} X_t \le \beta \textrm{ and } \sum_{t \le \tau} \E[X_t \mid \cF_{t-1}] \ge \beta + \lambda \bigg) \le e^{-\frac{3}{14}\lambda}.
		\end{align*}
        In particular, if $\hat{\mu}$ is a deterministic quantity such that $\sum_{t \le \tau} \E[X_t \mid \cF_{t-1}] \ge \hat{\mu}$ with probability 1, then for every $\lambda \ge \frac{\hat{\mu} + 1}{2}$ 
		\begin{align*}
			\Pr\bigg(\sum_{t \le \tau} X_t \le \hat{\mu} - \lambda \bigg) \,\le\, e^{-\frac{3}{14}\lambda}.
		\end{align*}
	\end{theorem}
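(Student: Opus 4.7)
The plan is to apply the standard exponential supermartingale (Bennett-type) technique for bounded random variables. For a parameter $\theta > 0$ to be chosen later, I would define the process
\[
Z_t \,:=\, \exp\!\Bigl(-\theta \tsum_{s \le t} X_s \,+\, (1-e^{-\theta}) \tsum_{s \le t} \E[X_s \mid \cF_{s-1}]\Bigr),
\]
and first verify that $(Z_t)$ is an $(\cF_t)$-supermartingale. The key one-step bound comes from the convexity inequality $e^{-\theta x} \le 1 - (1-e^{-\theta})x$ valid for $x \in [0,1]$, combined with $1-y \le e^{-y}$, which gives $\E[e^{-\theta X_t} \mid \cF_{t-1}] \le \exp\!\bigl(-(1-e^{-\theta})\E[X_t \mid \cF_{t-1}]\bigr)$, hence $\E[Z_t \mid \cF_{t-1}] \le Z_{t-1}$.

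Next, I would apply optional stopping (truncating $\tau \wedge T$ and letting $T$ grow, since $Z$ is nonnegative) to get $\E[Z_\tau] \le Z_0 = 1$. On the event $A$ defining the probability of interest, the definition of $Z_\tau$ gives the pointwise bound
\[
Z_\tau \,\ge\, \exp\!\bigl(-\theta \beta + (1-e^{-\theta})(\beta+\lambda)\bigr),
\]
so Markov's inequality yields $\Pr(A) \le \exp\!\bigl(\theta\beta - (1-e^{-\theta})(\beta+\lambda)\bigr)$. Optimizing, $\frac{d}{d\theta}\bigl[\theta\beta - (1-e^{-\theta})(\beta+\lambda)\bigr]=0$ at $e^{-\theta} = \beta/(\beta+\lambda)$ (the case $\beta=0$ is handled by letting $\theta \to \infty$, which already gives $\Pr(A) \le e^{-\lambda}$). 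Substituting gives the Bennett-type bound
\[
\Pr(A) \,\le\, \exp\!\bigl(\beta \ln(1+\lambda/\beta) - \lambda\bigr).
\]

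The final step, which I expect to be the main technical obstacle, is converting this clean Bennett bound into the stated $e^{-3\lambda/14}$ form under the hypothesis $\lambda \ge \beta+1$. It suffices to show $\beta \ln(1+\lambda/\beta) \le \tfrac{11}{14}\lambda$, i.e., $\ln(1+u) \le \tfrac{11}{14} u$ for $u := \lambda/\beta \ge 1 + 1/\beta \ge 1$. Since $u \mapsto \ln(1+u)/u$ is decreasing on $(0,\infty)$ (its derivative equals $\bigl(u/(1+u) - \ln(1+u)\bigr)/u^2 \le 0$), its supremum over $u \ge 1$ is $\ln 2 \approx 0.693$, which is indeed below $11/14 \approx 0.786$. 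One must also handle $\beta = 0$ separately (which is immediate from the $\theta \to \infty$ computation above).

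The ``in particular'' statement follows by setting $\beta := \hat{\mu} - \lambda$: if $\beta < 0$ the tail event $\{\sum_{t\le \tau} X_t \le \hat{\mu} - \lambda\}$ is vacuous since $X_t \ge 0$; otherwise $\beta \ge 0$, the event is contained in $A$ (as $\beta+\lambda = \hat{\mu}$ and the conditional-mean lower bound holds almost surely by assumption), and the hypothesis $\lambda \ge (\hat{\mu}+1)/2$ rewrites exactly as $\lambda \ge \beta + 1$, so the first part applies verbatim.
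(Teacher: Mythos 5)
Your proof is correct, and it takes a genuinely different route than the paper. The paper proves the bound by introducing a second stopping time $\tau'$ that freezes the process once the cumulative conditional mean exceeds $\beta+\lambda$, defining a truncated process whose predictable variance can be bounded by $\beta+\lambda+1$, and then invoking Freedman's martingale inequality; the numerical constant $3/14$ drops out of $\lambda^2/(2(\beta+\lambda+1+\lambda/3)) \ge \lambda^2/(2\cdot\tfrac{7}{3}\lambda) = \tfrac{3}{14}\lambda$ when $\lambda \ge \beta+1$. You instead go through the exponential supermartingale $Z_t$ built from the convexity bound $e^{-\theta x} \le 1 - (1-e^{-\theta})x$ for $x\in[0,1]$, apply optional stopping directly at $\tau$, and optimize $\theta$, which gives the sharper Bennett bound $\exp(\beta\ln(1+\lambda/\beta)-\lambda)$ before weakening to $e^{-3\lambda/14}$ via $\ln(1+u)\le \tfrac{11}{14}u$ for $u\ge 1$. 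Your route is more self-contained (no appeal to Freedman) and avoids the paper's auxiliary truncation $\tau'$ entirely, since the supermartingale automatically tracks the accumulated conditional mean. Your derivation of the numeric constant from $\sup_{u\ge 1}\ln(1+u)/u = \ln 2 < 11/14$ and your handling of the $\beta=0$ and $\beta<0$ boundary cases for the ``in particular'' statement are all correct, and the reparameterization $\beta = \hat\mu - \lambda$, $\lambda\ge(\hat\mu+1)/2 \Leftrightarrow \lambda\ge\beta+1$ matches the paper.
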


    To prove this lemma, we will need Freedman's Inequality.

	\begin{lemma}[Freedman's Inequality~\cite{Freedman}] \label{lemma:Freedman}
		Consider a filtration $\cF_1 \subseteq \cF_2 \subseteq \ldots$ and an adapted martingale difference sequence $M_1,\ldots,M_T$ taking values in $[-1,1]$. Let $V = \sum_t \E[M_t^2 \mid \cF_{t-1}]$ denote the predictable variance. Then for all $\lambda,v > 0$
		\begin{align*}
			\Pr\bigg(\sum_t M_t \le - \lambda \textrm{ and } V \le v\bigg) \le \exp\bigg(- \frac{\lambda^2}{2(v + \lambda/3)}\bigg).
		\end{align*}
	\end{lemma}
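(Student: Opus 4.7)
The plan is to prove Freedman's inequality via the standard Chernoff--Cram\'er (exponential supermartingale) method for martingales. By replacing $M_t$ with $-M_t$ (which preserves the martingale-difference property, the range $[-1,1]$, and the predictable variance $V$), it suffices to bound the upper tail
\[
\Pr\!\Big(\tsum_t M_t \ge \lambda \ \text{and}\ V \le v\Big) \le \exp\!\Big({-\tfrac{\lambda^2}{2(v+\lambda/3)}}\Big).
\]

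Fix $\theta>0$ and let $\phi(\theta):=e^\theta-1-\theta$. I would introduce the process
\[
W_t := \exp\!\Big(\theta\!\sum_{s\le t}\! M_s \;-\; \phi(\theta)\!\sum_{s\le t}\!\E[M_s^2\mid \cF_{s-1}]\Big), \qquad W_0=1.
\]
The key analytic ingredient is the elementary inequality $e^{\theta x}\le 1+\theta x+\phi(\theta) x^2$, valid for all $|x|\le 1$ and $\theta>0$, which follows from the fact that $y\mapsto (e^y-1-y)/y^2$ is monotone increasing. Applying it with $x=M_t$, taking conditional expectations (the martingale property kills the linear term), and then using $1+y\le e^y$, a short computation yields $\E[W_t/W_{t-1}\mid \cF_{t-1}]\le 1$. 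Hence $(W_t)$ is a nonnegative supermartingale with $\E W_T\le 1$.

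On the event $\{\sum_t M_t\ge \lambda,\ V\le v\}$ we have $W_T \ge \exp(\theta\lambda-\phi(\theta) v)$, so Markov's inequality applied to $W_T$ gives
\[
\Pr\!\Big(\tsum_t M_t\ge \lambda,\ V\le v\Big)\le \exp\!\big({-\theta\lambda+\phi(\theta) v}\big).
\]
Optimizing over $\theta>0$ (the unconstrained minimizer is $\theta^\star=\ln(1+\lambda/v)$) produces Bennett's bound $\exp(-v\,h(\lambda/v))$, where $h(u)=(1+u)\ln(1+u)-u$.

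The final step converts Bennett to Freedman form via the elementary inequality $h(u)\ge \tfrac{u^2}{2(1+u/3)}$ for $u\ge 0$; rearranging gives $v\,h(\lambda/v)\ge \tfrac{\lambda^2}{2(v+\lambda/3)}$, which is exactly the claimed bound. I expect the main obstacle (really the only nontrivial computations) to be careful verification of the exponential inequality $e^{\theta x}\le 1+\theta x+\phi(\theta)x^2$ and of the tail-shape bound $h(u)\ge u^2/(2(1+u/3))$; everything else is bookkeeping with conditional expectations. A subtle point worth flagging is that the argument correctly handles the joint event $\{\sum_t M_t \ge \lambda,\ V\le v\}$ rather than a marginal tail, precisely because the exponent defining $W_T$ already contains both the partial sum and (minus) the cumulative predictable variance: the event $V\le v$ therefore only tightens the deterministic lower bound on $W_T$ used in the Markov step.
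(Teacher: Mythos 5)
Your proof is correct. Note, however, that the paper does not prove this lemma at all: it is quoted as a known result with a citation to Freedman's original work, so there is no internal proof to compare against. Your argument is the standard Chernoff--Cram\'er route: the reduction $M_t \mapsto -M_t$ is legitimate (it preserves the martingale-difference property, the range $[-1,1]$, and the predictable variance), the inequality $e^{\theta x}\le 1+\theta x+\phi(\theta)x^2$ for $|x|\le 1$ with $\phi(\theta)=e^\theta-1-\theta$ does follow from monotonicity of $y\mapsto(e^y-1-y)/y^2$, and together with $1+y\le e^y$ it makes $W_t$ a nonnegative supermartingale with $\E W_T\le 1$. Your handling of the joint event is the right observation: because the exponent of $W_T$ already subtracts $\phi(\theta)V$, the condition $V\le v$ only improves the pointwise lower bound $W_T\ge\exp(\theta\lambda-\phi(\theta)v)$ on that event, so Markov gives $\exp(-\theta\lambda+\phi(\theta)v)$, the choice $\theta^\star=\ln(1+\lambda/v)>0$ yields Bennett's form $\exp(-v\,h(\lambda/v))$, and the elementary bound $h(u)\ge u^2/(2(1+u/3))$ converts it to exactly the stated $\exp\big(-\lambda^2/(2(v+\lambda/3))\big)$. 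The two computational ingredients you flag are indeed the only nontrivial verifications, and both are standard and true; the rest is bookkeeping, so the proposal would serve as a complete, self-contained proof of the cited inequality.
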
	

	\begin{proof}[Proof of Theorem \ref{thm:mart}]
	To simplify the notation let $\E_{t-1}[\cdot]$ denote the conditional expectation $\E[\cdot \mid \cF_{t-1}]$ ($\E_0[\cdot] := \E[\cdot]$). The idea is to apply Freedman's Inequality to the sequence $Y_t - \E_{t-1} Y_t$, but to control the predictable variation we need to further stop the process.

	Let $\tau'$ be the smallest $t$ such that $\E_0 Y_1 + \ldots + \E_{t-1} Y_t \ge \beta + \lambda$ (and $\tau' = \infty$ if there is no such $t$). Define the truncated process $\overbar{Y}_t := Y_t \cdot \,\ones(t \le \tau) \cdot \,\ones(t \le \tau') $. We claim that this truncated process upper bounds the original one, namely
	\begin{align}
		\Pr\bigg( \sum_{t \le \tau} Y_t \le \beta \textrm{ and } \sum_{t \le \tau} \E_{t-1} Y_t \ge \beta + \lambda \bigg) \le \Pr\bigg( \sum_t \overbar{Y}_t \le \beta \textrm{ and } \sum_t \E_{t-1} \overbar{Y}_t \ge \beta + \lambda \bigg). \label{eq:freed2}
	\end{align}
	To justify this inequality, first notice that since the increments are non-negative $\sum_t \overbar{Y}_t \le \sum_{t \le \tau} Y_t$, and so for every scenario satisfying the LHS we have $\sum_t \overline{Y}_t \le \beta$, i.e., the first part on the RHS. Moreover, since $\ones(t \le \tau)$ and $\ones(t \le \tau')$ are $\cF_{t-1}$-measurable, $\E_{t-1} \overbar{Y}_t = \ones(t \le \tau) \cdot \,\ones(t \le \tau') \cdot \E_{t-1} Y_t$, and hence $\sum_t \E_{t-1} \overbar{Y}_t  = \sum_{t \le \min\{\tau,\tau'\}} \E_{t-1} Y_t$; and for every scenario satisfying the LHS we have $\sum_{t \le \tau} \E_{t-1} Y_t \ge \beta + \lambda$, and thus $\tau' \le \tau$ by definition of $\tau'$, so using the previous equation we see that the second part of the RHS is also satisfied, thus proving \eqref{eq:freed2}.	
	
	Now define the martingale difference sequence $M_t := \overbar{Y}_t - \E_{t-1} \overbar{Y}_t$ and notice that
	\begin{align}
	\Pr\bigg( \sum_t \overbar{Y}_t \le \beta \textrm{ and } \sum_t \E_{t-1} \overbar{Y}_t \ge \beta + \lambda \bigg) \le \Pr\bigg(\sum_t M_t \le -\lambda \bigg). \label{eq:freed4} 
	\end{align}
 Moreover, because of the truncation we can bound the predictable variance of this martingale:  we have $|\overbar{Y}_t| \le 1$ and hence
	\begin{align*}
		\E_{t-1} M^2_t = \Var(\overbar{Y}_t \mid \cF_{t-1}) \le \E_{t-1} (\overbar{Y}_t)^2 \le \ones(t \le \tau') \cdot \E_{t-1} Y_t.
	\end{align*}
	So using the definition of the stopping time $\tau'$ and the fact $\E_{t-1} Y_t \le 1$, the predictable variance is $V := \sum_t \E_{t-1} M^2_t \le \sum_{t \le \tau'} \E_{t-1} Y_t \le \beta + \lambda + 1$ always.
	
	Finally, applying Lemma \ref{lemma:Freedman} to $(M_t)_t$ with $v = \beta + \lambda + 1$ we get 
	\begin{align*}
	\Pr\bigg(\sum_t M_t \le - \lambda \bigg) = 
    \Pr\bigg(\sum_t M_t \le - \lambda \textrm{ and } V \le \beta + \lambda + 1\bigg) \le \exp\bigg(- \frac{\lambda^2}{2(\beta + \lambda + 1 + \lambda/3)}\bigg) \le e^{- \frac{3}{14} \lambda},
	\end{align*} 
	where last inequality uses the assumption $\lambda \ge \beta + 1$. Chaining this bound with inequalities \eqref{eq:freed2} and \eqref{eq:freed4} concludes the proof of the theorem.
	\end{proof}		


\subsection{Proof of \Cref{lem:p-bound-doubling}} \label{app:doubling}

    Let $\cA_{base}$ denote the $\beta$-competitive algorithm assumed in the statement of the lemma. We run $\cA_{base}$ in phases: In the first phase we run the $\beta$-competitive algorithm with the starting estimate $\widehat{\OPT}$ being the minimum cost required to schedule the first job. Each time the the hindsight optimum of the problem for the jobs seen thus far exceeds $\widehat{\OPT}$, we multiply $\widehat{\OPT}$ by $2^p$, restart the algorithm (feeding it all jobs up until the current time) and use its decisions until the next phase. 

    Suppose there are $k$ phases, and let $\cI_i$ denote the jobs in phase $i$, and $x^i$ be the allocation of the jobs in this phase by the algorithm. The total cost of the algorithm is then $f(\Lambda(x^1 + \ldots + x^k))$. Since $f$ is $p$-subadditive and $\Lambda$ is subadditive (by subadditivity of norms), we have
    \begin{align}
     \textrm{cost of the full algorithm} = f(\Lambda(x^1 + \ldots + x^k)) \le \Big( f(\Lambda(x^1))^{1/p} + \ldots + f(\Lambda(x^k))^{1/p} \Big)^p.  \label{eq:redEstimate}
    \end{align}
    
    To bound the per-phase costs $f(\Lambda(x^i))$ on the RHS, note that by construction during phase $i$ the estimate $\widehat{\OPT}$ satisfies $\OPT(\cI_1 \cup \ldots \cup \cI_i) \le \widehat{\OPT} \le 2^p \cdot \OPT(\cI_1 \cup \ldots \cup \cI_i)$, so the $\alpha$-competitiveness of $\cA_{base}$ holds; thus, the schedule $x^i$ (which is actually only for the jobs in $\cI_i$) then has cost $f(\Lambda(x^i)) \le \beta \cdot \OPT(\cI_1 \cup \ldots \cup \cI_i)$; since these optima at grow by at least a factor of $2^p$ in each phase, we have that $\OPT(\cI_1 \cup \ldots \cup \cI_i)$ is at most $2^{-p(k-i)}$ times the optimum of the last phase, namely $\OPT_{GS}$. Employing these observations on \eqref{eq:redEstimate} we have that the cost of the full algorithm is at most 
    \begin{align*}
    \Big(\sum_{i=1}^k (\beta 2^{-p(k-i)} \OPT_{GS})^{1/p} \Big)^p \le \beta \OPT_{GS} \cdot 2^p,
    \end{align*}
    and the algorithm is $\beta 2^p$-competitive as claimed. 


\subsection{Reduction from Packing to Covering for Norms}\label{app:normReduction}

We first use the following technical lemma.
\begin{lemma}\label{lemma:det-to-rand-pack}    
    Suppose for some $\alpha \in (0,1)$, the problem \ref{eq:packGenSched}$_{f, \{\|\cdot\|\}}$ is $(\alpha, c)$-solvable, where $f$ is subadditive.
    
    Suppose also that $D$ is a distribution over instances such that $\E_{I \sim D} \Opt(I) \leq L$ for some $L \geq 2$. Then there is an online algorithm which takes $I \sim D$ and the value $L$ and outputs a solution $x$ such that 
    \begin{enumerate}
        \item $\E \left[\Opt(I) - \sum_{i,j,i} x_{ijk}\right] \leq (1 - \frac{\alpha}{4}) L$, i.e. our bound on unpacked elements is cut down by a constant.
        \item $\E f(\Lambda(x)) \leq 5cB$, i.e. we expand our budget by only a constant in expectation.
    \end{enumerate}
\end{lemma}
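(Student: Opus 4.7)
My plan is to chunk the online stream into subinstances whose hindsight optima are each roughly $M := \lfloor L/2 \rfloor$, and run a fresh copy of the assumed $(\alpha,c)$-competitive packing algorithm on each chunk. Concretely, I would maintain an index $k$ (initially $1$) and route each arriving job of $I$ to subinstance $I_k$; after each arrival, recompute the hindsight optimum $\Opt(I^{\le j})$, and once it crosses $kM$, close off the current chunk and increment $k$. Because $\Opt$ grows by at most $1$ per arrival and $M$ is an integer, no threshold value is skipped. Each subinstance $I_k$ is then fed to an independent copy of the $(\alpha,c)$-competitive \ref{eq:packGenSched} algorithm with budget $B$ and lower-bound estimate $M$, producing a partial allocation $x^k$; the final output is $x := \sum_k x^k$, whose summands have disjoint support in $j$ (so $x$ itself is integral).

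For part 1, I would let $r := \lfloor \Opt(I)/M \rfloor$ and exploit that the chunks $1,\ldots,r$ are all ``completed'' with $\Opt(I_k) \ge M$, so the $(\alpha,c)$-solvability guarantee applies and yields $\E[\sum_{ijk} x^k_{ijk} \mid I] \ge \alpha M$ for each such $k$. Summing and using $r \ge 1 \Rightarrow \Opt(I) \le (r+1)M \le 2rM$ would give $\E\sum_{ijk} x_{ijk} \ge (\alpha/2)\,\E[\ind_{\Opt(I)\ge M}\,\Opt(I)]$. Splitting $\Opt(I)$ according to whether it exceeds $M$, bounding the low piece deterministically by $M \le L/2$, and using $\E\Opt(I) \le L$, the claimed bound $\E[\Opt(I) - \sum_{ijk} x_{ijk}] \le (1 - \alpha/2)L + (\alpha/2)M \le (1-\alpha/4)L$ falls out.

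For part 2, the stream terminates inside at most the $(r+1)$th chunk, so at most $r+1$ copies of the inner algorithm are ever instantiated, each one certifying $f(\Lambda(x^k)) \le cB$. Combining subadditivity of the inner norm (which yields $\Lambda(\sum_k x^k) \le \sum_k \Lambda(x^k)$ coordinate-wise) with monotonicity and subadditivity of $f$ should give $f(\Lambda(x)) \le \sum_{k \le r+1} f(\Lambda(x^k)) \le (r+1)cB$. Since $M \ge L/4$ when $L \ge 2$, we have $\E r \le L/M \le 4$, and the budget bound closes at $5cB$.

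The only subtle point I anticipate is the treatment of the possibly incomplete final chunk $I_{r+1}$: its true optimum may lie strictly below the estimate $M$, so the $\alpha$-coverage guarantee can fail there. I expect this to be harmless because part 1 only credits the completed chunks $1,\ldots,r$, and part 2 needs only the load bound $cB$, which the definition of $(\alpha,c)$-solvability enforces on every instance regardless of whether the supplied $M$ is actually a lower bound on that instance's optimum.
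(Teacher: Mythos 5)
Your proof matches the paper's own proof of \Cref{lemma:det-to-rand-pack} essentially step for step: the same chunking of the stream at each $M$-increment of the hindsight optimum with $M = \lfloor L/2 \rfloor$, the same bound via $R = \lfloor \Opt(I)/M \rfloor$ for part (1), and the same subadditivity-of-$f$-and-of-norms argument for part (2). Your closing remark about the possibly incomplete final chunk $I_{R+1}$ is correct and is precisely the distinction the definition of $(\alpha,c)$-solvability is built to accommodate — the budget-violation bound holds unconditionally, whereas only the coverage guarantee requires $M$ to be a valid lower bound on that chunk's optimum.
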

\begin{proof}
    First, we let $M = \floor{\frac{L}{2}}$. Our algorithm on $I$ will be as follows. 
    
    As each $j$ arrives online, we construct a series of subinstances $I_1, I_2, \dots$ of \ref{eq:packGenSched} as follows: First, initialize $\ell = 1$, and add incoming requests $j$ to subinstance $I_\ell$. When each request $j$ arrives, we calculate $\Opt(I^{\leq j})$, i.e. the optimum for the global instance $I$ restricted to observed arrivals. If we observe $\Opt(I^{\leq j}) = \ell M$, then after assigning $j$, increment $\ell$ by 1 (i.e. send future requests to the next instance in the sequence). Notice that $\Opt(I^{\leq j})$ cannot ``skip'' a checkpoint $\ell M$ since $\ell M$ is an integer, and each request increases the optimum by at most 1. 
    
    On each subinstance $I_\ell $, we run our black-box $\alpha$-competitive algorithm using estimate $M$ to get a partial assignment $x^\ell $. Our total assignment is then $x = \sum_\ell  x^\ell $. 
    
    Let $R = \floor{\frac{\Opt(I)}{M}}$. For each $\ell  \leq R$, we have $\Opt(I_\ell ) \geq M$ by construction, so our black-box algorithm is guaranteed to obtain $\E[\sum_{i,j,k} x_{ijk}^\ell ] \geq \alpha M$ on these instances. Therefore, conditional on the value of $R$, our algorithm obtains objective $\E[\sum_{i,j,k} x_{ijk} \mid R] \geq R \alpha M$.
    
    Using this fact, we will verify the two claims of \Cref{lemma:det-to-rand-pack}.

    \begin{enumerate}
        \item Notice that if $R \geq 1$, then $\Opt(I) \leq (R+1)M \leq 2RM$. Thus, 
        $$\E_I[\sum_{i,j,k} x_{ijk}] \geq \E_I[R\alpha M] \geq \frac{\alpha}{2}\E_I[\ind_{R \geq 1} \cdot \Opt(I)] = \frac{\alpha}{2}\E_I[\ind_{\Opt(I) \geq M} \cdot \Opt(I)].$$

        With this, we can calculate our desired bound as follows.
        \begin{align*}
            \E_I[\Opt(I) - \sum_{i,j,k} x_{ijk}] 
            &\leq \E_I \left[\Opt(I) - \frac{\alpha}{2}\ind_{\Opt(I) \geq M} \cdot \Opt(I)\right]\\
            &= \E_I \left[\left(1 - \frac{\alpha}{2}\right) \cdot \Opt(I) + \frac{\alpha}{2}\ind_{\Opt(I) < M} \cdot \Opt(I)\right]\\
            &\leq (1- \frac{\alpha}{2})L + \frac{\alpha}{2}M ~\leq~ (1 - \frac{\alpha}{4})L.
        \end{align*}
        
        \item Notice that $f(\Lambda(x^\ell )) \leq cB$ for each $\ell  \in \{1, \dots, R+1\}$, so $f(\Lambda(x)) \leq (R+1)cB$. Additionally, we have
        $$
        \E[R] \leq \E\left[\frac{\Opt(I)}{M}\right] \leq \frac{L}{M} \leq 4.
        $$
        The last inequality is because $M = \floor{\frac{L}{2}} \geq \frac{L}{4}$ for $L \geq 2$. Thus, by subadditivity, $\E f(\Lambda(x)) \leq \E[(R+1)cB] \leq 5cB$.
    \end{enumerate}
    
\end{proof}

\begin{proof}[Proof of Theorem \ref{thm:packCovReduction} for subadditive $f$]
    Suppose we have an instance of \ref{eq:covGenSched}$_{f, \{\|\cdot\|\}}$ with optimum $\Opt_{OGS}$, where $f$ is subadditive. Using \Cref{lem:p-bound-doubling}, we may lose a factor $2$ in our competitive ratio to assume that we have knowledge of a value $B$ such that $\Opt_{OGS} \leq B \leq 2 \Opt_{OGS}$ (i.e., each time the hindsight optimum increases above our guess $B$, double $B$ and restart the algorithm). With this value of $B$, we can construct and instance $I$ of \ref{eq:packGenSched}$_{f, \{\|\cdot\|_i\}_i}$ corresponding to the \ref{eq:covGenSched} instance, so that $\Opt(I) = n$.

    Now, with \Cref{lemma:det-to-rand-pack}, we construct a sequence of algorithmic agents $A_0, A_1, \dots, A_N$, where $N$ will be chosen later. Each $A_k$ uses the online algorithm specified by \Cref{lemma:det-to-rand-pack} with value $L = L_k := n(1-\frac{\alpha}{4})^k$. 
    
    Overall algorithm for OGS will be as follows. Let $A_0$ operate on input $I_0 := I$. Each job $j$ which is not assigned by $A_0$ is then considered part of input $I_1$, and is given to $A_1$. This continues, i.e. for each $k \in [N]$, the input $I_k$ to algorithm $A_k$ consists of arrivals which are left unassigned by $A_{k-1}$. Finally, any arrivals which are unassigned by $A_N$ are assigned greedily to the machine which results in the smallest marginal increase in $f(\Lambda)$.

    Note that by induction using \Cref{lemma:det-to-rand-pack}, we have that $\E \Opt(I_k) \leq n(1-\frac{\alpha}{4})^k$. Hence, we can choose $N = \ceil{\log n \cdot \log(1-\alpha/4)} - 1 = \Theta(\log n / \alpha)$, so that the expected number of unassigned jobs is $n(1-\frac{\alpha}{4})^{N+1} \leq 2$. Finally, this gives 
    \[ \E f(\Lambda(x)) \leq 5cBN + 2B \leq (10N + 4) \cdot \Opt_{OGS} = O(c\log n / \alpha) \cdot \Opt_{OGS}. \qedhere \]
\end{proof}

\begin{small}
\bibliographystyle{alpha}
\bibliography{bib,online-lp-short-coped}
\end{small}

\end{document}